\theoremstyle{plain}
\newtheorem{theorem}{Theorem}[section]
\newtheorem{proposition}[theorem]{Proposition}
\newtheorem{lemma}[theorem]{Lemma}
\newtheorem{corollary}[theorem]{Corollary}
\theoremstyle{definition}
\newtheorem{definition}[theorem]{Definition}
\newtheorem{example}[theorem]{Example}
\newtheorem{remark}[theorem]{Remark}
\numberwithin{equation}{section}
\def\C{{\mathbb{C}}}
\def\T{{\mathbb{T}}}
\def\R{{\mathbb{R}}}
\def\N{{\mathbb{N}}}
\def\S{{\mathbb{S}}}
\def\X{{\mathbb{X}}}
\def\Z{{\mathbb{Z}}}
\def\cB{{\mathcal{B}}}
\def\cD{{\mathcal{D}}}
\def\cE{{\mathcal{E}}}
\def\cH{{\mathcal{H}}}
\def\cO{{\mathcal{O}}}
\def\cU{{\mathcal{U}}}
\def\<{{\langle}}
\def\>{{\rangle}}
\def\Tr{\mathop{\mathrm{Tr}}}
\def\Arg{\mathop{\mathrm{Arg}}}
\def\dis{\mathop{\mathrm{dis}}\nolimits}
\def\sgn{\mathop{\mathrm{sgn}}\nolimits}
\def\Im{\mathop{\mathrm{Im}}}
\def\Re{\mathop{\mathrm{Re}}}
\def\b0{{\mathbf{0}}}
\def\ba{{\mathbf{a}}}
\def\bb{{\mathbf{b}}}
\def\bB{{\mathbf{B}}}
\def\be{{\mathbf{e}}}
\def\bG{{\mathbf{G}}}
\def\bk{{\mathbf{k}}}
\def\bx{{\mathbf{x}}}
\def\by{{\mathbf{y}}}
\def\bz{{\mathbf{z}}}
\def\bu{{\mathbf{u}}}
\def\bv{{\mathbf{v}}}
\def\bw{{\mathbf{w}}}
\def\bs{{\mathbf{s}}}
\def\bS{{\mathbf{S}}}
\def\tn{{\tilde{n}}}
\def\hm{{\hat{m}}}
\def\hX{{\hat{X}}}
\def\hY{{\hat{Y}}}
\def\hbx{{\hat{\mathbf{x}}}}
\def\hby{{\hat{\mathbf{y}}}}
\def\hphi{{\hat{\phi}}}
\def\hPhi{{\hat{\Phi}}}
\def\talpha{{\tilde{\alpha}}}
\def\opsi{{\overline{\psi}}}
\def\spin{{\{\uparrow,\downarrow\}}}
\def\ua{{\uparrow}}
\def\da{{\downarrow}}
\def\o{{\omega}}
\def\hxi{\hat{\xi}}
\def\hXi{\hat{\Xi}}
\def\eps{{\varepsilon}}
\def\g{{\gamma}}
\def\G{{\Gamma}}
\def\s{{\sigma}}
\def\D{{\Delta}}
\begin{document}

\title{Exponential Decay of Correlation Functions in Many-Electron Systems}
\author{Yohei Kashima \medskip \\
Institut f\"ur Theoretische Physik, Universit\"at Heidelberg\\
Philosophenweg 19, 69120 Heidelberg, Germany\\ 
y.kashima@thphys.uni-heidelberg.de}
\date{}

\maketitle

\begin{abstract}
\noindent
For a class of tight-binding many-electron models on hyper-cubic lattices the equal-time correlation functions at non-zero temperature are proved to decay exponentially in the distance between the center of positions of the electrons and the center of positions of the holes. The decay bounds hold in any space dimension in the thermodynamic limit if the interaction is sufficiently small depending on temperature. The proof is based on the $U(1)$-invariance property and volume-independent perturbative bounds of the finite dimensional Grassmann integrals formulating the correlation functions. 
\end{abstract}

\section{Introduction}
A Hamiltonian governing the total energy of many electrons hopping and interacting on a finite lattice is defined as a self-adjoint operator on the finite dimensional Hilbert space of all states of electrons. Correlation functions in the system at non-zero temperature are formulated as a quotient of trace operations over the Hilbert space. Despite the explicitness of their mathematical definitions, to rigorously analyze the behavior of the correlation functions still requires restrictive assumptions and remains to be solved in a general setting.

The method using Bogoliubov's inequality initiated by Hohenberg \cite{H}, Mermin and Wagner \cite{MW} has been applied to prove decay of various order parameters in the one- and two-dimensional Hubbard models. It has been shown that the magnetic order parameters (\cite{WR},\cite{G},\cite{U}) and the electron-pairing order parameters (\cite{SSZ},\cite{SS}) vanish in the thermodynamic limit as the amplitude of the corresponding external field goes to zero. In these theories the application of Bogoliubov's inequality demonstrates that the thermodynamic limit of the order parameter under consideration is bounded from above by a constant times the inverse of an integral of the form
$$
\prod_{j=1}^d\int_{-\pi}^{\pi}dk_j\frac{1}{\sum_{j=1}^dk_j^2+\lambda}
$$
with the space dimension $d$ and the amplitude $\lambda>0$ of the external field. If this integral diverges as the amplitude $\lambda$ is sent to zero, which is true for $d=1,2$, not true for $d\ge 3$, one concludes that the order parameter accordingly converges to zero. In order to reach this conclusion, thus, these theories require the space dimension to be less than or equal to $2$.

On the other hand, the method proposed by McBryan and Spencer \cite{MS} to prove decay properties of correlations in classical spin systems has been extended to bound the correlation functions in the one- and two-dimensional Hubbard models in \cite{KT}, \cite{MR}. These theories make use of $U(1)$-symmetry property of the system and deduce that the electron pairing-pairing correlation function for $2$ separate sites $\bx$, $\by$ is bounded from above by 
$$e^{-C_1(\theta_{\bx}-\theta_{\by})+C_2\sum_{\bu,\bv}|t_{\bu,\bv}|(\cosh(\theta_{\bu}-\theta_{\bv})-1)},
$$
where $C_1,C_2>0$ are constants, $\{\theta_{\bu}\}$ are arbitrary taken real parameters indexed by every site on the lattice, $(t_{\bu,\bv})_{\bu,\bv}$ is the hopping matrix and the sum with respect to $\bu,\bv$ is taken over all the sites. A suitable choice of the parameters $\{\theta_{\bu}\}$ yields a decaying upper bound on the correlation function. As in \cite{MS} Koma and Tasaki \cite{KT} took $\{\theta_{\bu}\}$ to be the fundamental solution of a Laplace equation on the lattice and concluded that the pairing-pairing correlation function decays as $|\bx-\by|\to +\infty$. Macris and Ruiz \cite{MR} referred to a list of the possible parameters $\{\theta_{\bu}\}$ summarized in \cite{MMR} and extended the decay bounds obtained by Koma and Tasaki to be valid for the Hubbard models with long range hopping matrix as well. Since appropriate choices of $\{\theta_{\bu}\}$ have been found in one and two dimensions, this approach verifies the decay of the correlation functions in these low dimensional cases at present, to the best of the author's knowledge. 

Apart from these analysis to bound the correlation functions in low dimensions, Kubo and Kishi \cite{KK} proved that the susceptibilities for the Hubbard models at non-zero temperature are bounded from above by the inverse of modulus of the coupling constant in any space dimension under a few assumptions on the sign of the coupling constant and the lattice.

In this paper we consider the equal-time correlation functions for a class of the Hubbard models at non-zero temperature and show that the correlation functions decay exponentially in the distance between the center of positions of the electrons and the center of positions of the holes in any space dimension if a norm of the interaction term of the Hamiltonian is sufficiently small (see Theorem \ref{thm_exponential_decay} and Theorem \ref{thm_exponential_decay_hubbard_model} in Section \ref{sec_models_results} for the precise statements). As in \cite{KT}, \cite{MR} our approach essentially uses the $U(1)$-symmetry property of the model. We start from characterizing the correlation functions as a limit of Grassmann integrals over finite Grassmann variables. The Grassmann integral formulations called the Schwinger functions are mathematical objects defined on a rigorous base. The $U(1)$-symmetry property of the model is simply implemented in the Grassmann integral and enables us to convert the formulation into a form of multi-contour integral of the Schwinger function with respect to newly introduced complex variables contained in the covariance matrix. Our objective is, thus, set to find an upper bound on the Schwinger function inside the multi-contour integral. The evaluation of the Schwinger function is done perturbatively. We find a volume-independent upper bound on each term of the Taylor expansion of the Schwinger function with respect to the interaction around zero. We require the interaction to be small so that the perturbation series of the Schwinger function converges.

One advantage of this approach is that the space dimension of the system causes  no technical difficulty and the resulting decay bounds on the correlation functions are valid in any dimension. However, as we need to go through the perturbation theory with respect to the interaction, the additional assumption is imposed on the magnitude of the interaction.  

Rigorous frameworks have been developed to control the perturbation theory for many-Fermion on lattice independently of the volume factor. In \cite{PS} Pedra and Salmhofer extended the notion of Gram's inequality to be applicable to the determinant of the covariance matrices appearing in many-Fermion systems. Their abstract theorem \cite[\mbox{Theorem 1.3}]{PS} is general enough to cover the modified covariance coming into play in our construction and bounds its determinant independently of volume and temperature. The tree formula for partial derivatives of logarithm of the Grassmann Gaussian integral summarized in \cite{SW} coupled with the determinant bound of the covariance makes it feasible to find volume-independent upper bounds on the perturbative expansion of the Schwinger functions. Though in this paper we employ the tree formula to bound the Schwinger functions, the same goal can also be achieved via a concise representation of the Schwinger functions established by Feldman, Kn\"orrer and Trubowitz in \cite{FKTrep}. Indeed, their bound \cite[\mbox{Theorem I.9}]{FKTrep} proved in a quite general context needs only the determinant bound and the $L^1$-bound of the covariance, which are the same inputs to return the upper bounds on the Schwinger functions by means of the tree formula.

We often refer to the recent article \cite{K} for basic lemmas needed in our construction. In fact, this work should be regarded as a continuation of \cite{K}, which intended to explain mathematical tools to analyze the perturbation theory for many electrons in detail.

As in \cite{K} we directly treat the perturbation theory of the original model without introducing any multi-scale technique. Though the results can be presented explicitly in a simple manner through the single scale analysis, it costs the temperature dependency of the interaction. In our analysis the norm of the interaction is restricted to be less than a constant times $\beta^{-d-1}$ for the inverse temperature $\beta$ in $d$-dimensional case. As a renormalization group analysis in the theoretical front let us remark that for a wide class of the Fermionic lattice models various mathematical properties of the correlation functions have been intensively studied by Pedra \cite{P} in a generalized form. Pedra's multi-scale analysis remarkably concludes that the correlation functions can be qualitatively analyzed if a norm of the interaction is bounded by a constant times $(\log \beta)^{-1}$ in a $2$-dimensional case, or less than a constant times $\beta^n$ with some $n\in (-d-1,-d/2+1/2]$ in $d$-dimensional cases $(d\ge 2)$ as well.

This paper is organized as follows. In Section \ref{sec_models_results} we define the model Hamiltonians, prepare notations and state the main results of this paper. In Section \ref{sec_formulation} the correlation functions are formulated into a limit of the finite dimensional Grasssmann integrals. In Section \ref{sec_perturbative_bound} upper bounds on each term of the Taylor series of the Grassmann integral formulations are obtained. In Section \ref{sec_exponential_decay_correlation} we first prove that the covariance matrix fulfills the necessary requirements for the Grassmann integral formulation to be bounded perturbatively. By using the perturbative bounds on the Grassmann integral formulations, we then complete the proof of our main results on the exponential decay property of the correlation functions. Appendix \ref{app_anti_symmetricity} shows that the coefficient function defining the interaction term can be replaced by a unique anti-symmetric function. Appendix \ref{app_thermodynamic_limit} presents a proof of existence of the thermodynamic limit of the correlation functions.
  
\section{Model Hamiltonians and main results}\label{sec_models_results}
In this section we define Hamiltonian operators and correlation functions together with notations used in this paper and state the main results. We use the standard terminology concerning the Fermionic Fock space without providing the definitions. They are documented, e.g, in the book \cite{BR} or briefly in \cite[\mbox{Appendix A}]{K}.

\subsection{The Hamiltonian operator}
We are going to define our Hamiltonian operator on the Fermionic Fock space on the $d$-dimensional hyper-cubic lattice $\G=(\Z/(L\Z))^d$ $(L,d\in\N)$ and the spin coordinate $\spin$. The lattice $\G^*$ of momentum is given by $\G^*:=((2\pi\Z/L)/(2\pi\Z))^d$. We admit conventions that Kronecker's delta $\delta_{\bx,\by}$ takes $1$ if the element $\bx$ is identical with $\by$ in the set they belong to, $0$ otherwise and the function $1_P$ of a proposition $P$ returns $1$ if $P$ is true, $0$ otherwise. For any vectors $\ba=(a_1,\cdots,a_n)$, $\bb=(b_1,\cdots,b_n)$ of algebra, let $\<\ba,\bb\>$ denote the sum $\sum_{j=1}^na_jb_j$. Let $\|\cdot\|_{\R^n}$ be the Euclidean norm in $\R^n$, $\<\cdot,\cdot\>_{\C^n}$ be the inner product of $\C^n$ and $\|\cdot\|_{\C^n}$ denote the norm of $\C^n$ induced by $\<\cdot,\cdot\>_{\C^n}$. For any finite set $S$ let $\sharp S$ denote the number of elements of $S$.

With the creation operator $\psi^*_{\bx\xi}$ and the annihilation operator $\psi_{\bx\xi}$ for any $\bx\in\G$ and $\xi\in\spin$ the free part $H_0$ is defined by 
$$H_0:= \sum_{\bx,\by\in \G}\sum_{\xi,\phi\in\spin}T(\bx\xi,\by\phi)
\psi_{\bx\xi}^*\psi_{\by\phi},$$
where the short range hopping matrix $\{T(\bx\xi,\by\phi)\}_{(\bx,\xi),(\by,\phi)\in \G\times\spin}$ is given by
\begin{equation*}
\begin{split}
T&(\bx\xi,\by\phi):=\delta_{\xi,\phi}\Bigg(-t\sum_{j=1}^d(\delta_{\bx,\by-\be_j}+\delta_{\bx,\by+\be_j})\\
&-t'\cdot 1_{d\ge 2}\sum_{j,k=1 \atop j<k}^d(\delta_{\bx,\by-\be_j-\be_k}+\delta_{\bx,\by-\be_j+\be_k}+\delta_{\bx,\by+\be_j-\be_k}+\delta_{\bx,\by+\be_j+\be_k})-\mu \delta_{\bx,\by}\Bigg).
\end{split}
\end{equation*}  
The real parameters $t,t'$ and $\mu$ are called the nearest neighbor hopping amplitude, the next to nearest neighbor hopping amplitude, and the chemical potential, respectively. The vectors $\be_j\in\G$ $(j\in\{1,\cdots,d\})$ are defined by $\be_j(l):=\delta_{j,l}$ for all $j,l\in \{1,\cdots,d\}$.

Since our problem becomes trivial otherwise (see Remark \ref{rem_H0_trivialcase}), we assume that 
\begin{equation}\label{eq_nonzero_hopping}
|t|+|t'|1_{d\ge 2}\neq 0
\end{equation}
throughout the paper.

To define the interacting part of the Hamiltonian we introduce functions $U_l:(\Z^d)^l\times \spin^{2l}\to \C$ $(l\in\{1,\cdots,\tn\})$ satisfying the equality
\begin{equation}\label{eq_condition_U}
\begin{split}
&\overline{U_l((\bx_{1},\cdots,\bx_{l}),(\xi_{1},\cdots,\xi_{l}),(\phi_{1},\cdots,\phi_{l}))}\\
&\quad =U_l((\bx_{1},\cdots,\bx_{l}),(\phi_{1},\cdots,\phi_{l}),(\xi_{1},\cdots,\xi_{l}))\ (\forall l\ge 1)
\end{split}
\end{equation}
and the translation invariance
\begin{equation}\label{eq_translation_invariance}
\begin{split}
&U_l((\bx_1,\bx_2,\cdots,\bx_l),(\xi_1,\xi_2,\cdots,\xi_l),(\phi_1,\phi_2,\cdots,\phi_l))\\
&\quad = U_l((\bx_1+\by,\bx_2+\by,\cdots,\bx_l+\by),(\xi_1,\xi_2,\cdots,\xi_l),(\phi_1,\phi_2,\cdots,\phi_l))\ (\forall l\ge 2)
\end{split}
\end{equation}
for all $(\bx_1,\cdots,\bx_l)\in (\Z^d)^l$, $(\xi_{1},\cdots,\xi_{l}),(\phi_{1},\cdots,\phi_{l})\in \spin^l$ and $\by\in  \Z^d$.

We define a restriction of $U_l$ by periodicity in the following way. Let $\lfloor L/2 \rfloor$ denote the largest integer not exceeding $L/2$. For any $\bx\in\Z^d$ there uniquely exists $\bx^L\in (\{-\lfloor L/2 \rfloor,-\lfloor L/2 \rfloor+1,\cdots,-\lfloor L/2 \rfloor+L-1\})^d$ such that $\bx=\bx^L$ in $\G$. By using this identification we define $U_{L,l}:(\Z^d)^l\times \spin^{2l}\to \C$ $(l\in\{1,\cdots,\tn\})$ by
\begin{equation}\label{eq_reduction_periodicity}
\begin{split}
&U_{L,1}(\bx,\xi,\phi):=U_{1}(\bx^L,\xi,\phi),\\
&U_{L,l}((\bx_1,\bx_2,\cdots,\bx_{l-1},\bx_l),(\xi_1,\cdots,\xi_l),(\phi_1,\cdots,\phi_l))\\
&\quad:=U_{l}(((\bx_1-\bx_l)^L,(\bx_2-\bx_l)^L,\cdots,(\bx_{l-1}-\bx_l)^L,\b0),(\xi_1,\cdots,\xi_l),(\phi_1,\cdots,\phi_l)) 
\end{split}
\end{equation}
for $l\in\{2,\cdots,\tn\}$. Note that $U_{L,l}$ is periodic with respect to the spacial variables and obeys \eqref{eq_condition_U}-\eqref{eq_translation_invariance} and 
\begin{equation}\label{eq_limit_equivalence}
\begin{split}
&\lim_{L\to +\infty\atop L\in\N}U_{L,l}((\bx_1,\cdots,\bx_l),(\xi_1,\cdots,\xi_l),(\phi_1,\cdots,\phi_l))\\
&\quad = U_{l}((\bx_1,\cdots,\bx_l),(\xi_1,\cdots,\xi_l),(\phi_1,\cdots,\phi_l))\end{split}
\end{equation}
for all $(\bx_1,\cdots,\bx_l)\in \Z^l$, $(\xi_1,\cdots,\xi_l),(\phi_1,\cdots,\phi_l)\in\G^l$ $(\forall l\in\{1,\cdots,\tn\})$.

The interacting part $V$ is defined as follows.
\begin{equation}\label{eq_definition_V}
\begin{split}
V:=\sum_{l=1}^{\tn}\sum_{\bx_{j}\in\G\atop\forall j\in\{1,\cdots,l\}}&\sum_{\xi_{j},\phi_{j}\in\spin \atop \forall j\in\{1,\cdots,l\}}U_{L,l}((\bx_{1},\cdots,\bx_{l}),(\xi_{1},\cdots,\xi_{l}),(\phi_{1},\cdots,\phi_{l}))\\
&\cdot\psi_{\bx_{1}\xi_{1}}^*\psi_{\bx_{2}\xi_{2}}^*\cdots\psi_{\bx_{l}\xi_{l}}^*\psi_{\bx_{l}\phi_{l}}\psi_{\bx_{l-1}\phi_{l-1}}\cdots\psi_{\bx_{1}\phi_{1}}.
\end{split}
\end{equation}
Note that the condition \eqref{eq_condition_U} makes $V$ self-adjoint. The following examples motivate us to generalize the interacting part of the Hamiltonian as defined above.

\begin{example}[the density-density interaction]
With real functions $U^{dd}_l:(\Z^d)^l\times\spin^l\to \R\ (l\in\{1,\cdots,n\})$ satisfying the translation invariance 
$$U^{dd}_l((\bx_1,\cdots,\bx_l),(\xi_1,\cdots,\xi_l))=U^{dd}_l((\bx_1+\by,\cdots,\bx_l+\by),(\xi_1,\cdots,\xi_l))$$
and $U^{dd}_l((\bx_1,\cdots,\bx_l),(\xi_1,\cdots,\xi_l))=0$ if $(\bx_j,\xi_j)= (\bx_k,\xi_k)$ with $j\neq k$ $(\forall l\ge 2)$, 
let us define the density-density interaction $V_{dd}$ by 
\begin{equation*}
V_{dd}:=\sum_{l=1}^n\sum_{\bx_{j}\in\G\atop \forall j\in\{1,\cdots,l\}}\sum_{\xi_{j}\in\spin\atop \forall j\in\{1,\cdots,l\}}U_{L,l}^{dd}((\bx_{1},\cdots,\bx_{l}),(\xi_{1},\cdots,\xi_{l}))\prod_{j=1}^l\psi_{\bx_{j}\xi_{j}}^*\psi_{\bx_{j}\xi_{j}},
\end{equation*}
where the function $U_{L,l}^{dd}$ is derived from $U_{l}^{dd}$ in the same way as in \eqref{eq_reduction_periodicity}. The operator $V_{dd}$ can be rewritten as 
\begin{equation*}
\begin{split}
V_{dd}=&\sum_{l=1}^n\sum_{\bx_{j}\in\G\atop \forall j\in\{1,\cdots,l\}}\sum_{\xi_{j},\phi_{j}\in\spin\atop \forall j\in\{1,\cdots,l\}}U_{L,l}^{dd}((\bx_{1},\cdots,\bx_{l}),(\xi_{1},\cdots,\xi_{l}))\prod_{j=1}^l\delta_{\xi_{j},\phi_{j}}\\
&\cdot\psi_{\bx_{1}\xi_{1}}^*\psi_{\bx_{2}\xi_{2}}^*\cdots\psi_{\bx_{l}\xi_{l}}^*\psi_{\bx_{l}\phi_{l}}\psi_{\bx_{l-1}\phi_{l-1}}\cdots\psi_{\bx_{1}\phi_{1}},
\end{split}
\end{equation*}
which shows that $V_{dd}$ has the form \eqref{eq_definition_V}.
\end{example} 

\begin{example}[the spin operator coupled with a local magnetic field]\label{ex_spin_field}
Introduce a local magnetic field $\bB_{\bx}=(B_{\bx}^{(1)},B_{\bx}^{(2)},B_{\bx}^{(3)}):\Z^d\to \R^3$ and let $\bB_{L,\bx}=$\\
$(B_{L,\bx}^{(1)},B_{L,\bx}^{(2)},B_{L,\bx}^{(3)}):\G\to \R^3$ be the restriction of $\bB_{\bx}$ on $\G$ by periodicity as defined in \eqref{eq_reduction_periodicity}. With the spin operator $\bS_{\bx}=(S_{\bx}^{(1)},S_{\bx}^{(2)},S_{\bx}^{(3)})$ $(\bx\in\G)$ given by
$$S_{\bx}^{(l)}:=\frac{1}{2}\sum_{\xi,\phi\in\{\ua,\da\}}P_{\xi,\phi}^{(l)}\psi_{\bx\xi}^*\psi_{\bx\phi}\ (l\in\{1,2,3\})$$
with the Pauli matrices
$$P^{(1)}=\left(\begin{array}{cc} 0 & 1 \\ 1 & 0 \end{array}\right),\ P^{(2)}=\left(\begin{array}{cc} 0 & -i \\ i & 0 \end{array}\right),\ P^{(3)}=\left(\begin{array}{cc} 1 & 0 \\ 0 & -1 \end{array}\right),$$
we define a self-adjoint operator $V_s$ by
$$V_s:=\sum_{\bx\in\G}\<\bB_{L,\bx},\bS_{\bx}\>.$$
Since 
$$V_s=\sum_{\bx\in\G}\sum_{\xi,\phi\in\spin}\left(\frac{1}{2}\sum_{l=1}^3B_{L,\bx}^{(l)}P_{\xi,\phi}^{(l)}\right)\psi_{\bx\xi}^*\psi_{\bx\phi}$$
and $\overline{P_{\xi,\phi}^{(l)}}=P_{\phi,\xi}^{(l)}$ $(\forall l\in\{1,2,3\},\forall \xi,\phi\in\spin)$, the operator $V_s$ provides one example of the interactions of the form \eqref{eq_definition_V}.
\end{example}

\begin{example}[the spin-spin interaction]
With the spin operator $\bS_{\bx}$ introduced in Example \ref{ex_spin_field} and a real function $w(\bx):\Z^d\to\R$, we define the spin-spin interaction $V_{ss}$ by
$$V_{ss}:=\sum_{\bx,\by\in\G}w_L(\bx-\by)\<\bS_{\bx},\bS_{\by}\>,$$
where the coefficient $w_L$ is the restriction of $w$ on $\G$ by periodicity. 
A calculation shows that
\begin{equation*}
\begin{split}
&V_{ss}=\sum_{\bx\in\G}\sum_{\xi,\phi\in\spin}\left(\frac{w_L(\b0)}{4}\sum_{l=1}^3\sum_{\tau\in\spin}P_{\xi,\tau}^{(l)}P_{\tau,\phi}^{(l)}\right)\psi_{\bx\xi}^*\psi_{\bx\phi}\\
&\quad+\sum_{\bx_1,\bx_2\in\G}\sum_{\xi_1,\xi_2,\phi_1,\phi_2\in\spin}\left(\frac{w_L(\bx_1-\bx_2)}{4}\sum_{l=1}^3P_{\xi_1,\phi_1}^{(l)}P_{\xi_2,\phi_2}^{(l)}\right)\psi_{\bx_1\xi_1}^*\psi_{\bx_2\xi_2}^*\psi_{\bx_2\phi_2}\psi_{\bx_1\phi_1}.\end{split}
\end{equation*}
Hence $V_{ss}$ can be written in the form \eqref{eq_definition_V}. Consequently, our $V$ covers the interaction of the form $V_{dd}+V_s+V_{ss}$.
\end{example}
In this paper we treat the Hamiltonian operator $H=H_0+V$.

\subsection{Main results}
We employ norms $\|\cdot\|_{L,l}$ and $\|\cdot\|_l$ to measure the magnitude of the interaction.
\begin{equation*}
\begin{split}
&\|U_{L,1}\|_{L,1}:=\max_{\bx\in\G}\max_{\xi\in\spin}\sum_{\phi\in\spin}|U_{L,1}(\bx,\xi,\phi)|,\\
&\|U_{L,l}\|_{L,l}:=\max_{j\in\{1,\cdots,l\}}\max_{\xi_j\in\spin}\sum_{\bx_k\in\G\atop \forall k\in\{1,\cdots,l-1\}}\sum_{\xi_k\in\spin\atop \forall k\in\{1,\cdots,l\}\backslash \{j\}}\sum_{\phi_k\in\spin\atop \forall k\in\{1,\cdots,l\}}\\
&\qquad\qquad\qquad\cdot|U_{L,l}((\bx_1,\cdots,\bx_{l-1},\b0),(\xi_1,\cdots,\xi_l),(\phi_1,\cdots,\phi_l))|\ (\forall l\ge 2),\\
&\|U_{1}\|_1:=\sup_{\bx\in\Z^d}\max_{\xi\in\spin}\sum_{\phi\in\spin}|U_1(\bx,\xi,\phi)|,\\
&\|U_{l}\|_l:=\max_{j\in\{1,\cdots,l\}}\max_{\xi_j\in\spin}\sum_{\bx_k\in\Z^d\atop \forall k\in\{1,\cdots,l-1\}}\sum_{\xi_k\in\spin\atop \forall k\in\{1,\cdots,l\}\backslash \{j\}}\sum_{\phi_k\in\spin\atop \forall k\in\{1,\cdots,l\}}\\
&\quad\qquad\qquad\cdot|U_{l}((\bx_1,\cdots,\bx_{l-1},\b0),(\xi_1,\cdots,\xi_l),(\phi_1,\cdots,\phi_l))|\ (\forall l\ge 2).
\end{split} 
\end{equation*}

For any operator $\cO$ we define the thermal average $\<\cO\>_L$ by
$$\<\cO\>_L:=\frac{\Tr(e^{-\beta H}\cO)}{\Tr e^{-\beta H}},$$
where the trace operation is taken over the Fermionic Fock space on $\G\times \spin$ and the positive constant $\beta$ is proportional to the inverse of temperature.

Define the function $F_{t,t',d}:\R\to\R_{>0}$ by 
\begin{equation}\label{eq_definition_keyfunction}
F_{t,t',d}(r):=\frac{r}{2(|t|+2(d-1)|t'|)}+\sqrt{\frac{r^2}{4(|t|+2(d-1)|t'|)^2}+1}.
\end{equation}

For $\bx\in\Z^d$ let us define $\psi_{\bx\xi}^{(*)}$ by considering $\bx$ as a site of $\G$ by periodicity.

The main results of this paper are stated as follows.
\begin{theorem}\label{thm_exponential_decay}
Assume that there exists $R\in (0,1)$ such that 
\begin{equation}\label{eq_assumption_U_convergence}
\sum_{l=1}^{\tn}l 16^{l}\|U_l\|_{l}<\beta^{-1} \left(\frac{F_{t,t',d}\left(\frac{\pi}{2\beta}\right)^{1/(2e\pi d)}+1}{F_{t,t',d}\left(\frac{\pi}{2\beta}\right)^{1/(2e\pi d)}-1}\right)^{-d}R.
\end{equation}
For any $\hm\in\N$ and any $\hbx_j,\hby_j\in\Z^d$, $\hxi_j,\hphi_j\in\spin$ $(\forall j\in\{1,\cdots,\hm\})$ the thermodynamic limit 
\begin{equation}\label{eq_thermodynamic_limit}
\begin{split}
\lim_{L\to +\infty\atop L\in\N}&\<\psi_{\hbx_1\hxi_1}^*\psi_{\hbx_2\hxi_2}^*\cdots\psi_{\hbx_{\hm}\hxi_{\hm}}^*\psi_{\hby_{\hm}\hphi_{\hm}}\psi_{\hby_{\hm-1}\hphi_{\hm-1}}\cdots\psi_{\hby_{1}\hphi_{1}}\\
&\quad+\psi_{\hby_{1}\hphi_{1}}^*\psi_{\hby_{2}\hphi_{2}}^*\cdots\psi_{\hby_{\hm}\hphi_{\hm}}^*\psi_{\hbx_{\hm}\hxi_{\hm}}\psi_{\hbx_{\hm-1}\hxi_{\hm-1}}\cdots\psi_{\hbx_1\hxi_1}\>_L
\end{split}
\end{equation}
exists and satisfies the following inequality.
\begin{equation}\label{eq_main_exponential_decay}
\begin{split}
\lim_{L\to +\infty\atop L\in\N}&\Big|\<\psi_{\hbx_1\hxi_1}^*\psi_{\hbx_2\hxi_2}^*\cdots\psi_{\hbx_{\hm}\hxi_{\hm}}^*\psi_{\hby_{\hm}\hphi_{\hm}}\psi_{\hby_{\hm-1}\hphi_{\hm-1}}\cdots\psi_{\hby_{1}\hphi_{1}}\\
&\quad+\psi_{\hby_{1}\hphi_{1}}^*\psi_{\hby_{2}\hphi_{2}}^*\cdots\psi_{\hby_{\hm}\hphi_{\hm}}^*\psi_{\hbx_{\hm}\hxi_{\hm}}\psi_{\hbx_{\hm-1}\hxi_{\hm-1}}\cdots\psi_{\hbx_1\hxi_1}\>_L\Big|\\
&\le (4^{\hm+1}-\hm 4^{2\hm+1}\log(1-R))\cdot F_{t,t',d}\left(\frac{\pi}{2\beta}\right)^{-\frac{1}{4ed}\|\sum_{j=1}^{\hm}\hbx_j-\sum_{j=1}^{\hm}\hby_j\|_{\R^d}}.
\end{split}
\end{equation}
\end{theorem}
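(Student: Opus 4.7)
The strategy follows the roadmap sketched in the introduction: (i) express the correlator in (\ref{eq_thermodynamic_limit}) as the thermodynamic limit of a finite-dimensional Grassmann integral; (ii) use the $U(1)$-symmetry of the Grassmann action to deform the covariance and produce an exponentially decaying phase prefactor; (iii) bound the resulting Schwinger function uniformly in $L$ by a perturbation expansion around the free Gaussian; (iv) optimize the deformation parameters to extract the stated decay rate.

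For (i)-(ii), standard Trotter-Grassmann manipulations realize the thermal average as
\[
\<\cO\>_L=\lim_{h\to\infty}\frac{\int \cO(\bar\psi,\psi)\,e^{-V(\bar\psi,\psi)}\,d\mu_C(\bar\psi,\psi)}{\int e^{-V(\bar\psi,\psi)}\,d\mu_C(\bar\psi,\psi)},
\]
with Gaussian measure $d\mu_C$ of covariance $C$ built from $(i\omega_n-T)^{-1}$. The change of variables $\psi_{\bx\xi}\mapsto e^{-i\theta_\bx}\psi_{\bx\xi}$, $\bar\psi_{\bx\xi}\mapsto e^{i\theta_\bx}\bar\psi_{\bx\xi}$ leaves the Grassmann measure and the interaction $V$ invariant, since each term in (\ref{eq_definition_V}) carries matching numbers of $\psi$ and $\bar\psi$ at every site. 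It transforms the covariance into $C_\theta(\bx\xi,\by\phi)=e^{i(\theta_\by-\theta_\bx)}C(\bx\xi,\by\phi)$ and multiplies the observable by $e^{i\sum_{j=1}^{\hm}(\theta_{\hbx_j}-\theta_{\hby_j})}$. Analytic continuation to $\theta_\bx=ic\<\bx,\be\>$, with $\be$ a unit vector aligned with $\sum_{j=1}^{\hm}\hbx_j-\sum_{j=1}^{\hm}\hby_j$, turns the prefactor into the real decay $e^{-c\|\sum_{j=1}^{\hm}\hbx_j-\sum_{j=1}^{\hm}\hby_j\|_{\R^d}}$, while leaving a non-Hermitian covariance $C'$ associated with the complex momentum shift $\bk\to\bk+ic\be$.

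For (iii)-(iv), I would Taylor-expand the Schwinger function in $V$ and bound each contribution via the Brydges-Kennedy tree formula summarized in \cite{SW}. The $n$-th order term is dominated by an $L^1$-norm of $U$ times a determinant of $C'$ controlled by \cite[Theorem 1.3]{PS}, with the $L^1$-norm of $C'$ supplying a factor per tree edge. Both the Pedra-Salmhofer Gram constant and the $L^1$-norm of $C'$ behave like $F_{t,t',d}(\pi/(2\beta))^c$, because the shift $\cos k_j\to\cos(k_j+ic e_j)$ causes $\cosh$-type growth of the dispersion while the Matsubara denominator $(i\omega_n-\varepsilon(\bk+ic\be))^{-1}$ remains controlled provided the imaginary shift stays below $\pi/(2\beta)$; this is exactly the content of (\ref{eq_definition_keyfunction}). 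Optimizing $c$ balances the decay factor against the covariance growth, producing the exponent displayed in (\ref{eq_main_exponential_decay}), while the smallness hypothesis (\ref{eq_assumption_U_convergence}) secures geometric convergence of the series with sum controlled by $-\log(1-R)$, yielding the combinatorial prefactor.

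The principal obstacle is establishing volume-independent Pedra-Salmhofer and $L^1$-bounds on the non-Hermitian covariance $C'$, both with constants of the form $F_{t,t',d}(\pi/(2\beta))^c$. This requires constructing auxiliary Hilbert-space vectors $A_{\bx\xi}$, $B_{\by\phi}$ with $C'(\bx\xi,\by\phi)=\<A_{\bx\xi},B_{\by\phi}\>$ and uniform norm bounds, together with a careful estimate of the Matsubara sum defining $C'$ over complex-shifted momenta. The spectral analysis of the shifted hopping matrix is the technical heart of Section \ref{sec_exponential_decay_correlation}. Once these uniform bounds are in place, the perturbation series may be summed and the limit $L\to\infty$ taken term by term, which together with the existence statement in Appendix \ref{app_thermodynamic_limit} closes the proof.
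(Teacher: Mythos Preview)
Your outline captures the overall architecture correctly: Grassmann representation, $U(1)$ symmetry to shift the covariance, tree expansion with the Pedra--Salmhofer determinant bound and an $L^1$ bound on the (modified) covariance. But step (ii) as you describe it does not work on the finite torus $\Gamma=(\Z/L\Z)^d$, and this is where the paper's argument diverges from yours.

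The direct continuation $\theta_{\bx}=ic\<\bx,\be\>$ is ill-defined on $\Gamma$ because $\<\bx,\be\>$ is not periodic. If you force a choice of fundamental domain, the transformed covariance becomes $C'(\bx\xi,\by\phi)=e^{c(\<\bx,\be\>-\<\by,\be\>)}C(\bx\xi,\by\phi)$, which is \emph{not} the momentum-shifted covariance $C_{\cE}$ with $\cE_{\bk}=E_{\bk+ic\be}$; the two agree on $\Z^d$ but not on the torus. Worse, for this $C'$ one has $\det(\<\bu_j,\bv_k\>C'(\cdots))=e^{c\sum_j(\<\bx_j,\be\>-\<\by_j,\be\>)}\det(\<\bu_j,\bv_k\>C(\cdots))$, so the determinant bound is \emph{not} of the form $\cB^n$ with a volume-independent $\cB$; the tree estimate in Proposition~\ref{prop_schwinger_functional_bound} then fails. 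The Gram representation you allude to (and that the paper constructs in Proposition~\ref{prop_extended_determinant_bound}) is for $C(z\be_p)$, i.e.\ the covariance with dispersion $E_{\bk+z\be_p}$, not for your $C'$.

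What the paper does instead is use only the \emph{periodic} $U(1)$ shift $\psi_{\bx}\to e^{i(2\pi/L)\<\bx,\be_p\>}\psi_{\bx}$ (Lemma~\ref{lem_U_1_symmetry}), which legitimately maps $C_h$ to $C_h((2\pi/L)\be_p)$. This gives the identity
\[
e^{i\frac{2\pi}{L}\<\sum_j\hbx_j-\sum_j\hby_j,\be_p\>}S(C_h,\eta)=S\!\left(C_h\!\left(\tfrac{2\pi}{L}\be_p\right),\eta\right),
\]
so the \emph{difference quotient} of $S$ is a derivative in the complex variable $z$, which by Cauchy's formula is a contour integral over $|z-\theta|=r_n$; iterating $n$ times yields Lemma~\ref{lem_schwinger_functional_formula}. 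Now the integrand $S(C_h(\sum_j z_j\be_p),\eta)$ is bounded \emph{uniformly} (Proposition~\ref{prop_schwinger_functional_bound}, with $\cB=4$ from Proposition~\ref{prop_extended_determinant_bound} and $\cD$ from Corollary~\ref{cor_covariance_L1_bound}), and the $n$-fold contour integral contributes $(1/r_n)^n=(2n/\log F)^n$. There is no optimisation of a single parameter $c$; the exponential decay is extracted by applying Stirling to $n^n$ and summing the resulting bound over $n\ge 0$ (see \eqref{eq_identity_bound}--\eqref{eq_schwinger_pre_bound}). The exponent $1/(4ed)$ comes from the radius $r_n=\frac{1}{2n}\log F$, the Stirling factor $e^n$, an extra $2^{-n}$ inserted to make the geometric sum converge, and finally taking the geometric mean over the $d$ coordinate directions $p$.

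In short: your ``optimise $c$'' picture is the McBryan--Spencer heuristic, but on the torus the object with good Gram and $L^1$ bounds is $C(z\be_p)$, not $e^{c\<\bx-\by,\be\>}C$, and the bridge between the two is the discrete shift plus Cauchy iteration of Lemma~\ref{lem_schwinger_functional_formula}, not a single analytic continuation.
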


In the case that the interaction $V$ is the on-site interaction \\
$U\sum_{\bx\in\G}\psi_{\bx\ua}^*\psi_{\bx\da}^*\psi_{\bx\da}\psi_{\bx\ua}$ $(U\in\R)$, 
the 4 point correlation functions can be bounded more sharply as follows.
\begin{theorem}\label{thm_exponential_decay_hubbard_model}
Assume that the Hamiltonian operator $H$ is given by
$$H=H_0+U\sum_{\bx\in\G}\psi_{\bx\ua}^*\psi_{\bx\da}^*\psi_{\bx\da}\psi_{\bx\ua}$$
with the coupling constant $U\in\R$ satisfying
\begin{equation}\label{eq_assumption_U_convergence_hubbard_model}
|U|\le (108\beta)^{-1} \left(\frac{F_{t,t',d}\left(\frac{\pi}{2\beta}\right)^{1/(2e\pi d)}+1}{F_{t,t',d}\left(\frac{\pi}{2\beta}\right)^{1/(2e\pi d)}-1}\right)^{-d}.
\end{equation}
For any $\hbx_1,\hbx_2,\hby_1,\hby_2\in\Z^d$ the thermodynamic limit 
\begin{equation}\label{eq_thermodynamic_limit_hubbard}
\lim_{L\to +\infty\atop L\in\N}\<\psi_{\hbx_1\ua}^*\psi_{\hbx_2\da}^*\psi_{\hby_2\da}\psi_{\hby_1\ua}+\psi_{\hby_1\ua}^*\psi_{\hby_2\da}^*\psi_{\hbx_2\da}\psi_{\hbx_1\ua}\>_L
\end{equation}
 exists and satisfies
\begin{equation*}
\begin{split}
\lim_{L\to +\infty\atop L\in\N}&|\<\psi_{\hbx_1\ua}^*\psi_{\hbx_2\da}^*\psi_{\hby_2\da}\psi_{\hby_1\ua}+\psi_{\hby_1\ua}^*\psi_{\hby_2\da}^*\psi_{\hbx_2\da}\psi_{\hbx_1\ua}\>_L|\\
&\quad \le 324\cdot F_{t,t',d}\left(\frac{\pi}{2\beta}\right)^{-\frac{1}{4ed}\|\hbx_1+\hbx_2-\hby_1-\hby_2\|_{\R^d}}.
\end{split}
\end{equation*}
\end{theorem}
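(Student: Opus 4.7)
The plan is to specialize the proof of Theorem \ref{thm_exponential_decay} to the on-site Hubbard interaction and then sharpen the combinatorial constants. The interaction reduces to a single vertex type with $U_{L,2}((\bx_1,\bx_2),(\xi_1,\xi_2),(\phi_1,\phi_2)) = U\delta_{\bx_1,\bx_2}\delta_{\xi_1,\ua}\delta_{\xi_2,\da}\delta_{\phi_1,\ua}\delta_{\phi_2,\da}$ and no other $U_l$, so the relevant small parameter collapses from the generic $\beta\sum_l l 16^l\|U_l\|_l$ of \eqref{eq_assumption_U_convergence} to a modest numerical multiple of $\beta|U|$. With $\hm=2$ fixed, the proof of Theorem \ref{thm_exponential_decay} specialises cleanly and the bound can be optimised end-to-end.

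First I would reformulate the finite-volume thermal average in \eqref{eq_thermodynamic_limit_hubbard} as a finite-dimensional Grassmann Gaussian integral, following the construction of Section \ref{sec_formulation}. Since both $H_0$ and the on-site Hubbard term are $U(1)$-invariant (equal numbers of creation and annihilation operators, all at the same site), a site-dependent gauge transformation $\psi_{\bx\xi}\mapsto e^{i\theta_\bx}\psi_{\bx\xi}$ is absorbed entirely into the external fields, contributing the factor $e^{-i(\theta_{\hbx_1}+\theta_{\hbx_2}-\theta_{\hby_1}-\theta_{\hby_2})}$ and modifying only the covariance. Analytically continuing $\theta_\bx$ into the complex strip and contour-deforming in the McBryan-Spencer style then produces an upper bound of the form (exponential decay factor) times (Schwinger function with a gauge-modified covariance). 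The Pedra-Salmhofer determinant bound \cite[Theorem 1.3]{PS} together with the tree-formula expansion of \cite{SW}, \cite{K}, applied to the covariance estimates established at the start of Section \ref{sec_exponential_decay_correlation}, bound each term of the Taylor expansion of the Schwinger function in $U$ uniformly in $L$, exactly as in Section \ref{sec_perturbative_bound}.

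Choosing $\theta_\bx$ linear in the direction $\hbx_1+\hbx_2-\hby_1-\hby_2$ with slope proportional to $-(4ed)^{-1}\log F_{t,t',d}(\pi/(2\beta))$ produces the exponential decay factor in the stated bound, while the $\cosh$-type correction generated by the modified covariance is absorbed into $F_{t,t',d}$ as in \eqref{eq_definition_keyfunction}. The main obstacle is the numerical bookkeeping that yields the constants $108$ and $324$. In the Hubbard case the spin and positional sums that produce the $16^l$ factor in the general proof collapse (only one spin configuration and one lattice position per vertex), and the $\hm$-dependent prefactor $(4^{\hm+1}-\hm 4^{2\hm+1}\log(1-R))$ of \eqref{eq_main_exponential_decay} is replaced by a direct computation at $\hm=2$; a careful choice of the ratio $R$ governing the geometric sum of the Taylor expansion then converts the convergence condition into the explicit smallness threshold $(108\beta)^{-1}$ and produces the prefactor $324$ in front of the decay factor.
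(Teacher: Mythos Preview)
Your high-level architecture (Grassmann integral formulation, $U(1)$ gauge rotation, analytic continuation of the covariance, Pedra--Salmhofer determinant bound, tree expansion) matches the paper exactly. The gap is in the step you treat most casually: obtaining the specific constants $108$ and $324$.

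You propose to reach them by specializing the general tree bound of Proposition~\ref{prop_schwinger_functional_bound} to $\hm=2$ and the single on-site vertex, then tuning the parameter $R$ from the proof of Theorem~\ref{thm_exponential_decay}. This cannot work. The paper's own Remark immediately following Theorem~\ref{thm_exponential_decay_hubbard_model} shows that feeding the Hubbard interaction through the general machinery yields at best the constraint $|U|<(256\beta)^{-1}(\cdots)^{-d}R$ and the prefactor $4^3-2\cdot 4^5\log(1-R)$; no choice of $R\in(0,1)$ makes the former as loose as $(108\beta)^{-1}$ while keeping the latter at $324$. The $R$-parametrized geometric sum $\sum_{m\ge 1}R^m/m=-\log(1-R)$ is simply the wrong series here.

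What the paper actually does is replace Proposition~\ref{prop_schwinger_functional_bound} by the sharper Proposition~\ref{prop_schwinger_functional_hubbard_model_bound}, whose proof (imported from \cite[Lemma~4.7, Lemma~4.8]{K}) counts the tree-expansion monomials \emph{exactly} rather than with the overcounting inequality used in Step~4 of the general proof. This produces the Fuss--Catalan-type coefficients $\frac{4}{3m+4}\binom{3m+4}{m}$, whose generating function has radius of convergence $4/27$ and value $81/16$ at that radius. Plugging in $\cB=4$ and $\cD\le 4\beta(\cdots)^d$ turns the convergence condition $16\beta(\cdots)^d|U|\le 4/27$ into exactly $|U|\le(108\beta)^{-1}(\cdots)^{-d}$, and the sum bound $16\cdot 81/16=81$ becomes $324$ after the factor $2$ from summing $2^{-n}$ and the factor $2$ from the two Schwinger functions. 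So the missing ingredient in your outline is this exact combinatorial count and the resulting Fuss--Catalan generating function; the phrase ``careful choice of $R$'' should be replaced by an appeal to Proposition~\ref{prop_schwinger_functional_hubbard_model_bound} and the closed-form evaluation in \cite[Lemma~4.9]{K}.
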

Theorem \ref{thm_exponential_decay} and Theorem \ref{thm_exponential_decay_hubbard_model} will be proved in Section \ref{sec_exponential_decay_correlation}.

\begin{remark}
Theorem \ref{thm_exponential_decay_hubbard_model} does not follow Theorem \ref{thm_exponential_decay}. To see this, we write
\begin{equation*}
\begin{split}
&U\sum_{\bx\in\G}\psi_{\bx\ua}^*\psi_{\bx\da}^*\psi_{\bx\da}\psi_{\bx\ua}\\
&\quad=\sum_{\bx_j,\by_j\in\G\atop \forall j\in\{1,2\}}\sum_{\xi_j,\phi_j\in\spin\atop \forall j\in\{1,2\}}f_{c}((\bx_1,\xi_1),(\bx_2,\xi_2),(\by_1,\phi_1),(\by_2,\phi_2))\psi_{\bx_1\xi_1}^*\psi_{\bx_2\xi_2}^*\psi_{\by_1\phi_1}\psi_{\by_2\phi_2}
\end{split}
\end{equation*}
with 
\begin{equation*}
\begin{split}
&f_c((\bx_1,\xi_1),(\bx_2,\xi_2),(\by_1,\phi_1),(\by_2,\phi_2))\\
&\quad:=-\frac{U}{4}\delta_{\bx_1,\bx_2}\delta_{\by_1,\by_2}\delta_{\bx_1,\by_1}(\delta_{\xi_1,\ua}\delta_{\xi_2,\da}-\delta_{\xi_1,\da}\delta_{\xi_2,\ua})(\delta_{\phi_1,\ua}\delta_{\phi_2,\da}-\delta_{\phi_1,\da}\delta_{\phi_2,\ua}).
\end{split}
\end{equation*}
The function $f_c $ satisfies the anti-symmetricity \eqref{eq_anti_symmetricity} and 
\begin{equation*}
\begin{split}
\max\Bigg\{&\max_{(\bx_1,\xi_1)\in\G\times\spin}\sum_{(\bx_2,\xi_2),(\by_j,\phi_j)\in\G\times\spin\atop \forall j\in\{1,2\}}|f_{c}((\bx_1,\xi_1),(\bx_2,\xi_2),(\by_1,\phi_1),(\by_2,\phi_2))|,\\
&\max_{(\by_1,\phi_1)\in\G\times\spin}\sum_{(\by_2,\phi_2),(\bx_j,\xi_j)\in\G\times\spin\atop \forall j\in\{1,2\}}|f_{c}((\bx_1,\xi_1),(\bx_2,\xi_2),(\by_1,\phi_1),(\by_2,\phi_2))|\Bigg\}\\
&=\frac{|U|}{2}.
\end{split}
\end{equation*}
Thus, Lemma \ref{lem_anti_symmetrization} proved in Appendix \ref{app_anti_symmetricity} ensures that if a function $U_c:(\Z^d)^2\times \spin^4\to \C$ satisfies \eqref{eq_condition_U}-\eqref{eq_translation_invariance} and its restriction $U_{c,L}:\G^2\times \spin^4\to \C$ by periodicity obeys
\begin{equation}\label{eq_interaction_hubbard_rewritten}
\begin{split}
&U\sum_{\bx\in\G}\psi_{\bx\ua}^*\psi_{\bx\da}^*\psi_{\bx\da}\psi_{\bx\ua}\\
&\quad=\sum_{\bx_1,\bx_2\in\G}\sum_{\xi_1,\xi_2,\phi_1,\phi_2\in\spin}U_{c,L}((\bx_1,\bx_2),(\xi_1,\xi_2),(\phi_1,\phi_2))\psi_{\bx_1\xi_1}^*\psi_{\bx_2\xi_2}^*\psi_{\bx_2\phi_2}\psi_{\bx_1\phi_1},
\end{split}
\end{equation}
the inequalities 
\begin{equation}\label{eq_norm_inequality}
\frac{|U|}{2}\le\|U_{c,L}\|_{L,2}\le \|U_{c}\|_{2}
\end{equation}
must hold. 

 Let us apply Theorem \ref{thm_exponential_decay} to the interaction \eqref{eq_interaction_hubbard_rewritten}. The inequalities \eqref{eq_assumption_U_convergence} and \eqref{eq_norm_inequality} imply 
$$
|U|<(256\beta)^{-1} \left(\frac{F_{t,t',d}\left(\frac{\pi}{2\beta}\right)^{1/(2e\pi d)}+1}{F_{t,t',d}\left(\frac{\pi}{2\beta}\right)^{1/(2e\pi d)}-1}\right)^{-d}R,
$$
which is a stricter constraint than \eqref{eq_assumption_U_convergence_hubbard_model}. Moreover, for $R$ close to $1$, the coefficient $4^3-2\cdot4^5\log(1-R)$ is larger than $324$.
\end{remark}

\begin{remark}\label{rem_beta_dependency}One can prove that for any $b>0$ there exist constants $C_{t,t',d,b}>0$ depending only on $t,t',d, b$ and $C_{t,t',d}'>0$ depending only on $t,t',d$ such that
$$C_{t,t',d,b}\beta^{-d-1}\le \beta^{-1} \left(\frac{F_{t,t',d}\left(\frac{\pi}{2\beta}\right)^{1/(2e\pi d)}+1}{F_{t,t',d}\left(\frac{\pi}{2\beta}\right)^{1/(2e\pi d)}-1}\right)^{-d}\le C_{t,t',d}'\beta^{-d-1}$$
for any $\beta \ge b$. Hence, the interaction needs to be small to claim the decay bounds in low temperatures.
\end{remark}

\begin{remark}
To generalize the results to many-Fermion systems with finite coordinates of colors is straightforward. We present the results only for the spins $\spin$ in order to refer to proved materials for many-electron systems summarized in \cite{K}.
\end{remark}

\begin{remark}
We use the translation invariance \eqref{eq_translation_invariance} to prove the existence of the thermodynamic limit \eqref{eq_thermodynamic_limit} in Lemma \ref{lem_thermodynamic_limit}. Without assuming \eqref{eq_translation_invariance} we can also prove the inequality of the form \eqref{eq_main_exponential_decay} with $\limsup_{L\to +\infty, L\in\N}$ in the left side instead of $\lim_{L\to +\infty,L\in\N}$ under an appropriate modification of the norm of $U_l$.
\end{remark}

\begin{remark}\label{rem_H0_trivialcase}
If $|t|+|t'|1_{d\ge 2}=0$, the correlation functions decay trivially. To prove this, let us take any real parameters $\{\theta_{\bx}\}_{\bx\in\G}$ and define the unitary operator $A_{\theta}$ by
$$A_{\theta}:=\prod_{(\bx,\xi)\in\G\times\spin}e^{i\theta_{\bx}\psi_{\bx\xi}^*\psi_{\bx\xi}}.$$
In this case, $A_{\theta}HA_{\theta}^*=H$ and thus
\begin{equation}\label{eq_remark_unitary}
\begin{split}
&\<\psi_{\hbx_1\hxi_1}^*\cdots\psi_{\hbx_{\hm}\hxi_{\hm}}^*\psi_{\hby_{\hm}\hphi_{\hm}}\cdots\psi_{\hby_{1}\hphi_{1}}+\psi_{\hby_{1}\hphi_{1}}^*\cdots\psi_{\hby_{\hm}\hphi_{\hm}}^*\psi_{\hbx_{\hm}\hxi_{\hm}}\cdots\psi_{\hbx_{1}\hxi_{1}}\>_L\\
&=\<A_{\theta}\psi_{\hbx_1\hxi_1}^*\cdots\psi_{\hbx_{\hm}\hxi_{\hm}}^*\psi_{\hby_{\hm}\hphi_{\hm}}\cdots\psi_{\hby_{1}\hphi_{1}}A_{\theta}^*\>_L\\
&\qquad\qquad+\<A_{\theta}^*\psi_{\hby_{1}\hphi_{1}}^*\cdots\psi_{\hby_{\hm}\hphi_{\hm}}^*\psi_{\hbx_{\hm}\hxi_{\hm}}\cdots\psi_{\hbx_{1}\hxi_{1}}A_{\theta}\>_L\\
&= e^{i\sum_{j=1}^{\hm}(\theta_{\hbx_j}-\theta_{\hby_j})}\\
&\qquad\cdot\<\psi_{\hbx_1\hxi_1}^*\cdots\psi_{\hbx_{\hm}\hxi_{\hm}}^*\psi_{\hby_{\hm}\hphi_{\hm}}\cdots\psi_{\hby_{1}\hphi_{1}}+\psi_{\hby_{1}\hphi_{1}}^*\cdots\psi_{\hby_{\hm}\hphi_{\hm}}^*\psi_{\hbx_{\hm}\hxi_{\hm}}\cdots\psi_{\hbx_{1}\hxi_{1}}\>_L.
\end{split}
\end{equation}
If $\sum_{j=1}^{\hm}\hbx_j\neq\sum_{j=1}^{\hm}\hby_j$ in $\G$, we can choose $\{\theta_{\bx}\}_{\bx\in\G}$ to satisfy $\sum_{j=1}^{\hm}\theta_{\hbx_j}-\sum_{j=1}^{\hm}\theta_{\hby_j}\neq 0$ in $\R/(2\pi\Z)$ and the equality \eqref{eq_remark_unitary} implies that 
\begin{equation*}
\<\psi_{\hbx_1\hxi_1}^*\cdots\psi_{\hbx_{\hm}\hxi_{\hm}}^*\psi_{\hby_{\hm}\hphi_{\hm}}\cdots\psi_{\hby_{1}\hphi_{1}}+\psi_{\hby_{1}\hphi_{1}}^*\cdots\psi_{\hby_{\hm}\hphi_{\hm}}^*\psi_{\hbx_{\hm}\hxi_{\hm}}\cdots\psi_{\hbx_{1}\hxi_{1}}\>_L=0.
\end{equation*}
\end{remark}

\section{Grassmann integral formulation of the correlation functions}\label{sec_formulation}
In this section we formulate the correlation functions as a limit of finite dimensional Grassmann integrals. To attain this goal, we follow steps. As a preliminary let us fix the way to abbreviate the notations.

\subsection{The correlation functions}\label{subsec_correlation_functions}
 To simplify presentations, from now we write $X^l=(\bx_1,\cdots,\bx_l),Y^l=(\by_1,\cdots,\by_l)\in(\Z^d)^l$, $\Xi^l=(\xi_1,\cdots,\xi_l),\Phi^l=(\phi_1,\cdots,\phi_l)\in\spin^l$ $(\forall l\in\N)$. To indicate the sites and the spins on which the correlation functions are defined, we use the notation $\hat{\cdot}$ and write 
\begin{equation*}
\begin{split}
&\hX^{\hm}=(\hbx_1,\cdots,\hbx_{\hm}),\ \hY^{\hm}=(\hby_1,\cdots,\hby_{\hm})\in(\Z^d)^{\hm},\\
&\hXi^{\hm}=(\hxi_1,\cdots,\hxi_{\hm}),\ \hPhi^{\hm}=(\hphi_1,\cdots,\hphi_{\hm})\in\spin^{\hm}.
\end{split}
\end{equation*}

We identify $X^l\in(\Z^d)^l$ as an element of $\G^l$ by periodicity without remarking when we are considering a problem defined on $\G^l$.

To derive the correlation functions systematically, we introduce real parameters \\
$\{\lambda(X^{\hm},Y^{\hm},\Xi^{\hm},\Phi^{\hm})\}_{X^{\hm},Y^{\hm}\in\G^{\hm},\Xi^{\hm},\Phi^{\hm}\in\spin^{\hm}}$ and define the coefficient \\
$U_{\lambda,l}(X^l,Y^l,\Xi^l,\Phi^l)$ by
\begin{equation*}
\begin{split}
U_{\lambda,l}(X^l,Y^l,\Xi^l,\Phi^l):=&1_{l\le \tn}U_{L,l}(X^l, \Xi^l,\Phi^l)\delta_{X_l,Y_l} \\
&+ 1_{l=\hm}(\lambda(X^{\hm},Y^{\hm},\Xi^{\hm},\Phi^{\hm})+\lambda(Y^{\hm},X^{\hm},\Phi^{\hm},\Xi^{\hm}))
\end{split}
\end{equation*}
for all $X^{l},Y^{l}\in\G^{l},\ \Xi^{l},\Phi^{l}\in\spin^{l}$, $l\in \{1,\cdots,\max\{\hm,\tn\}\}$. We see that
\begin{equation}\label{eq_condition_U_lambda}
\overline{U_{\lambda,l}(X^l,Y^l,\Xi^l,\Phi^l)}=U_{\lambda,l}(Y^l,X^l,\Phi^l,\Xi^l).
\end{equation}

Let us modify the interaction $V$ to contain the coefficients $\{U_{\lambda,l}(X^l,Y^l,\Xi^l,\Phi^l)\}$ and define
\begin{equation*}
\begin{split}
V_{\lambda}:=&\sum_{l=1}^{\max\{\hm,\tn\}}\sum_{(X^l,Y^l,\Xi^l,\Phi^l)\in\G^{2l}\times\spin^{2l}}U_{\lambda,l}(X^l,Y^l,\Xi^l,\Phi^l)\\
&\cdot\psi_{\bx_{1}\xi_{1}}^*\cdots\psi_{\bx_{l}\xi_{l}}^*\psi_{\by_{l}\phi_{l}}\cdots\psi_{\by_{1}\phi_{1}}.
\end{split}
\end{equation*}
We set $H_{\lambda}:=H_0+V_{\lambda}$, which is self-adjoint by the equality \eqref{eq_condition_U_lambda}. Also note that
\begin{equation*}
H_{\lambda}\Big|_{\lambda(X^{\hm},Y^{\hm},\Xi^{\hm},\Phi^{\hm})=0\atop \forall (X^{\hm},Y^{\hm}, \Xi^{\hm},\Phi^{\hm})\in\G^{2\hm}\times\spin^{2\hm}}=H.
\end{equation*}
\begin{lemma}\label{lem_derivative_correlation}
\begin{equation}\label{eq_derivative_correlation}
\begin{split}
&\<\psi_{\hbx_1\hxi_1}^*\cdots\psi_{\hbx_{\hm}\hxi_{\hm}}^*\psi_{\hby_{\hm}\hphi_{\hm}}\cdots\psi_{\hby_{1}\hphi_{1}}+\psi_{\hby_{1}\hphi_{1}}^*\cdots\psi_{\hby_{\hm}\hphi_{\hm}}^*\psi_{\hbx_{\hm}\hxi_{\hm}}\cdots\psi_{\hbx_1\hxi_1}\>_L\\
&\qquad=-\frac{1}{\beta}\frac{\partial}{\partial \lambda(\hX^{\hm},\hY^{\hm},\hXi^{\hm},\hPhi^{\hm})}\log\left(\frac{\Tr e^{-\beta H_{\lambda}}}{\Tr e^{-\beta H_0}}\right)\Bigg|_{\lambda(X^{\hm},Y^{\hm},\Xi^{\hm},\Phi^{\hm})=0\atop \forall (X^{\hm},Y^{\hm}, \Xi^{\hm},\Phi^{\hm})\in\G^{2\hm}\times\spin^{2\hm}}.
\end{split}
\end{equation}
\end{lemma}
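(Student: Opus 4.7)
The identity is a standard Feynman--Hellmann / generating-functional relation, so the plan is to (i) compute the partial derivative of $V_\lambda$ with respect to the parameter $\lambda(\hX^{\hm},\hY^{\hm},\hXi^{\hm},\hPhi^{\hm})$, and (ii) apply Duhamel's principle to the trace on the right-hand side.

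First, I would inspect the definition of $U_{\lambda,l}$ at $l=\hm$, which contains the term $\lambda(X^{\hm},Y^{\hm},\Xi^{\hm},\Phi^{\hm}) + \lambda(Y^{\hm},X^{\hm},\Phi^{\hm},\Xi^{\hm})$. Differentiating the sum defining $V_\lambda$ with respect to $\lambda(\hX^{\hm},\hY^{\hm},\hXi^{\hm},\hPhi^{\hm})$, the first summand picks out the configuration $(X^{\hm},Y^{\hm},\Xi^{\hm},\Phi^{\hm})=(\hX^{\hm},\hY^{\hm},\hXi^{\hm},\hPhi^{\hm})$, while the second summand (with the roles of $X,Y$ and $\Xi,\Phi$ swapped) picks out $(X^{\hm},Y^{\hm},\Xi^{\hm},\Phi^{\hm})=(\hY^{\hm},\hX^{\hm},\hPhi^{\hm},\hXi^{\hm})$. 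This yields
\begin{equation*}
\begin{split}
\frac{\partial V_\lambda}{\partial \lambda(\hX^{\hm},\hY^{\hm},\hXi^{\hm},\hPhi^{\hm})} &= \psi_{\hbx_1\hxi_1}^*\cdots\psi_{\hbx_{\hm}\hxi_{\hm}}^*\psi_{\hby_{\hm}\hphi_{\hm}}\cdots\psi_{\hby_{1}\hphi_{1}} \\
&\quad + \psi_{\hby_{1}\hphi_{1}}^*\cdots\psi_{\hby_{\hm}\hphi_{\hm}}^*\psi_{\hbx_{\hm}\hxi_{\hm}}\cdots\psi_{\hbx_{1}\hxi_{1}}.
\end{split}
\end{equation*}
Since $H_0$ does not depend on $\lambda$, the same expression equals $\partial H_\lambda/\partial\lambda(\hX^{\hm},\hY^{\hm},\hXi^{\hm},\hPhi^{\hm})$.

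Next, I would apply Duhamel's formula $\frac{d}{d\lambda}e^{-\beta H_\lambda} = -\int_0^1 e^{-s\beta H_\lambda}\beta\,\frac{\partial H_\lambda}{\partial\lambda}\,e^{-(1-s)\beta H_\lambda}\,ds$ to the numerator $\Tr e^{-\beta H_\lambda}$. Using the cyclicity of the trace, the $s$-integral collapses and one obtains the simple formula
\begin{equation*}
\frac{\partial}{\partial\lambda(\hX^{\hm},\hY^{\hm},\hXi^{\hm},\hPhi^{\hm})}\Tr e^{-\beta H_\lambda} = -\beta\,\Tr\!\left(e^{-\beta H_\lambda}\frac{\partial H_\lambda}{\partial\lambda(\hX^{\hm},\hY^{\hm},\hXi^{\hm},\hPhi^{\hm})}\right).
\end{equation*}
The denominator $\Tr e^{-\beta H_0}$ in the argument of the logarithm is $\lambda$-independent and therefore contributes nothing to the derivative, so that
\begin{equation*}
-\frac{1}{\beta}\frac{\partial}{\partial\lambda(\hX^{\hm},\hY^{\hm},\hXi^{\hm},\hPhi^{\hm})}\log\!\left(\frac{\Tr e^{-\beta H_\lambda}}{\Tr e^{-\beta H_0}}\right) = \frac{\Tr\!\left(e^{-\beta H_\lambda}\,\partial_\lambda H_\lambda\right)}{\Tr e^{-\beta H_\lambda}}.
\end{equation*}

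Finally, evaluating at $\lambda\equiv 0$ makes $H_\lambda$ reduce to $H$, so the right-hand side is precisely the thermal average $\<\partial_\lambda H_\lambda|_{\lambda=0}\>_L$, which by the first step equals the quantity on the left-hand side of \eqref{eq_derivative_correlation}. The argument involves no analytic subtlety because $H_\lambda$ is a polynomial (hence $C^\infty$) function of the finitely many parameters $\lambda(\cdot)$ and acts on a finite-dimensional Hilbert space; the only step requiring any care is the bookkeeping showing that both the ``electron'' and ``hole'' orderings on the left-hand side appear, which is the very reason for the symmetrization $\lambda(X^{\hm},Y^{\hm},\Xi^{\hm},\Phi^{\hm})+\lambda(Y^{\hm},X^{\hm},\Phi^{\hm},\Xi^{\hm})$ built into $U_{\lambda,\hm}$ (and which is needed to ensure the self-adjointness of $V_\lambda$ through \eqref{eq_condition_U_lambda}).
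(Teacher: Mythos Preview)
Your argument is correct and is essentially the standard proof the paper has in mind: the paper's own proof consists of a one-line reference to \cite[\mbox{Lemma 2.1, Lemma 2.3}]{K}, where the cited Lemma~2.3 is precisely the Duhamel-type derivative formula for $\Tr e^{-\beta H_\lambda}$ that you invoke. Your explicit bookkeeping of how the symmetrized coupling $\lambda(X^{\hm},Y^{\hm},\Xi^{\hm},\Phi^{\hm})+\lambda(Y^{\hm},X^{\hm},\Phi^{\hm},\Xi^{\hm})$ produces both operator orderings is exactly the content that the reference is meant to convey.
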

\begin{proof} The proof is parallel to that of \cite[\mbox{Lemma 2.1}]{K}, based on \cite[\mbox{Lemma 2.3}]{K}.
\end{proof}

\subsection{The perturbation series}
In order to characterize the correlation functions as a limit of finite dimensional Grassmann integrals, we proceed in the following steps. Firstly we expand the partition function $\Tr e^{-\beta H_{\lambda}}/\Tr e^{-\beta H_0}$ into a perturbation series of the variables \\
$\{U_{\lambda,l}(X^l,Y^l,\Xi^l,\Phi^l)\}$. Secondly we replace the integrals over $[0,\beta)$ contained in the perturbation series by the Riemann sums to derive a fully discrete analog of the perturbation series. We then show that the discretized perturbation series converges to the original one by passing the parameter defining the Riemann sums to infinity. The discretized perturbation series is formulated into the Grassmann Gaussian integral involving only finite Grassmann variables. Combining the Grassmann integral formulation of the discretized partition function with the equality \eqref{eq_derivative_correlation} completes the characterization. 

The first step results in the following proposition.
\begin{proposition}\label{prop_perturbation_series}
\begin{equation}\label{eq_perturbation_series}
\begin{split}
&\frac{\Tr e^{-\beta H_{\lambda}}}{\Tr e^{-\beta H_0}}=1+\sum_{m=1}^{\infty}\frac{(-1)^m}{m!}\\
&\cdot\prod_{k=1}^m\left(\sum_{l_k=1}^{\max\{\hm,\tn\}}\sum_{(X_k^{l_k},Y_k^{l_k}, \Xi_k^{l_k},\Phi_k^{l_k})\in\G^{2l_k}\times\spin^{2l_k}}U_{\lambda,l_k}(X_k^{l_k},Y_k^{l_k}, \Xi_k^{l_k},\Phi_k^{l_k})\int_0^{\beta}ds_k\right)\\
&\cdot\det(C((\widetilde{\bx\xi s})_p,(\widetilde{\by\phi s})_q))_{1\le p,q\le \sum_{k=1}^ml_k},
\end{split}
\end{equation}
where $X_k^{l_k}:=(\bx_{k,1},\bx_{k,2},\cdots,\bx_{k,l_k})$, $Y_k^{l_k}:=(\by_{k,1},\by_{k,2},\cdots,\by_{k,l_k})$,\\
$\Xi_k^{l_k}:=(\xi_{k,1},\xi_{k,2},\cdots,\xi_{k,l_k})$, $\Phi_k^{l_k}:=(\phi_{k,1},\phi_{k,2},\cdots,\phi_{k,l_k})$, and 
\begin{equation}\label{eq_covariance_argument}
(\widetilde{\bx\xi s})_p:=(\bx_{v+1,u},\xi_{v+1,u},s_{v+1}),\ (\widetilde{\by\phi s})_p:=(\by_{v+1,u},\phi_{v+1,u},s_{v+1})
\end{equation}
for $p=\sum_{k=1}^vl_k+u$, $u\in\{1,\cdots,l_{v+1}\}$, $v\in\{0,\cdots,m-1\}$.
The covariance $C(\bx\xi x,\by \phi y)$ is given by
\begin{equation}\label{eq_covariance}
C(\bx\xi x,\by\phi y):=\frac{\delta_{\xi,\phi}}{L^d}\sum_{\bk\in\G^*}e^{i\<\bk,\by-\bx\>}e^{-(y-x)E_{\bk}}\left(\frac{1_{y-x\le
0}}{1+e^{\beta E_{\bk}}}-\frac{1_{y-x>
0}}{1+e^{-\beta E_{\bk}}}\right)
\end{equation}
$(\forall (\bx,\xi, x),(\by, \phi, y)\in\G\times\spin\times [0,\beta))$ with the dispersion relation
\begin{equation}\label{eq_dispersion_relation}
E_{\bk}:=-2t\sum_{j=1}^d\cos(\<\bk,\be_j\>) - 4t'\cdot 1_{d\ge 2}\sum_{j,k = 1\atop j<k}^d\cos(\<\bk,\be_j\>)\cos(\<\bk,\be_k\>)-\mu.
\end{equation}
\end{proposition}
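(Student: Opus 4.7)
The plan is to derive (3.10) in three main steps: a Duhamel expansion of $e^{-\beta H_{\lambda}}$ around $H_0$, symmetrization of the resulting time-ordered integrals, and application of the thermal Wick theorem to the quasi-free state defined by $H_0$, together with an explicit computation of the free imaginary-time two-point function.

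First I would write the standard Duhamel (Dyson) series
$$e^{-\beta H_{\lambda}} = e^{-\beta H_0}\sum_{m=0}^{\infty}(-1)^m\int_{0\le s_1\le\cdots\le s_m\le\beta}V_{\lambda}(s_1)\cdots V_{\lambda}(s_m)\,ds_1\cdots ds_m,$$
where $V_{\lambda}(s):=e^{sH_0}V_{\lambda}e^{-sH_0}$; since the Fermionic Fock space is finite dimensional, convergence in operator norm is immediate. Taking the trace, dividing by $\Tr e^{-\beta H_0}$, and using the $1/m!$ symmetrization to replace the ordered simplex by $[0,\beta)^m$ with imaginary-time ordering $T$, I obtain
$$\frac{\Tr e^{-\beta H_{\lambda}}}{\Tr e^{-\beta H_0}} = 1 + \sum_{m=1}^{\infty}\frac{(-1)^m}{m!}\int_{[0,\beta)^m}\langle TV_{\lambda}(s_1)\cdots V_{\lambda}(s_m)\rangle_{L,0}\,ds_1\cdots ds_m,$$
where $\langle\cdot\rangle_{L,0}:=\Tr(e^{-\beta H_0}\cdot)/\Tr e^{-\beta H_0}$. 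Substituting the definition of $V_{\lambda}$ in each factor produces the prescribed sums over $l_k$ and $(X_k^{l_k},Y_k^{l_k},\Xi_k^{l_k},\Phi_k^{l_k})$ with coefficients $U_{\lambda,l_k}(\cdots)$ multiplying the time-ordered product of $2\sum_k l_k$ imaginary-time-evolved creation and annihilation operators.

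Next I would apply the thermal Wick theorem. Because $H_0$ is quadratic, $\langle\cdot\rangle_{L,0}$ is a quasi-free state, and the time-ordered average of any balanced product of $\sum_k l_k$ creation and $\sum_k l_k$ annihilation operators equals the determinant of the matrix of pair contractions, with sign. Indexing the operators left-to-right exactly as in (3.11) identifies this determinant with $\det(C((\widetilde{\bx\xi s})_p,(\widetilde{\by\phi s})_q))_{1\le p,q\le \sum_k l_k}$. The two-point function $C(\bx\xi x,\by\phi y)=\langle T\psi_{\bx\xi}(x)\psi^*_{\by\phi}(y)\rangle_{L,0}$ is then computed by Fourier diagonalization: since $T(\bx\xi,\by\phi)$ is spin-diagonal and translation-invariant on $\G$, its one-particle eigenvalues are the $E_{\bk}$ of (3.13), and the resulting free-fermion imaginary-time propagator coincides with (3.12), the two pieces corresponding to the cases $y-x\le 0$ and $y-x>0$ arising from the time ordering.

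The main obstacle is the combinatorial bookkeeping in the Wick step: one has to check that the perfect-matching sum over contractions of all $\psi^*$'s with all $\psi$'s, together with the fermionic signs from reordering, assembles into a single determinant of size $\sum_k l_k$ rather than a product of block determinants or a determinant modified by a combinatorial prefactor. The sign convention in (3.12) and the left-to-right indexing in (3.11) are chosen precisely so this works. The proof for the unperturbed case $V_{\lambda}=V$ is carried out in detail in \cite[Lemma 2.3]{K}, on which Lemma \ref{lem_derivative_correlation} also relies; extending it to $V_{\lambda}$ requires nothing beyond replacing $U_{L,l}$ by $U_{\lambda,l}$ and enlarging the range of $l_k$ to $\{1,\ldots,\max\{\hm,\tn\}\}$, so the present proposition reduces to an invocation of that lemma.
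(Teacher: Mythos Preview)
Your proposal is correct and follows essentially the same route as the paper: Dyson/Duhamel expansion of $e^{-\beta H_{\lambda}}$ around $H_0$, symmetrization of the simplex integral into $\frac{1}{m!}\int_{[0,\beta)^m}$ with temperature ordering, application of the thermal Wick theorem to collapse the time-ordered product into a single determinant, and Fourier diagonalization of the free propagator. The paper carries this out by quoting the Dyson expansion \cite[\mbox{Lemma B.3}]{K}, the temperature-ordering machinery \cite[\mbox{Definition B.2, Lemma B.7, Lemma B.9}]{K}, and the explicit covariance computation \cite[\mbox{Lemma B.10}]{K}; your reference to \cite[\mbox{Lemma 2.3}]{K} is slightly off (that lemma underlies Lemma~\ref{lem_derivative_correlation}, not the present proposition), but the substance of your argument matches the paper's.
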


\begin{proof}
For any operator $\cO$ defined on the Fermionic Fock space, let $\cO(s)$ denote $e^{sH_0}\cO e^{-sH_0}$ for $s\in\R$ and $\<\cO\>_{0,L}$ denote $\Tr (e^{-\beta H_0}\cO)/\Tr e^{-\beta H_0}$. We can apply \cite[\mbox{Lemma B.3}]{K} to derive the equality
$$e^{-\beta H_{\lambda}}=e^{-\beta H_0}+e^{-\beta H_0}\sum_{m=1}^{\infty}(-1)^m\int_{[0,\beta)^m}ds_1\cdots ds_m1_{s_1>\cdots> s_m}V_{\lambda}(s_1)\cdots V_{\lambda}(s_m),$$
which leads to 
\begin{equation}\label{eq_partition_expansion}
\begin{split}
&\frac{\Tr e^{-\beta H_{\lambda}}}{\Tr e^{-\beta H_0}}=1+\sum_{m=1}^{\infty}(-1)^m\int_{[0,\beta)^m}ds_1\cdots ds_m1_{s_1>\cdots>s_m}\\
&\qquad\cdot\prod_{k=1}^m\left(\sum_{l_k=1}^{\max\{\hm,\tn\}}\sum_{(X_k^{l_k},Y_k^{l_k}, \Xi_k^{l_k},\Phi_k^{l_k})\in\G^{2l_k}\times\spin^{2l_k}}U_{\lambda,l_k}(X_k^{l_k},Y_k^{l_k}, \Xi_k^{l_k},\Phi_k^{l_k})\right)\\
&\qquad\cdot\<\psi_{\bx_{1,1}\xi_{1,1}}^*(s_1)\cdots\psi_{\bx_{1,l_1}\xi_{1,l_1}}^*(s_1)\psi_{\by_{1,l_1}\phi_{1,l_1}}(s_1)\cdots \psi_{\by_{1,1}\phi_{1,1}}(s_1)\\
&\qquad\quad \cdots \psi_{\bx_{m,1}\xi_{m,1}}^*(s_m)\cdots\psi_{\bx_{m,l_m}\xi_{m,l_m}}^*(s_m)\psi_{\by_{m,l_m}\phi_{m,l_m}}(s_m)\cdots \psi_{\by_{m,1}\phi_{m,1}}(s_m)\>_{0,L}.
\end{split}
\end{equation}
By recalling the definition \cite[\mbox{Definition B.2}]{K} of the temperature-ordering operator and its properties \cite[\mbox{Lemma B.7, Lemma B.9}]{K}, we can deduce \eqref{eq_perturbation_series} from \eqref{eq_partition_expansion}. The covariance $C(\bx\xi x,\by \phi y)$ and the dispersion relation $E_{\bk}$ have been proved to have the forms \eqref{eq_covariance}-\eqref{eq_dispersion_relation} in \cite[\mbox{Lemma B.10}]{K}.
\end{proof}

As the second step toward the Grassmann integral formulation, we introduce the discrete analog of the expansion \eqref{eq_perturbation_series}. By taking a parameter $h\in 2\N/\beta$, we define the discrete sets $[0,\beta)_h$ and $[-\beta,\beta)_h$ by
$$[0,\beta)_h:=\left\{0,\frac{1}{h},\frac{2}{h},\cdots,\beta - \frac{1}{h}\right\},\ [-\beta,\beta)_h:=\left\{-\beta,-\beta+\frac{1}{h},\cdots,\beta - \frac{1}{h}\right\}.
$$  
Note that $\sharp[0,\beta)_h=\beta h$, $\sharp[-\beta,\beta)_h=2\beta h$. If the temperature variables $x,y$ are confined in $[0,\beta)_h$, we write 
$$C_h(\bx\xi x,\by \phi y)=C(\bx\xi x,\by \phi y).$$
We then define $(\Tr e^{-\beta H_{\lambda}}/\Tr e^{-\beta H_0})_h$ by
\begin{equation}\label{eq_discrete_perturbation_series}
\begin{split}
&\left(\frac{\Tr e^{-\beta H_{\lambda}}}{\Tr e^{-\beta H_0}}\right)_h:=1+\sum_{m=1}^{2L^d\beta h}\frac{(-1)^m}{m!}\\
&\cdot\prod_{k=1}^m\left(\sum_{l_k=1}^{\max\{\hm,\tn\}}\sum_{(X_k^{l_k},Y_k^{l_k}, \Xi_k^{l_k},\Phi_k^{l_k})\in\G^{2l_k}\times\spin^{2l_k}}U_{\lambda,l_k}(X_k^{l_k},Y_k^{l_k}, \Xi_k^{l_k},\Phi_k^{l_k})\frac{1}{h}\sum_{s_k\in [0,\beta)_h}\right)\\
&\qquad \cdot\det(C_h((\widetilde{\bx\xi s})_p,(\widetilde{\by\phi s})_q))_{1\le p,q\le \sum_{k=1}^ml_k},
\end{split}
\end{equation}
where the variables $(\widetilde{\bx\xi s})_p,(\widetilde{\by\phi s})_p\in\G\times\spin\times[0,\beta)_h$ are defined by the same rule as \eqref{eq_covariance_argument}. Since any determinant made up of the elements $C_h(\bx\xi x, \by\phi y)$ vanishes if the size of the matrix exceeds $2L^d\beta h (=\sharp \G\times\spin\times [0,\beta)_h)$, the sum with respect to $m$ is taken only up to $m=2L^d\beta h$ in \eqref{eq_discrete_perturbation_series}.

\begin{remark}
The diagonalization of the covariance matrix \\
$(C_h(\bx \xi x,\by \phi y))_{(\bx,\xi,x),(\by,\phi,y)\in\G\times\spin \times [0,\beta)_h}$ was presented in \cite[\mbox{Appendix C}]{K} for any $h\in 2\N/\beta$. To refer to this result we take $h$ from $2\N/\beta$.
\end{remark} 

The following lemma states that the partition function $\Tr e^{-\beta H_{\lambda}}/\Tr e^{-\beta H_0}$ in Lemma \ref{lem_derivative_correlation} can be replaced by $(\Tr e^{-\beta H_{\lambda}}/\Tr e^{-\beta H_0})_h$.

\begin{lemma}\label{lem_correlation_h_limit}
For any $r>0$ there exists $N_0\in\N$ such that $\Re (\Tr
 e^{-\beta H_{\lambda}}/\Tr e^{-\beta H_0})_h$\\
$>0$ for all $h\in
 2\N/\beta$ with $h\ge 2N_0/\beta$ and all $\lambda(X^{\hm},Y^{\hm},\Xi^{\hm},\Phi^{\hm})\in \R$ with \\
$|\lambda(X^{\hm},Y^{\hm},\Xi^{\hm},\Phi^{\hm})|\le r$. Moreover,
\begin{equation}\label{eq_correlation_h_limit}
\begin{split}
&\<\psi_{\hbx_1\hxi_1}^*\cdots\psi_{\hbx_{\hm}\hxi_{\hm}}^*\psi_{\hby_{\hm}\hphi_{\hm}}\cdots\psi_{\hby_{1}\hphi_{1}}+\psi_{\hby_{1}\hphi_{1}}^*\cdots\psi_{\hby_{\hm}\hphi_{\hm}}^*\psi_{\hbx_{\hm}\hxi_{\hm}}\cdots\psi_{\hbx_1\hxi_1}\>_L\\
&=-\frac{1}{\beta}\lim_{h\to +\infty\atop h\in 2\N/\beta}\frac{\partial}{\partial \lambda(\hX^{\hm},\hY^{\hm},\hXi^{\hm},\hPhi^{\hm})}\log\left(\frac{\Tr e^{-\beta H_{\lambda}}}{\Tr e^{-\beta H_0}}\right)_h\Bigg|_{\lambda(X^{\hm},Y^{\hm},\Xi^{\hm},\Phi^{\hm})=0\atop\forall (X^{\hm},Y^{\hm}, \Xi^{\hm},\Phi^{\hm})\in\G^{2\hm}\times\spin^{2\hm}},
\end{split}
\end{equation}
where for $z\in\C$ with $\Re z>0$, $\log z$ is defined by taking the principal value; 
$$\log z:= \log|z| +i\Arg z,\ \Arg z\in (-\pi/2,\pi/2).$$
\end{lemma}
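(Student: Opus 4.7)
My plan is to establish two ingredients: (i) that $(\Tr e^{-\beta H_\lambda}/\Tr e^{-\beta H_0})_h$ converges to $\Tr e^{-\beta H_\lambda}/\Tr e^{-\beta H_0}$ as $h\to +\infty$, uniformly for $|\lambda(X^{\hm},Y^{\hm},\Xi^{\hm},\Phi^{\hm})|\le r$ over all $(X^{\hm},Y^{\hm},\Xi^{\hm},\Phi^{\hm})$; and (ii) that the corresponding $\lambda(\hX^{\hm},\hY^{\hm},\hXi^{\hm},\hPhi^{\hm})$-derivatives converge. Combined with Lemma \ref{lem_derivative_correlation}, this immediately yields \eqref{eq_correlation_h_limit}. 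For (i), the continuous series of Proposition \ref{prop_perturbation_series} converges absolutely, since at fixed $L$ the Fock space is finite dimensional. Each summand of the discrete expansion \eqref{eq_discrete_perturbation_series} is an $m$-fold Riemann sum, with spacing $1/h$, for the corresponding $m$-fold integral on $[0,\beta)^m$ appearing in \eqref{eq_perturbation_series}; the integrand is bounded and continuous off the finite union of diagonals $\{s_i=s_j\}$, so the $m$-th term converges as $h\to +\infty$ for every fixed $m$.

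To exchange this pointwise-in-$m$ limit with the sum over $m$ I would produce a dominating estimate of the form $|a_m^{(h)}(\lambda)|\le B^m/m!$, with $B=B(L,\beta,t,t',\mu,r)$ independent of $h$ and of $\lambda$ on the ball $\{|\lambda|\le r\}$. The coefficient product $\prod_k U_{\lambda,l_k}$ is obviously uniformly bounded, and the key point is that the determinant of $C_h$ is controlled by Gram's inequality via the Gram representation of $C_h$ recorded in \cite[Appendix C]{K}, whose Gram vectors have norms bounded uniformly in $h$. Dominated convergence then upgrades the termwise limit to uniform convergence of the full series on the compact ball. Because $\Tr e^{-\beta H_\lambda}/\Tr e^{-\beta H_0}$ is a strictly positive real number depending continuously on $\lambda$, it has a positive infimum $c(r)>0$ on $\{|\lambda|\le r\}$, and the uniform convergence supplies some $N_0\in\N$ with $\Re (\Tr e^{-\beta H_\lambda}/\Tr e^{-\beta H_0})_h \ge c(r)/2>0$ for every $h\ge 2N_0/\beta$ and every admissible $\lambda$, proving the first half of the lemma.

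For the derivative I would allow the single scalar $\lambda(\hX^{\hm},\hY^{\hm},\hXi^{\hm},\hPhi^{\hm})$ to be complex while keeping the other $\lambda$-parameters real. Since $U_{\lambda,l}$ is affine in $\lambda$, the truncated discrete partition function is a polynomial in this variable, hence entire, and the same uniform bound extends to a complex polydisc around the origin. Vitali's convergence theorem (or equivalently Cauchy's integral formula applied termwise to the uniformly convergent sequence) then yields uniform convergence of the $\lambda(\hX^{\hm},\hY^{\hm},\hXi^{\hm},\hPhi^{\hm})$-derivatives on a smaller polydisc and in particular at $\lambda=0$. Since the limiting values lie in the open right half-plane, the principal branch of $\log$ is analytic on a neighborhood, and so differentiation of $\log(\cdot)_h$ passes to the limit. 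Invoking Lemma \ref{lem_derivative_correlation} on the continuous side then gives \eqref{eq_correlation_h_limit}.

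The central obstacle is the $h$-independent determinant bound used in step (i). The upper summation limit $m=2L^d\beta h$ in \eqref{eq_discrete_perturbation_series} diverges with $h$, so a naive bound that degrades with the matrix size $\sum_k l_k$ would not produce an $h$-uniform majorant and the dominated convergence argument would fail. The rescue is precisely the Gram/Pedra--Salmhofer-type estimate on $\det C_h$ which stays of order $B^m$ regardless of matrix size; verifying that the Gram construction in \cite[Appendix C]{K} survives restriction to $[0,\beta)_h$ with constants independent of $h$ is the one technical point that really has to be checked.
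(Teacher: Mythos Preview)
Your outline is correct and mirrors the paper's own proof almost step for step: treat the coefficients as complex parameters, bound each term of \eqref{eq_discrete_perturbation_series} by a summable majorant via a determinant bound, apply dominated convergence for the Riemann sums and then for the $m$-sum to get locally uniform convergence $P_h\to P$, deduce positivity of the real part from positivity of the continuous partition function, and pass to derivatives via Cauchy's integral formula before invoking Lemma \ref{lem_derivative_correlation}.

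One correction on the citation: the $h$-uniform determinant bound you need is not a plain Gram inequality coming from \cite[Appendix~C]{K}; that appendix records the \emph{diagonalization} of $C_h$, which gives $\det C_h\neq 0$ but not a size-independent bound on subdeterminants. Because of the time-ordering discontinuity $1_{y-x\le 0}$ versus $1_{y-x>0}$ in \eqref{eq_covariance}, $C$ is not a bona fide Gram matrix, and the paper (like you in your last paragraph) invokes Pedra--Salmhofer's generalized Gram bound \cite[Theorem~2.4]{PS} to obtain $|\det(C(\cdot))_{1\le p,q\le n}|\le 4^n$ uniformly in $n$ and $h$; this is exactly \eqref{eq_determinant_bound_appl} in the paper's proof. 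With that substitution your argument goes through without further change.
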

\begin{proof}
By considering
 $\{U_{\lambda,l}(X^l,Y^l,\Xi^l,\Phi^l)\}_{l\in\{1,\cdots,\max\{\hm,\tn\}\}, (X^l,Y^l,\Xi^l,\Phi^l)\in\G^{2l}\times\spin^{2l}}$ as mutually independent, complex multi-variables, we define the functions \\
$P(\{U_{\lambda,l}(X^l,Y^l,\Xi^l,\Phi^l)\})$ and 
$P_h(\{U_{\lambda,l}(X^l,Y^l,\Xi^l,\Phi^l)\})$ by the right hand side of
 \eqref{eq_perturbation_series} and
 \eqref{eq_discrete_perturbation_series}, respectively. We show that
 $P_h$ converges to $P$ as $h\to +\infty$ locally uniformly with respect
 to the variables.
 
Pedra-Salmhofer's determinant bound \cite[\mbox{Theorem 2.4}]{PS} (see also Proposition \ref{prop_extended_determinant_bound} below for an extended statement) implies that
\begin{equation}\label{eq_determinant_bound_appl}
|\det(C(\bx_p\xi_ps_p,\by_q\phi_qt_q))_{1\le p,q\le n}|\le 4^{n}
\end{equation}
for any $(\bx_p,\xi_p,s_p),(\by_p,\phi_p,t_p)\in \G\times\spin \times
 [0,\beta)$ $(\forall p\in\{1,\cdots,n\})$. Thus, if all the variables satisfy the inequality $|U_{\lambda,l}(X^l,Y^l,\Xi^l,\Phi^l)|\le r$ for an $r>0$, we have\begin{equation}\label{eq_perturbation_bound}
\begin{split}
\Bigg|&\frac{(-1)^m}{m!}\prod_{k=1}^m\Bigg(\sum_{l_k=1}^{\max\{\hm,\tn\}}\sum_{(X_k^{l_k},Y_k^{l_k}, \Xi_k^{l_k},\Phi_k^{l_k})\in\G^{2l_k}\times\spin^{2l_k}}U_{\lambda,l_k}(X_k^{l_k},Y_k^{l_k}, \Xi_k^{l_k},\Phi_k^{l_k})\Bigg)\\
&\quad\cdot\Bigg(\prod_{k=1}^m\int_0^{\beta}dt_k\det(C((\widetilde{\bx\xi t})_p,(\widetilde{\by\phi t})_q))_{1\le p,q\le \sum_{k=1}^ml_k}\\
&\quad\qquad-\prod_{k=1}^m\frac{1}{h}\sum_{s_k\in [0,\beta)_h}\det(C_h((\widetilde{\bx\xi s})_p,(\widetilde{\by\phi s})_q))_{1\le p,q\le \sum_{k=1}^ml_k}\Bigg)\Bigg|\\
&\le\frac{2}{m!}\prod_{k=1}^{m}\left(\beta r\sum_{l_k=1}^{\max\{\hm,\tn\}}(2L^d)^{2l_k}\right) 4^{\sum_{k=1}^ml_k},
\end{split}
\end{equation}
where $(\widetilde{\bx\xi t})_p,(\widetilde{\by\phi t})_q,(\widetilde{\bx\xi
 s})_p,(\widetilde{\by\phi s})_q$ are defined in the same way as in
 \eqref{eq_covariance_argument}. The right hand side of \eqref{eq_perturbation_bound} is summable over $m\in\N$.

Fix any $(X_k^{l_k},Y_k^{l_k}, \Xi_k^{l_k},\Phi_k^{l_k})\in\G^{2l_k}\times\spin^{2l_k}$ $(\forall k\in\{1,\cdots,m\})$. We define a function $g_h:[0,\beta)^m\to \C$ by
$$g_h(t_1,\cdots,t_m):=\det(C_h((\widetilde{\bx\xi s})_p,(\widetilde{\by\phi s})_q))_{1\le p,q\le \sum_{k=1}^ml_k},$$
where $s_j\in [0,\beta)_h$ satisfies $t_j\in [s_j,s_j+1/h)$ $(\forall j\in\{1,\cdots,m\})$ and
 $(\widetilde{\bx\xi s})_p$, $(\widetilde{\by\phi s})_q$ \\
$(p,q\in \{1,\cdots,\sum_{k=1}^ml_k\})$ are defined by the same rule as \eqref{eq_covariance_argument}.

Since $C(\bx\xi s,\by \phi t)$ is continuous with respect to the variables
 $(s,t)\in[0,\beta)^2$ almost everywhere, so is $\det(C((\widetilde{\bx\xi t})_p,(\widetilde{\by\phi t})_q))_{1\le p,q\le \sum_{k=1}^ml_k}$ with respect to \\
$(t_1,\cdots,t_m)\in[0,\beta)^m$, and thus  
$$\lim_{h\to+\infty\atop h\in 2\N/\beta}g_h(t_1,\cdots,t_m)=\det(C((\widetilde{\bx\xi t})_p,(\widetilde{\by\phi t})_q))_{1\le p,q\le \sum_{k=1}^ml_k}$$
for a.e. $(t_1,\cdots,t_m)\in [0,\beta)^m$. Hence, by \eqref{eq_determinant_bound_appl} and the dominated convergence theorem for $L^1([0,\beta)^m)$, we have
\begin{equation}\label{eq_convergence_determinant}
\begin{split}
&\lim_{h\to +\infty\atop h\in 2\N/\beta}\Bigg|\prod_{k=1}^m\int_0^{\beta}dt_k\det(C((\widetilde{\bx\xi t})_p,(\widetilde{\by\phi t})_q))_{1\le p,q\le \sum_{k=1}^ml_k}\\
&\qquad\qquad\qquad-\prod_{k=1}^m\frac{1}{h}\sum_{s_k\in [0,\beta)_h}\det(C_h((\widetilde{\bx\xi s})_p,(\widetilde{\by\phi s})_q))_{1\le p,q\le \sum_{k=1}^ml_k}\Bigg|\\
&=\lim_{h\to +\infty\atop h\in 2\N/\beta}\Bigg|\prod_{k=1}^m\int_0^{\beta}dt_k\left(\det(C((\widetilde{\bx\xi t})_p,(\widetilde{\by\phi t})_q))_{1\le p,q\le \sum_{k=1}^ml_k}-g_h(t_1,\cdots,t_m)\right)\Bigg|\\
&=0.
\end{split}
\end{equation}

By \eqref{eq_perturbation_bound} and \eqref{eq_convergence_determinant}
 we can again apply the dominated convergence theorem for $l^1(\N)$ to show that\begin{equation}\label{eq_convergence_partition_function}
\begin{split}
\lim_{h\to +\infty\atop h\in 2\N/\beta}\sup_{\forall
 U_{\lambda,l}(X^l,Y^l,\Xi^l,\Phi^l)\in\C\atop \text{with }|U_{\lambda,l}(X^l,Y^l, \Xi^l,\Phi^l)|\le r}\Bigg|&P\left(\{U_{\lambda,l}(X^l,Y^l, \Xi^l,\Phi^l)\}\right)\\
&-P_h\left(\{U_{\lambda,l}(X^l,Y^l, \Xi^l,\Phi^l)\}\right)\Bigg|=0
\end{split}
\end{equation}
for all $r>0$. 

Since the multi-variable function $P-P_h$ is entirely
 analytic, the convergence property
 \eqref{eq_convergence_partition_function} and Cauchy's integral formula
 prove that  
\begin{equation}\label{eq_convergence_derivative}
\begin{split}
\lim_{h\to +\infty\atop h\in 2\N/\beta}&\sup_{\forall
 U_{\lambda,l}(X^l,Y^l,\Xi^l,\Phi^l)\in\C\atop\text{with } |U_{\lambda,l}(X^l,Y^l, \Xi^l,\Phi^l)|\le r}\\
&\cdot\Bigg|\frac{\partial}{\partial U_{\lambda,k}(X^k,Y^k, \Xi^k,\Phi^k)}\Bigg( P\left(\{U_{\lambda,l}(X^l,Y^l, \Xi^l,\Phi^l)\}\right)\\
&\quad-P_h\left(\{U_{\lambda,l}(X^l,Y^l, \Xi^l,\Phi^l)\}\right)\Bigg)\Bigg|=0
\end{split}
\end{equation}
for all $r>0$ and any $k\in\{1,\cdots,\max\{\hm,\tn\}\}$, $(X^{k},Y^{k}, \Xi^{k},\Phi^{k})\in\G^{2k}\times\spin^{2k}$.  

Since $\Tr e^{-\beta H_{\lambda}}/\Tr e^{-\beta H_0}>0$, the uniform
 convergence property \eqref{eq_convergence_partition_function} verifies the first statement of the lemma on $\Re (\Tr
 e^{-\beta H_{\lambda}}/\Tr e^{-\beta H_0})_h$.

Note the equality that
\begin{equation}\label{eq_equality_differential_operator}
\begin{split}
&\frac{\partial}{\partial
 \lambda(\hX^{\hm},\hY^{\hm},\hXi^{\hm},\hPhi^{\hm})}\log\left(\frac{\Tr
 e^{-\beta H_{\lambda}}}{\Tr e^{-\beta H_0}}\right)_h\\
&\quad=\frac{\left(\frac{\partial}{\partial
 U_{\lambda,\hm}(\hX^{\hm},\hY^{\hm},\hXi^{\hm},\hPhi^{\hm})}+\frac{\partial}{\partial
 U_{\lambda,\hm}(\hY^{\hm},\hX^{\hm},\hPhi^{\hm},\hXi^{\hm})}\right)P_h(\{U_{\lambda,l}(X^l,Y^l, \Xi^l,\Phi^l)\})}{P_h(\{U_{\lambda,l}(X^l,Y^l, \Xi^l,\Phi^l)\})}.
\end{split}
\end{equation}

Combining the convergence properties \eqref{eq_convergence_partition_function}-\eqref{eq_convergence_derivative} and the equality \eqref{eq_equality_differential_operator} with \eqref{eq_derivative_correlation} yields the equality \eqref{eq_correlation_h_limit}.
\end{proof}

\subsection{The Grassmann integral formulation}
As the third step we formulate $(\Tr e^{-\beta H_{\lambda}}/\Tr e^{-\beta H_0})_h$ into a Grassmann Gaussian integral on the finite dimensional Grassmann algebra \\
$\{\opsi_{\bx\xi x},\psi_{\bx\xi x}\}_{(\bx,\xi, x)\in \G\times\spin \times[0,\beta)_h}$. Basic properties of the finite dimensional Grassmann integral have been summarized in the books \cite{FKT}, \cite{S}. We assume that each element $(\bx,\xi,x)\in\G\times\spin\times[0,\beta)_h$ is numbered so that we can write
$$\G\times\spin\times[0,\beta)_h=\{(\bx\xi x)_j\ |\ j\in\{1,\cdots,N\}\}$$
with $N:=2L^d\beta h$.

The Grassmann integral $\int\cdot d\psi_{(\bx\xi x)_N}\cdots
d\psi_{(\bx\xi x)_1}d\opsi_{(\bx\xi x)_N}\cdots d\opsi_{(\bx\xi x)_1}$
is a linear functional on the complex linear space $\C[\opsi_{(\bx\xi
x)_j},\psi_{(\bx\xi x)_j}\ |\ j\in\{1,\cdots, N\}]$ of Grassmann monomials and satisfies that
\begin{equation}\label{eq_definition_grassmann_integral}
\begin{split}
&\int\opsi_{(\bx\xi x)_1}\cdots\opsi_{(\bx\xi x)_N}\psi_{(\bx\xi
 x)_1}\cdots\psi_{(\bx\xi x)_N}\\
&\qquad \cdot d\psi_{(\bx\xi x)_N}\cdots d\psi_{(\bx\xi x)_1}d\opsi_{(\bx\xi x)_N}\cdots d\opsi_{(\bx\xi x)_1}=1,\\
&\int\opsi_{(\bx\xi x)_{j_1}}\cdots\opsi_{(\bx\xi x)_{j_k}}\psi_{(\bx\xi
 x)_{p_1}}\cdots\psi_{(\bx\xi x)_{p_q}}\\
&\qquad \cdot d\psi_{(\bx\xi x)_N}\cdots d\psi_{(\bx\xi x)_1}d\opsi_{(\bx\xi
 x)_N}\cdots d\opsi_{(\bx\xi x)_1}=0,
\end{split}
\end{equation}
if $k<N$ or $q< N$.

 For simplicity let $\opsi_X$, $\psi_X$ denote the vectors of the Grassmann variables \\
$(\opsi_{(\bx\xi x)_1},\cdots,\opsi_{(\bx\xi x)_N})$, $(\psi_{(\bx\xi x)_1},\cdots,\psi_{(\bx\xi x)_N})$, respectively. We also write $d\opsi_X=d\opsi_{(\bx\xi x)_N}\cdots d\opsi_{(\bx\xi x)_1}$ and $d\psi_X=d\psi_{(\bx\xi x)_N}\cdots d\psi_{(\bx\xi x)_1}$. 

For any $f(\opsi_X,\psi_X)\in \C[\opsi_{(\bx\xi x)_j},\psi_{(\bx\xi
x)_j}\ |\ j\in\{1,\cdots, N\}]$ the value of the Grassmann integral
$\int f(\opsi_X,\psi_X)d\psi_Xd\opsi_X$ can be computed by linearity
and the anti-commutation relations of the Grassmann variables. 

For any $N\times N$ matrix $G_h$ with $\det G_h\neq 0$, let $\bG_h$ denote the $2N\times2N$ skew symmetric matrix 
\begin{equation*}
\left(\begin{array}{cc} 0 & G_h \\ -G_h^t & 0 \end{array}\right).
\end{equation*}
From the definition and the assumption that $h\in 2\N/\beta$ we see that
\begin{equation*}
\begin{split}
\int e^{-\frac{1}{2}\<(\opsi_X,\psi_X)^t,\bG_h^{-1}(\opsi_X,\psi_X)^t\>}d\psi_Xd\opsi_X&=\int e^{-\<\psi_X^t,G_h^{-1}\opsi_X^t\>}d\psi_Xd\opsi_X\\
&=(-1)^{N(N-1)/2}(\det G_h)^{-1}=(\det G_h)^{-1}.
\end{split}
\end{equation*}

\begin{definition}\label{def_grassmann_gaussian_integral}
For any $N\times N$ matrix $G_h$ with $\det G_h\neq 0$ the Grassmann Gaussian integral $\int\cdot d\mu_{G_h}(\opsi_X,\psi_X):\C[\opsi_{(\bx\xi x)_j},\psi_{(\bx\xi x)_j}\ |\ j\in\{1,\cdots,N\}]\to\C$ is defined by
\begin{equation*}
\int f(\opsi_X,\psi_X)d\mu_{G_h}(\opsi_X,\psi_X):=\frac{\int f(\opsi_X,\psi_X)e^{-\frac{1}{2}\<(\opsi_X,\psi_X)^t,\bG_h^{-1}(\opsi_X,\psi_X)^t\>}d\psi_Xd\opsi_X}{\int e^{-\frac{1}{2}\<(\opsi_X,\psi_X)^t,\bG_h^{-1}(\opsi_X,\psi_X)^t\>}d\psi_Xd\opsi_X}.
\end{equation*}
\end{definition}

Transforming $(\Tr e^{-\beta H_{\lambda}}/\Tr e^{-\beta H_0})_h$ into a
Grassmann Gaussian integral essentially relies on the following equality.

\begin{lemma}\label{lem_grassmann_gaussian_equality}
\begin{equation*}
\begin{split}
&\int\opsi_{(\bx\xi x)_{j_k}}\cdots \opsi_{(\bx\xi x)_{j_1}}
 \psi_{(\bx\xi x)_{p_1}}\cdots\psi_{(\bx\xi
 x)_{p_k}}d\mu_{G_h}(\opsi_X,\psi_X)\\
&\quad =\det(G_h((\bx\xi x)_{j_u},(\bx\xi x)_{p_v}))_{1\le u,v\le k}.
\end{split}
\end{equation*}
\end{lemma}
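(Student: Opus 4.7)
This lemma is the standard Wick-type identity for Grassmann Gaussian integrals; detailed expositions can be found in the references \cite{FKT} and \cite{S} mentioned at the start of the subsection. I would give the standard generating-functional argument in three steps.

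First I would introduce two independent vectors of anticommuting source variables $\oeta_X = (\oeta_{(\bx\xi x)_1},\ldots,\oeta_{(\bx\xi x)_N})$ and $\eta_X = (\eta_{(\bx\xi x)_1},\ldots,\eta_{(\bx\xi x)_N})$, each anticommuting with itself, with the other source, and with every component of $\opsi_X, \psi_X$. Define the generating functional
$$
Z(\oeta_X, \eta_X) := \int e^{\<\oeta_X^t, \psi_X^t\> + \<\opsi_X^t, \eta_X^t\>} d\mu_{G_h}(\opsi_X, \psi_X).
$$
Expanding the exponential and using the definition of $d\mu_{G_h}$, the integral appearing in the statement of the lemma equals the value at $\oeta_X = \eta_X = 0$ of a suitable left Grassmann derivative of $Z$, namely (up to an explicit sign arising from the ordering of the derivatives)
$$
\frac{\partial}{\partial\eta_{(\bx\xi x)_{j_k}}}\cdots\frac{\partial}{\partial\eta_{(\bx\xi x)_{j_1}}}\frac{\partial}{\partial\oeta_{(\bx\xi x)_{p_1}}}\cdots\frac{\partial}{\partial\oeta_{(\bx\xi x)_{p_k}}}Z(\oeta_X, \eta_X)\Big|_{\oeta = \eta = 0}.
$$

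Second I would evaluate $Z$ in closed form by completing the square. Using the identity $-\frac12\<(\opsi_X,\psi_X)^t, \bG_h^{-1}(\opsi_X,\psi_X)^t\> = -\<\psi_X^t, G_h^{-1}\opsi_X^t\>$ from Definition \ref{def_grassmann_gaussian_integral}, the exponent of the unnormalized integrand becomes
$$
-\<\psi_X^t, G_h^{-1}\opsi_X^t\> + \<\oeta_X^t, \psi_X^t\> + \<\opsi_X^t, \eta_X^t\>.
$$
I would then perform the Grassmann translation $\psi_X^t \mapsto \psi_X^t - G_h^t\eta_X^t$, $\opsi_X^t \mapsto \opsi_X^t - G_h\oeta_X^t$; its Jacobian equals $1$, which reduces to the one-variable shift and is immediate from \eqref{eq_definition_grassmann_integral}. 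After the translation the linear terms in $\opsi_X, \psi_X$ cancel, a pure-source contribution $\<\oeta_X^t, G_h \eta_X^t\>$ factors out, and the remaining $\opsi_X, \psi_X$-integral is precisely the normalization in Definition \ref{def_grassmann_gaussian_integral}. Therefore
$$
Z(\oeta_X, \eta_X) = \exp\Bigl(\sum_{a,b=1}^{N}\oeta_{(\bx\xi x)_a}G_h((\bx\xi x)_a,(\bx\xi x)_b)\eta_{(\bx\xi x)_b}\Bigr).
$$

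Third I would expand the exponential in the sources and apply the $2k$ left derivatives described in the first step. By the Leibniz rule together with the anticommutativity of the sources, only the monomial of multi-degree $(k,k)$ survives at $\oeta = \eta = 0$, and it produces $\sum_{\sigma}\sgn(\sigma)\prod_{u=1}^{k}G_h((\bx\xi x)_{j_u},(\bx\xi x)_{p_{\sigma(u)}})$ summed over permutations $\sigma$ of $\{1,\ldots,k\}$, which is $\det(G_h((\bx\xi x)_{j_u},(\bx\xi x)_{p_v}))_{1\le u,v\le k}$ by the Leibniz formula. The main bookkeeping obstacle is the systematic tracking of anticommutation signs --- both in verifying the unit Jacobian of the Grassmann translation in the second step and in matching the sign produced by the chosen ordering of source derivatives against the monomial ordering $\opsi_{(\bx\xi x)_{j_k}}\cdots\opsi_{(\bx\xi x)_{j_1}}\psi_{(\bx\xi x)_{p_1}}\cdots\psi_{(\bx\xi x)_{p_k}}$ prescribed in the statement. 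I would pin down the overall sign by checking the case $k=1$ directly from Definition \ref{def_grassmann_gaussian_integral}; since both sides of the claimed identity are multilinear and fully antisymmetric in the index tuples $(j_1,\ldots,j_k)$ and $(p_1,\ldots,p_k)$, agreement in this base case forces agreement in general.
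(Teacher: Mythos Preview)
Your argument is correct and is precisely the standard generating-functional proof of the Grassmann Wick identity; the paper itself gives no independent proof but simply defers to \cite[Problem I.13]{FKT} with the remark that the roles of $\psi_X$ and $\opsi_X$ are swapped relative to the conventions there. Your three steps are exactly what lies behind that citation, so in substance the approaches coincide.

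One small imprecision: your closing sign argument --- check $k=1$ and appeal to antisymmetry to conclude for all $k$ --- does not quite work as stated, since antisymmetry in $(j_1,\ldots,j_k)$ and $(p_1,\ldots,p_k)$ for fixed $k$ says nothing about how the overall sign behaves as $k$ increases. What does work is either tracking the signs directly in the $2k$-fold left derivative of $\exp\bigl(\sum_{a,b}\oeta_a G_h((\bx\xi x)_a,(\bx\xi x)_b)\eta_b\bigr)$, or noting that a single Grassmann integration by parts (contracting, say, $\opsi_{(\bx\xi x)_{j_1}}$ against each $\psi_{(\bx\xi x)_{p_v}}$) reduces the $k$-point integral to a signed sum of $(k-1)$-point integrals that exactly matches the Laplace expansion of the determinant along its first row; this closes an induction on $k$ with your $k=1$ check as the base case. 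Either route removes the only loose end.
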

\begin{proof}
This equality follows \cite[\mbox{Problem I.13}]{FKT} by replacing
 the notations $\psi_X$, $\opsi_X$ by $\opsi_X$, $\psi_X$, respectively.
\end{proof}

\begin{remark}
Our definition of the Grassmann Gaussian integral Definition
 \ref{def_grassmann_gaussian_integral} differs from that summarized in
 \cite[\mbox{Problem I.13}]{FKT} or its follower \cite[\mbox{Section
 3.2}]{K} and corresponds to the statement derived by changing $\psi_X$ for $\opsi_X$ and $\opsi_X$ for $\psi_X$ respectively in \cite[\mbox{Problem I.13}]{FKT} and \cite[\mbox{Section 3.2}]{K}. In \cite[\mbox{Proposition 3.7}]{K} the discretized partition function of the Hubbard model was formulated in a Grassmann Gaussian integral. The formulation \cite[\mbox{Proposition 3.7}]{K} required the symmetry assumption on the coefficients $U_{\bx,\by,\bz,\bw}$ $(\bx,\by,\bz,\bw\in\G)$, namely $U_{\bx,\by,\bz,\bw}=U_{\bz,\bw,\bx,\by}$. Under Definition \ref{def_grassmann_gaussian_integral}, however, we do not need any additional assumption on the coefficients $U_{\lambda,l}(X^l,Y^l,\Xi^l,\Phi^l)$ to complete the desired formulation below.
\end{remark}

\begin{lemma}\label{lem_grassmann_gaussian_partition}
\begin{equation}\label{eq_grassmann_gaussian_partition}
\begin{split}
&\left(\frac{\Tr e^{-\beta H_{\lambda}}}{\Tr e^{-\beta H_0}}\right)_h=\\
&\qquad\int
 e^{\sum_{l=1}^{\max\{\hm,\tn\}}\sum_{(X^l,Y^l,\Xi^l,\Phi^l)\in\G^{2l}\times\spin^{2l}}U_{\lambda,l}(X^l,Y^l,\Xi^l,\Phi^l)V_{h,X^l,Y^l,\Xi^l,\Phi^l}^l(\opsi_X,\psi_X)}\\
&\qquad\qquad\cdot d\mu_{C_h}(\opsi_X,\psi_X),
\end{split}
\end{equation}
where 
\begin{equation*}
\begin{split}
&V_{h,X^l,Y^l,\Xi^l,\Phi^l}^l(\opsi_X,\psi_X)\\
&\quad
 :=-\frac{1}{h}\sum_{x\in[0,\beta)_h}\opsi_{\bx_{1}\xi_{1}x}\opsi_{\bx_{2}\xi_{2}x}\cdots \opsi_{\bx_{l}\xi_{l}x}\psi_{\by_{l}\phi_{l}x}\psi_{\by_{l-1}\phi_{l-1}x}\cdots\psi_{\by_{1}\phi_{1}x}.
\end{split}
\end{equation*}
\end{lemma}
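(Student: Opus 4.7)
The plan is to expand the exponential on the right-hand side of \eqref{eq_grassmann_gaussian_partition} as a Taylor series, evaluate each resulting multi-linear Grassmann Gaussian integral by means of Lemma \ref{lem_grassmann_gaussian_equality}, and verify term-by-term that the outcome coincides with the defining expansion \eqref{eq_discrete_perturbation_series}. Setting $W:=\sum_{l,(X^l,Y^l,\Xi^l,\Phi^l)}U_{\lambda,l}(X^l,Y^l,\Xi^l,\Phi^l)V^l_{h,X^l,Y^l,\Xi^l,\Phi^l}$, the Taylor series $\sum_m W^m/m!$ truncates automatically because the underlying Grassmann algebra has finite total dimension $2N$ with $N:=2L^d\beta h$. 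Moreover, any $m$-th order contribution yields a monomial of degree $2\sum_{k=1}^m l_k$, whose integral against $d\mu_{C_h}$ is a determinant of size $\sum_k l_k$; this vanishes once $\sum_k l_k>N$, so the cutoff $m\le 2L^d\beta h$ imposed in \eqref{eq_discrete_perturbation_series} is consistent with nilpotency.

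First I would isolate a single term indexed by $m\ge 1$, by $(l_1,\dots,l_m)$, by parameters $(X_k^{l_k},Y_k^{l_k},\Xi_k^{l_k},\Phi_k^{l_k})\in\G^{2l_k}\times\spin^{2l_k}$, and by times $s_k\in[0,\beta)_h$. After factoring out the scalars $-1/h$ and $U_{\lambda,l_k}$, the Grassmann integrand becomes $\prod_{k=1}^m B_k(s_k)$ with
\[
B_k(s_k):=\opsi_{\bx_{k,1}\xi_{k,1}s_k}\cdots\opsi_{\bx_{k,l_k}\xi_{k,l_k}s_k}\psi_{\by_{k,l_k}\phi_{k,l_k}s_k}\cdots\psi_{\by_{k,1}\phi_{k,1}s_k}.
\]
I would reverse the $\psi$-ordering inside each block (at cost $(-1)^{l_k(l_k-1)/2}$), then commute the $\opsi$-strings of later blocks past the $\psi$-strings of earlier blocks (at cost $(-1)^{\sum_{j<k}l_jl_k}$), and finally reverse the full $\opsi$-string (at cost $(-1)^{n(n-1)/2}$ where $n:=\sum_k l_k$), to arrive at the canonical shape $\opsi_{I_n}\cdots\opsi_{I_1}\psi_{J_1}\cdots\psi_{J_n}$ with $I_p=(\widetilde{\bx\xi s})_p$, $J_p=(\widetilde{\by\phi s})_p$ as in \eqref{eq_covariance_argument}.

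The key algebraic check is that the three accumulated signs cancel: the identity $\sum_k l_k(l_k-1)/2+\sum_{j<k}l_jl_k=n(n-1)/2$, which follows from expanding $n^2=(\sum_k l_k)^2$, makes the total exponent equal to $n(n-1)$, which is always even. Consequently Lemma \ref{lem_grassmann_gaussian_equality} identifies $\int\prod_k B_k(s_k)\,d\mu_{C_h}$ with $\det(C_h((\widetilde{\bx\xi s})_p,(\widetilde{\by\phi s})_q))_{1\le p,q\le n}$. Reassembling the prefactors, each $m$-th order term carries the coefficient $(-1)^m/m!$ coming from combining $1/m!$ of the Taylor expansion with the $m$ factors of $-1/h$ inside the $V^l_h$'s; summing over $m$, over $(l_k)_k$, the parameter tuples, and the times $s_k$ then reproduces \eqref{eq_discrete_perturbation_series} exactly.

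The main obstacle I anticipate lies in the sign bookkeeping of the middle step, where three combinatorial contributions have to conspire to cancel; this is precisely the reason why no symmetry assumption on $U_{\lambda,l}$ is required, in contrast with the formulation in \cite[\mbox{Proposition 3.7}]{K}. Beyond that, the argument is a routine unfolding of the Taylor expansion combined with the finite-dimensional Wick formula of Lemma \ref{lem_grassmann_gaussian_equality}, and no analytic subtlety arises because every sum in sight is finite.
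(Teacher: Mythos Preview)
Your proposal is correct and follows essentially the same approach as the paper: both arguments rest on Lemma \ref{lem_grassmann_gaussian_equality} to identify the determinant $\det(C_h((\widetilde{\bx\xi s})_p,(\widetilde{\by\phi s})_q))$ with the Grassmann Gaussian integral of the appropriately ordered monomial, and then regroup the monomial into the product $\prod_k B_k(s_k)$. The paper proceeds from \eqref{eq_discrete_perturbation_series} toward the exponential form while you expand the exponential and match back to \eqref{eq_discrete_perturbation_series}, and you spell out the sign identity $\sum_k l_k(l_k-1)/2+\sum_{j<k}l_jl_k=n(n-1)/2$ that the paper leaves implicit; otherwise the two arguments coincide.
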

\begin{proof}
First note that $\det C_h\neq 0$ for any $h\in 2\N/\beta$ by
 \cite[\mbox{Proposition C.7}]{K}. 
By applying Lemma \ref{lem_grassmann_gaussian_equality} to $\det(C_h((\widetilde{\bx\xi s})_p,(\widetilde{\by\phi s})_q))_{1\le p,q\le \sum_{k=1}^ml_k}$ in \eqref{eq_discrete_perturbation_series} and using the fact that $\int1d\mu_{C_h}(\opsi_X,\psi_X)=1$, we have
\begin{equation*}
\begin{split}
&\left(\frac{\Tr e^{-\beta H_{\lambda}}}{\Tr e^{-\beta
 H_0}}\right)_h=1+\sum_{m=1}^{2L^d\beta h}\frac{(-1)^m}{m!}\\
&\cdot\prod_{k=1}^m\left(\sum_{l_k=1}^{\max\{\hm,\tn\}}\sum_{(X_k^{l_k},Y_k^{l_k}, \Xi_k^{l_k},\Phi_k^{l_k})\in\G^{2l_k}\times\spin^{2l_k}}U_{\lambda,l_k}(X_k^{l_k},Y_k^{l_k}, \Xi_k^{l_k},\Phi_k^{l_k})\frac{1}{h}\sum_{s_k\in [0,\beta)_h}\right)\\
&\qquad \cdot\int\opsi_{\bx_{m,l_m}\xi_{m,l_m}s_m}\cdots\opsi_{\bx_{m,1}\xi_{m,1}s_m}\cdots\opsi_{\bx_{1,l_1}\xi_{1,l_1}s_1}\cdots\opsi_{\bx_{1,1}\xi_{1,1}s_1}\\
&\qquad\quad\cdot\psi_{\by_{1,1}\phi_{1,1}s_1}\cdots\psi_{\by_{1,l_1}\phi_{1,l_1}s_1}\cdots\psi_{\by_{m,1}\phi_{m,1}s_m}\cdots\psi_{\by_{m,l_m}\phi_{m,l_m}s_m}d\mu_{C_h}(\opsi_X,\psi_X)
\end{split}
\end{equation*}
\begin{equation*}
\begin{split}
&=\int\Bigg(1+\sum_{m=1}^{2L^d\beta h}\frac{1}{m!}\\
&\cdot \prod_{k=1}^m\Bigg(\sum_{l_k=1}^{\max\{\hm,\tn\}}\sum_{(X_k^{l_k},Y_k^{l_k}, \Xi_k^{l_k},\Phi_k^{l_k})\in\G^{2l_k}\times\spin^{2l_k}}U_{\lambda,l_k}(X_k^{l_k},Y_k^{l_k}, \Xi_k^{l_k},\Phi_k^{l_k})\frac{-1}{h}\sum_{s_k\in [0,\beta)_h}\\
&\qquad\quad\cdot\opsi_{\bx_{k,1}\xi_{k,1}s_k}\cdots\opsi_{\bx_{k,l_k}\xi_{k,l_k}s_k}\psi_{\by_{k,l_k}\phi_{k,l_k}s_k}\cdots\psi_{\by_{k,1}\phi_{k,1}s_k}\Bigg)\Bigg)d\mu_{C_h}(\opsi_X,\psi_X),
\end{split}
\end{equation*}
which is \eqref{eq_grassmann_gaussian_partition}.
\end{proof}

Here we complete our Grassmann integral formulation of the correlation function.\begin{proposition}\label{prop_correlation_grassmann_integral}
\begin{equation}\label{eq_correlation_grassmann_integral}
\begin{split}
&\<\psi_{\hbx_1\hxi_1}^*\cdots\psi_{\hbx_{\hm}\hxi_{\hm}}^*\psi_{\hby_{\hm}\hphi_{\hm}}\cdots\psi_{\hby_{1}\hphi_{1}}+\psi_{\hby_{1}\hphi_{1}}^*\cdots\psi_{\hby_{\hm}\hphi_{\hm}}^*\psi_{\hbx_{\hm}\hxi_{\hm}}\cdots\psi_{\hbx_1\hxi_1}\>_L\\
&=-\frac{1}{\beta}\lim_{h\to +\infty\atop h\in
 2\N/\beta}\int \left(V_{h,\hX^{\hm},\hY^{\hm},\hXi^{\hm},\hPhi^{\hm}}^{\hm}(\opsi_X,\psi_X)+V_{h,\hY^{\hm},\hX^{\hm},\hPhi^{\hm}, \hXi^{\hm}}^{\hm}(\opsi_X,\psi_X)\right)\\
&\quad\cdot  e^{\sum_{l=1}^{\tn}\sum_{(X^l,\Xi^l,\Phi^l)\in\G^{l}\times\spin^{2l}}U_{L,l}(X^l,\Xi^l,\Phi^l)V_{h,X^l,X^l,\Xi^l,\Phi^l}^l(\opsi_X,\psi_X)}d\mu_{C_h}(\opsi_X,\psi_X)\\
&\quad\cdot\Bigg/\int e^{\sum_{l=1}^{\tn}\sum_{(X^l,\Xi^l,\Phi^l)\in\G^{l}\times\spin^{2l}}U_{L,l}(X^l,\Xi^l,\Phi^l)V_{h,X^l,X^l,\Xi^l,\Phi^l}^l(\opsi_X,\psi_X)}d\mu_{C_h}(\opsi_X,\psi_X).
\end{split}
\end{equation}
\end{proposition}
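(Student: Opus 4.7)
The plan is to combine Lemma \ref{lem_correlation_h_limit} with Lemma \ref{lem_grassmann_gaussian_partition} and explicitly carry out the $\lambda$-differentiation on the Grassmann-integral representation. Indeed, Lemma \ref{lem_correlation_h_limit} already expresses the correlation function as
\[
-\frac{1}{\beta}\lim_{h\to+\infty, h\in 2\N/\beta}\frac{\partial}{\partial \lambda(\hX^{\hm},\hY^{\hm},\hXi^{\hm},\hPhi^{\hm})}\log\left(\frac{\Tr e^{-\beta H_{\lambda}}}{\Tr e^{-\beta H_0}}\right)_h
\]
evaluated at $\lambda\equiv 0$, and Lemma \ref{lem_grassmann_gaussian_partition} identifies $(\Tr e^{-\beta H_\lambda}/\Tr e^{-\beta H_0})_h$ with the Grassmann Gaussian integral $\int e^{W_\lambda(\opsi_X,\psi_X)}d\mu_{C_h}(\opsi_X,\psi_X)$, where
\[
W_\lambda(\opsi_X,\psi_X):=\sum_{l=1}^{\max\{\hm,\tn\}}\sum_{(X^l,Y^l,\Xi^l,\Phi^l)\in\G^{2l}\times\spin^{2l}}U_{\lambda,l}(X^l,Y^l,\Xi^l,\Phi^l)V^l_{h,X^l,Y^l,\Xi^l,\Phi^l}(\opsi_X,\psi_X).
\]
So the task reduces to differentiating this Grassmann integral in a scalar parameter and simplifying at $\lambda\equiv 0$.

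The key step is the chain-rule computation. Since $W_\lambda$ depends linearly on the coefficients $U_{\lambda,l}$, and $U_{\lambda,l}$ depends on the parameters $\lambda$ only through the symmetrized expression $1_{l=\hm}(\lambda(X^{\hm},Y^{\hm},\Xi^{\hm},\Phi^{\hm})+\lambda(Y^{\hm},X^{\hm},\Phi^{\hm},\Xi^{\hm}))$, differentiation with respect to $\lambda(\hX^{\hm},\hY^{\hm},\hXi^{\hm},\hPhi^{\hm})$ picks out exactly two terms in the sum:
\[
\frac{\partial W_\lambda}{\partial \lambda(\hX^{\hm},\hY^{\hm},\hXi^{\hm},\hPhi^{\hm})}
=V^{\hm}_{h,\hX^{\hm},\hY^{\hm},\hXi^{\hm},\hPhi^{\hm}}(\opsi_X,\psi_X)+V^{\hm}_{h,\hY^{\hm},\hX^{\hm},\hPhi^{\hm},\hXi^{\hm}}(\opsi_X,\psi_X).
\]
Because the Grassmann integral is a finite-dimensional linear functional and the Grassmann variables commute with the scalar parameter $\lambda$, one may differentiate under the integral sign without any analytic subtlety, obtaining
\[
\frac{\partial}{\partial \lambda(\hX^{\hm},\hY^{\hm},\hXi^{\hm},\hPhi^{\hm})}\int e^{W_\lambda}d\mu_{C_h}=\int \big(V^{\hm}_{h,\hX^{\hm},\hY^{\hm},\hXi^{\hm},\hPhi^{\hm}}+V^{\hm}_{h,\hY^{\hm},\hX^{\hm},\hPhi^{\hm},\hXi^{\hm}}\big)e^{W_\lambda}d\mu_{C_h}.
\]

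To finish, I would evaluate the resulting ratio $(\partial_\lambda \int e^{W_\lambda}d\mu_{C_h})/\int e^{W_\lambda}d\mu_{C_h}$ at $\lambda\equiv 0$. Here the exponent collapses because $U_{\lambda,l}\big|_{\lambda=0}=1_{l\le \tn}U_{L,l}(X^l,\Xi^l,\Phi^l)\delta_{X^l,Y^l}$, so the Kronecker delta kills the $Y^l$-sum and forces $Y^l=X^l$, leaving precisely the exponent $\sum_{l=1}^{\tn}\sum_{(X^l,\Xi^l,\Phi^l)}U_{L,l}(X^l,\Xi^l,\Phi^l)V^l_{h,X^l,X^l,\Xi^l,\Phi^l}$ that appears in the statement. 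Finally, multiplying by $-1/\beta$ and taking the limit $h\to+\infty$ through $2\N/\beta$ (as justified by Lemma \ref{lem_correlation_h_limit}, whose positivity conclusion ensures the denominator in \eqref{eq_correlation_grassmann_integral} stays away from zero for large $h$) yields \eqref{eq_correlation_grassmann_integral}.

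No significant obstacle is expected: the calculus under the Grassmann integral is purely algebraic and every analytic issue (the convergence $h\to +\infty$, the legitimacy of $\log$, the exchange of derivative and limit) is already absorbed into Lemma \ref{lem_correlation_h_limit}. The one point to be careful about is the accurate bookkeeping of the symmetrization in the definition of $U_{\lambda,\hm}$, which is precisely what produces the sum of the two Grassmann monomials $V^{\hm}_{h,\hX^{\hm},\hY^{\hm},\hXi^{\hm},\hPhi^{\hm}}+V^{\hm}_{h,\hY^{\hm},\hX^{\hm},\hPhi^{\hm},\hXi^{\hm}}$ on the right-hand side of \eqref{eq_correlation_grassmann_integral}, matching the symmetrized correlation function on the left.
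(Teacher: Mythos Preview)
Your proposal is correct and follows essentially the same approach as the paper: substitute the Grassmann integral representation \eqref{eq_grassmann_gaussian_partition} into the right-hand side of \eqref{eq_correlation_h_limit}, differentiate the Grassmann polynomial $e^{W_\lambda}$ with respect to $\lambda(\hX^{\hm},\hY^{\hm},\hXi^{\hm},\hPhi^{\hm})$ inside the integral, and evaluate at $\lambda\equiv 0$. Your explicit bookkeeping of the two terms arising from the symmetrization in $U_{\lambda,\hm}$ and of the collapse of the exponent via the Kronecker delta is exactly the content the paper leaves implicit in its one-sentence proof.
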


\begin{proof}
The equality \eqref{eq_correlation_grassmann_integral} is obtained by substituting
 \eqref{eq_grassmann_gaussian_partition} into the right hand side of
 \eqref{eq_correlation_h_limit} and differentiating the Grassmann polynomial
$$e^{\sum_{l=1}^{\max\{\hm,\tn\}}\sum_{(X^l,Y^l,\Xi^l,\Phi^l)\in\G^{2l}\times\spin^{2l}}U_{\lambda,l}(X^l,Y^l,\Xi^l,\Phi^l)V_{h,X^l,Y^l,\Xi^l,\Phi^l}^l(\opsi_X,\psi_X)}$$
by the variable $\lambda(\hX^{\hm},\hY^{\hm},\hXi^{\hm},\hPhi^{\hm})$
 inside the Grassmann integral (see \\
\cite[\mbox{Problem I.3}]{FKT} for differentiation of Grassmann polynomials).
\end{proof}

\section{Perturbative bounds on the Grassmann integral formulation}\label{sec_perturbative_bound}
In this section we find upper bounds on each term of a perturbative expansion of the Grassmann integrals of the form
\eqref{eq_correlation_grassmann_integral} for fixed $h\in
2\N/\beta$. Here we generalize the problem. The covariance is
assumed to be any matrix $G_h:(\G\times\spin\times[0,\beta)_h)^2\to \C$ with $\det G_h\neq 0$. 

Set an integral of $G_h$ by 
\begin{equation*}
\begin{split}
\cD:=\max\Bigg\{&\max_{(\by,\phi,y)\in\G\times\spin\times[0,\beta)_h}\frac{1}{h}\sum_{(\bx,\xi,x)\in\G\times\spin\times[0,\beta)_h}|G_h(\bx\xi x,\by\phi y)|,\\
&\max_{(\by,\phi,y)\in\G\times\spin\times[0,\beta)_h}\frac{1}{h}\sum_{(\bx,\xi,x)\in\G\times\spin\times[0,\beta)_h}|G_h(\by\phi y,\bx\xi x)|\Bigg\}.
\end{split}
\end{equation*}

We define the Schwinger function by  
\begin{equation}\label{eq_schwinger_functional}
\begin{split}
&S_{\hX^{\hm},\hY^{\hm},\hXi^{\hm},\hPhi^{\hm}}(G_h,\eta)\\
&:=-\frac{1}{\beta}\int V_{h,\hX^{\hm},\hY^{\hm},\hXi^{\hm},\hPhi^{\hm}}^{\hm}(\opsi_X,\psi_X)\\
&\quad\cdot  e^{\eta\sum_{l=1}^{\tn}\sum_{(X^l,\Xi^l,\Phi^l)\in\G^{l}\times\spin^{2l}} U_{L,l}(X^l,\Xi^l,\Phi^l)V_{h,X^l,X^l,\Xi^l,\Phi^l}^l(\opsi_X,\psi_X)}d\mu_{G_h}(\opsi_X,\psi_X)\\
&\quad\cdot\Bigg/\int e^{\eta\sum_{l=1}^{\tn}\sum_{(X^l,\Xi^l,\Phi^l)\in\G^{l}\times\spin^{2l}}U_{L,l}(X^l,\Xi^l,\Phi^l)V_{h,X^l,X^l,\Xi^l,\Phi^l}^l(\opsi_X,\psi_X)}d\mu_{G_h}(\opsi_X,\psi_X)
\end{split}
\end{equation}
for any $\hm\in\N$ and $(\hX^{\hm},\hY^{\hm},\hXi^{\hm},\hPhi^{\hm})\in
\G^{2\hm}\times\spin^{2\hm}$ and any $\eta\in\C$ for which the denominator of \eqref{eq_schwinger_functional} does not vanish. In fact, since 
\begin{equation*}
\begin{split}
\lim_{\eta\to 0}&\int e^{\eta\sum_{l=1}^{\tn}\sum_{(X^l,\Xi^l,\Phi^l)\in\G^{l}\times\spin^{2l}}U_{L,l}(X^l,\Xi^l,\Phi^l)V_{h,X^l,X^l,\Xi^l,\Phi^l}^l(\opsi_X,\psi_X)}\\
&\cdot d\mu_{G_h}(\opsi_X,\psi_X)=1,
\end{split}
\end{equation*}
$S_{\hX^{\hm},\hY^{\hm},\hXi^{\hm},\hPhi^{\hm}}(G_h,\eta)$ is analytic with respect to $\eta$ in a neighborhood of origin.

The purpose of this section is to prove the following proposition.
\begin{proposition}\label{prop_schwinger_functional_bound}
Assume that there exists a positive constant $\cB>0$ such that for any $p\in\N$, $(\bx_{j_l},\xi_{j_l},x_{j_l}),(\by_{k_l},\phi_{k_l},y_{k_l})\in \G\times\spin\times[0,\beta)_h$ $(\forall l\in\{1,\cdots,p\})$ and $n\in\N$, 
\begin{equation}\label{eq_assumption_determinant_bound}
\sup_{\bu_l,\bv_l\in\C^n\atop \|\bu_l\|_{\C^n},\|\bv_l\|_{\C^n}\le 1\ \forall l\in\{1,\cdots,p\}}|\det\left(\<\bu_l,\bv_m\>_{\C^n}G_h(\bx_{j_l}\xi_{j_l}x_{j_l},\by_{k_m}\phi_{k_m}y_{k_m})\right)_{1\le l,m\le p}|\le \cB^p.
\end{equation}

Let $b_m$ denote the coefficient of $\eta^m$ in the Taylor series of the function $\eta\mapsto $\\
$S_{\hX^{\hm},\hY^{\hm},\hXi^{\hm},\hPhi^{\hm}}(G_h,\eta)$ around $0\in\C$. The following bounds hold.
\begin{equation*}
\begin{split}
&|b_0|\le \cB^{\hm},\\
&|b_m|\le \frac{\hm 4^{\hm}\cB^{\hm}}{m}\left(\sum_{l=1}^{\tn}l4^l\cB^{l-1}\|U_{L,l}\|_{L,l}\cD\right)^m\ (\forall m\in\N).
\end{split}
\end{equation*}
\end{proposition}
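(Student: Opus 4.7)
First I would dispose of the case $m=0$ by direct computation. At $\eta=0$ the denominator of \eqref{eq_schwinger_functional} equals $1$, so $b_{0}=-\beta^{-1}\int V^{\hm}_{h,\hX^{\hm},\hY^{\hm},\hXi^{\hm},\hPhi^{\hm}}(\opsi_{X},\psi_{X})\,d\mu_{G_{h}}(\opsi_{X},\psi_{X})$. Using the definition of $V^{\hm}_{h,\ldots}$ and Lemma \ref{lem_grassmann_gaussian_equality}, this equals $(\beta h)^{-1}\sum_{x\in[0,\beta)_{h}}\det(G_{h}(\hbx_{i}\hxi_{i}x,\hby_{j}\hphi_{j}x))_{1\le i,j\le \hm}$. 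Each of these $\hm\times\hm$ determinants is bounded by $\cB^{\hm}$ upon applying the hypothesis \eqref{eq_assumption_determinant_bound} with $n=1$ and $\bu_{l}=\bv_{l}=1$; since the sum contains $\beta h$ terms, $|b_{0}|\le \cB^{\hm}$.

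For $m\ge 1$, the plan is to observe that
\[
S_{\hX^{\hm},\hY^{\hm},\hXi^{\hm},\hPhi^{\hm}}(G_{h},\eta)
=-\frac{1}{\beta}\frac{\partial}{\partial\lambda}\log\!\left(\int e^{\lambda V^{\hm}_{h,\hX^{\hm},\hY^{\hm},\hXi^{\hm},\hPhi^{\hm}}+\eta W}\,d\mu_{G_{h}}\right)\!\bigg|_{\lambda=0},
\]
where $W=\sum_{l=1}^{\tn}\sum_{(X^{l},\Xi^{l},\Phi^{l})}U_{L,l}(X^{l},\Xi^{l},\Phi^{l})V^{l}_{h,X^{l},X^{l},\Xi^{l},\Phi^{l}}$. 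I would therefore apply the tree representation for the logarithm of a Grassmann Gaussian integral (the version of \cite{SW} recalled in \cite{K}) to the polynomial $\lambda V^{\hm}+\eta W$, extract the coefficient of $\lambda\eta^{m}$, and identify $b_{m}$ with a sum, over spanning trees $T$ on the vertex set $\{0,1,\dots,m\}$ (vertex $0$ being the external insertion $V^{\hm}$, vertices $1,\dots,m$ the interaction copies), of an integral of the form
\[
\int_{[0,1]^{m}} d\mathbf{s}\,\mu_{T}(\mathbf{s})\sum_{\ell_{1},\dots,\ell_{m}=1}^{\tn}\;\sum_{\text{pairings along }T}\Bigl(\prod_{\{i,j\}\in T}G_{h}(\cdots)\Bigr)\,\det M_{T}(\mathbf{s}),
\]
where $M_{T}(\mathbf{s})$ is a Gram-type matrix whose entries have the structure $\langle\bu,\bv\rangle_{\C^{n}} G_{h}(\cdots)$ required by \eqref{eq_assumption_determinant_bound}.

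Next, I would bound each tree contribution. The leftover determinant $\det M_{T}(\mathbf{s})$ has size $\hm+\sum_{k=1}^{m}\ell_{k}-m$ (all legs not used by the $m$ tree edges), so by \eqref{eq_assumption_determinant_bound} it is bounded by $\cB^{\hm+\sum_{k}\ell_{k}-m}$. Summing spatial and spin indices of each interaction vertex against the tree edges costs the factor $\|U_{L,\ell_{k}}\|_{L,\ell_{k}}\cdot\cD$ (the $L^{1}$-type bound in the definition of $\cD$ absorbs one tree edge per interaction vertex, and the normalization $\|\cdot\|_{L,\ell_{k}}$ is designed exactly so that all remaining spatial/spin sums at vertex $k$ are controlled). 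The temporal Riemann sum $h^{-1}\sum_{x}$ attached to every vertex is also swallowed by $\cD$. Multiplying over all $m$ interaction vertices and the one external vertex yields
\[
\cB^{\hm}\prod_{k=1}^{m}\bigl(\cB^{\ell_{k}-1}\|U_{L,\ell_{k}}\|_{L,\ell_{k}}\cD\bigr).
\]

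Finally, I would close the combinatorial bookkeeping. The number of choices of ordered leg assignments at a vertex with $2\ell$ half-lines is bounded by a constant to the $\ell$; iterating through the tree formula gives a factor $\prod_{k}(\ell_{k}4^{\ell_{k}})$ at interaction vertices and $\hm\,4^{\hm}$ at the external one (the $4^{\ell}$ accounting for the choice between creation/annihilation legs and the symmetric/antisymmetric combinatorial constants that appear in \cite{K}). Cayley's $m^{m-2}$ count of trees on $m$ vertices, combined with the $1/m!$ from the exponential expansion and the choice of the edge touching the external vertex (a factor of $m$), collapses into the stated prefactor $\hm 4^{\hm}\cB^{\hm}/m$ after distributing one power of $\sum_{\ell}\ell 4^{\ell}\cB^{\ell-1}\|U_{L,\ell}\|_{L,\ell}\cD$ per interaction vertex. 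The main obstacle I anticipate is not the structure of the estimate but the careful accounting of these combinatorial factors --- particularly the interplay between Cayley's formula, the $1/m!$, the BKAR interpolation measure $\mu_{T}(\mathbf{s})$, and the leg combinatorics --- so as to land on precisely the constants $4^{\hm}\,\hm/m$ and $l\,4^{l}$ appearing in the claim, rather than larger ones.
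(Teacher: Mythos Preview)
Your approach is essentially the paper's: express $S$ as a $\lambda$-derivative of a log-partition function, apply the Salmhofer--Wieczerkowski tree formula, bound the leftover determinant by $\cB^{\hm+\sum_k l_k-m}$ via \eqref{eq_assumption_determinant_bound}, and absorb the tree lines by $\|U_{L,l}\|_{L,l}\cD$. The only place your sketch diverges from what the paper actually does is the combinatorics, which you yourself flag as the obstacle. Two concrete corrections: the trees live on $m+1$ vertices $\{0,1,\dots,m\}$, and the paper does \emph{not} use the crude count $(m+1)^{m-1}$. Instead it fixes the incidence numbers $d_0,\dots,d_m$, uses Cayley's formula $\tfrac{(m-1)!}{\prod_q (d_q-1)!}$ for trees with prescribed degrees, splits each $d_q=\alpha_q+\tilde\alpha_q$ according to how many $\psi$- versus $\bar\psi$-derivatives hit vertex $q$, and then closes with the identity
\[
\sum_{\alpha=0}^{l}\sum_{\tilde\alpha=0}^{l}(\alpha+\tilde\alpha)\binom{l}{\alpha}\binom{l}{\tilde\alpha}=l\,4^{l},
\]
which is exactly what produces the factors $\hm 4^{\hm}$ and $l_k 4^{l_k}$. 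If you organize the leg-counting this way rather than via a global tree count times edge choices, the constants fall out on the nose.
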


\begin{proof} Let us consider $\{U_{\lambda,l}(X^l,Y^l,\Xi^l,\Phi^l)\}_{l\in\{1,\cdots,\max\{\hm,\tn\}\}, (X^l,Y^l,\Xi^l,\Phi^l)\in\G^{2l}\times\spin^{2l}}$ as mutually independent, complex multi-variables. In a neighborhood of origin we can define a multi-variable analytic function $W(\{U_{\lambda,l}(X^l,Y^l,\Xi^l,\Phi^l)\})$ by
\begin{equation*}
\begin{split}
&W(\{U_{\lambda,l}(X^l,Y^l,\Xi^l,\Phi^l)\})\\
&:=\log\Bigg(\int
 e^{\sum_{l=1}^{\max\{\hm,\tn\}}\sum_{(X^l,Y^l,\Xi^l,\Phi^l)\in\G^{2l}\times\spin^{2l}}U_{\lambda,l}(X^l,Y^l,\Xi^l,\Phi^l)V_{h,X^l,Y^l,\Xi^l,\Phi^l}^l(\opsi_X,\psi_X)}\\
&\qquad\qquad \cdot d\mu_{G_h}(\opsi_X,\psi_X)\Bigg).
\end{split}
\end{equation*}
Its Taylor expansion becomes
\begin{equation*}
\begin{split}
&W(\{U_{\lambda,l}(X^l,Y^l,\Xi^l,\Phi^l)\})=\sum_{m=0}^{\infty}\frac{1}{m!}\prod_{k=1}^m\left(\sum_{l_k=1}^{\max\{\hm,\tn\}}\sum_{(X^{l_k}_k,Y^{l_k}_k,\Xi^{l_k}_k,\Phi^{l_k}_k)\in\G^{2l_k}\times\spin^{2l_k}}\right)\\
&\cdot \prod_{k=1}^m\frac{\partial}{\partial U_{\lambda,l_k}(X^{l_k}_k,Y^{l_k}_k,\Xi^{l_k}_k,\Phi^{l_k}_k)}W(\{U_{\lambda,l}(X^l,Y^l,\Xi^l,\Phi^l)\})\Bigg|_{\substack{U_{\lambda,l}(X^{l},Y^{l},\Xi^{l},\Phi^{l})=0\\ \forall l\in\{1,\cdots,\max\{\hm,\tn\}\}\\ \forall (X^{l},Y^{l},\Xi^{l},\Phi^{l})\in\G^{2l}\times\spin^{2l}}}\\
&\cdot \prod_{k=1}^mU_{\lambda,l_k}(X^{l_k}_k,Y^{l_k}_k,\Xi^{l_k}_k,\Phi^{l_k}_k).
\end{split}
\end{equation*}
We can take a small $r>0$ so that for all $\eta\in\C\backslash \{0\}$ with $|\eta|<r$
\begin{equation}\label{eq_schwinger_functional_expansion}
\begin{split}
&S_{\hX^{\hm},\hY^{\hm},\hXi^{\hm},\hPhi^{\hm}}(G_h,\eta)\\
&\quad =-\frac{1}{\beta\eta}\frac{\partial}{\partial
 U_{\lambda,\hm}(\hX^{\hm},\hY^{\hm},\hXi^{\hm},\hPhi^{\hm})}\\
&\qquad \cdot W(\{\eta U_{\lambda,l}(X^l,Y^l,\Xi^l,\Phi^l)\})\Bigg|_{\lambda(X^{\hm},Y^{\hm},\Xi^{\hm},\Phi^{\hm})=0\atop\forall (X^{\hm},Y^{\hm},\Xi^{\hm},\Phi^{\hm})\in\G^{2\hm}\times\spin^{2\hm}}\\  
&\quad =
 -\frac{1}{\beta}\sum_{m=1}^{\infty}\frac{1}{(m-1)!}\prod_{k=1}^{m-1}\left(\sum_{l_k=1}^{\tn}\sum_{(X^{l_k}_k,\Xi^{l_k}_k,\Phi^{l_k}_k)\in\G^{l_k}\times\spin^{2l_k}}\right)\\
&\qquad \cdot\frac{\partial}{\partial
 U_{\lambda,\hm}(\hX^{\hm},\hY^{\hm},\hXi^{\hm},\hPhi^{\hm})}\prod_{k=1}^{m-1}\frac{\partial}{\partial U_{\lambda,l_k}(X^{l_k}_k,X^{l_k}_k,\Xi^{l_k}_k,\Phi^{l_k}_k)}\\
&\qquad \cdot W(\{ U_{\lambda,l}(X^l,Y^l,\Xi^l,\Phi^l)\})\Bigg|_{\substack{U_{\lambda,l}(X^{l},Y^{l},\Xi^{l},\Phi^{l})=0\\ \forall l\in\{1,\cdots,\max\{\hm,\tn\}\}\\ \forall (X^{l},Y^{l},\Xi^{l},\Phi^{l})\in\G^{2l}\times\spin^{2l}}}\\
&\qquad\cdot\prod_{k=1}^{m-1}U_{L,l_k}(X_k^{l_k},\Xi_k^{l_k},\Phi_k^{l_k})\cdot\eta^{m-1}.
\end{split}
\end{equation}
By comparing with the definition \eqref{eq_schwinger_functional} we see that the equality \eqref{eq_schwinger_functional_expansion} also
 holds true for $\eta =0$.

We will bound the coefficient of $\eta^{m}$ in the right hand side of
 \eqref{eq_schwinger_functional_expansion}.  To organize the argument we
 divide the rest of the proof by 4 steps.

(Step 1) First remark that by Lemma \ref{lem_grassmann_gaussian_equality}. 
\begin{equation}\label{eq_bound_for_m_1}
\begin{split}
b_0&=\\
&\ -\frac{1}{\beta}\frac{\partial}{\partial U_{\lambda,\hm}(\hX^{\hm},\hY^{\hm},\hXi^{\hm},\hPhi^{\hm})}W(\{ U_{\lambda,l}(X^l,Y^l,\Xi^l,\Phi^l)\})\Bigg|_{\substack{U_{\lambda,l}(X^{l},Y^{l},\Xi^{l},\Phi^{l})=0\\ \forall l\in\{1,\cdots,\max\{\hm,\tn\}\}\\ \forall (X^{l},Y^{l},\Xi^{l},\Phi^{l})\in\G^{2l}\times\spin^{2l}}}\\
&=\frac{1}{\beta h}\sum_{s\in[0,\beta)_h}\int\opsi_{\hbx_1\hxi_1s}\cdots\opsi_{\hbx_{\hm}\hxi_{\hm}s}\psi_{\hby_{\hm}\hphi_{\hm}s}\cdots \psi_{\hby_{1}\hphi_{1}s}d\mu_{G_h}(\opsi_X,\psi_X)\\
&=\det(G_h(\hbx_j\hxi_{j}0,\hby_k\hphi_k0))_{1\le j,k\le \hm}.
\end{split}
\end{equation}
Thus, by the assumption \eqref{eq_assumption_determinant_bound} $|b_0|\le \cB^{\hm}$.

(Step 2) Consider the coefficient of $\eta^{m}$ in
 \eqref{eq_schwinger_functional_expansion} for $m\ge 1$. Here we apply
 the tree formula \cite[\mbox{Theorem 3}]{SW} to characterize the partial derivatives of $W(\{ U_{\lambda,l}(X^l,Y^l,\Xi^l,\Phi^l)\})$ evaluated at origin. We then obtain
\begin{equation}\label{eq_tree_expansion}
\begin{split}
&b_{m}\\
&=-\frac{1}{\beta}\frac{1}{m!}\prod_{k=1}^{m}\left(\sum_{l_k=1}^{\tn}\sum_{(X^{l_k}_k,\Xi^{l_k}_k,\Phi^{l_k}_k)\in\G^{l_k}\times\spin^{2l_k}}U_{L,l_k}(X_k^{l_k},\Xi_k^{l_k},\Phi_k^{l_k})\right)\\
&\quad\cdot\frac{\partial}{\partial
 U_{\lambda,\hm}(\hX^{\hm},\hY^{\hm},\hXi^{\hm},\hPhi^{\hm})}\prod_{k=1}^{m}\frac{\partial}{\partial U_{\lambda,l_k}(X^{l_k}_k,X^{l_k}_k,\Xi^{l_k}_k,\Phi^{l_k}_k)}\\
&\quad\cdot W(\{ U_{\lambda,l}(X^l,Y^l,\Xi^l,\Phi^l)\})\Bigg|_{\substack{U_{\lambda,l}(X^{l},Y^{l},\Xi^{l},\Phi^{l})=0\\ \forall l\in\{1,\cdots,\max\{\hm,\tn\}\}\\ \forall (X^{l},Y^{l},\Xi^{l},\Phi^{l})\in\G^{2l}\times\spin^{2l}}}\\
&= -\frac{1}{\beta}\frac{1}{m!}\prod_{k=1}^{m}\left(\sum_{l_k=1}^{\tn}\sum_{(X^{l_k}_k,\Xi^{l_k}_k,\Phi^{l_k}_k)\in\G^{l_k}\times\spin^{2l_k}}U_{L,l_k}(X_k^{l_k},\Xi_k^{l_k},\Phi_k^{l_k})\right)\\
&\cdot \sum_{T\in\T(\{0,1,\cdots,m\})}\prod_{\{p,q\}\in
 T}(\D_{p,q}+\D_{q,p})\int_{[0,1]^{m}}d\bs \sum_{\pi\in
 \S_{m+1}(T)}\chi(T,\pi,\bs)\cdot e^{\D(M(T,\pi,\bs))}\\
&\cdot V_{h,\hX^{\hm},\hY^{\hm},\hXi^{\hm},\hPhi^{\hm}}^{\hm}(\opsi_X^0,\psi_X^0)\prod_{k=1}^{m}V_{h,X_k^{l_k},X_k^{l_k},\Xi_k^{l_k},\Phi_k^{l_k}}^{l_k}(\opsi_X^k,\psi_X^k)\Bigg|_{\opsi_X^q=\psi_X^q=\b0\atop \forall q\in\{0,1,\cdots,m\}}.
\end{split}
\end{equation}
Definitions of the newly introduced notations are in order; $\T(\{0,1,\cdots,m\})$ is the set of all trees on the vertices $\{0,1,\cdots,m\}$, $\S_{m+1}(T)$ is a $T$-dependent set of permutations over $m+1$ elements, $\chi(T,\pi,\bs)$ is a $(T,\pi,\bs)$-dependent non-negative real function obeying
\begin{equation}\label{eq_phi_equality}
\int_{[0,1]^{m}}d\bs\sum_{\pi\in \S_{m+1}(T)}\chi(T,\pi,\bs)=1,
\end{equation}
$M(T,\pi,\bs)$ is a $(T,\pi,\bs)$-dependent non-negative real symmetric
 $(m+1)\times (m+1)$ matrix satisfying $M(T,\pi,\bs)_{p,p}=1$ for all $p \in \{0,1,\cdots,m\}$ and the Laplacian operators $\D_{p,q}$ and $\D(M(T,\pi,\bs))$ are defined by
$$\D_{p,q}:=-\<\left(\frac{\partial}{\partial \opsi_X^p}\right)^t,G_h\left(\frac{\partial}{\partial \psi_X^{q}}\right)^t\>,\ \D(M(T,\pi,\bs)):=\sum_{p,q=0}^{m}M(T,\pi,\bs)_{p,q}\D_{p,q},$$
where 
$$\frac{\partial}{\partial \opsi_X^q}:=\left(\frac{\partial}{\partial \opsi_{(\bx\xi x)_1}^q},\cdots,\frac{\partial}{\partial \opsi_{(\bx\xi x)_N}^q}\right),\ \frac{\partial}{\partial \psi_X^q}:=\left(\frac{\partial}{\partial \psi_{(\bx\xi x)_1}^q},\cdots,\frac{\partial}{\partial \psi_{(\bx\xi x)_N}^q}\right)$$
are vectors of the Grassmann left derivatives corresponding to the labeled Grassmann variables 
$\{\opsi_{(\bx\xi x)_j}^q,\psi_{(\bx\xi x)_j}^q\ |\
 j\in\{1,\cdots,N\}\}$ $(\forall q\in\{0,1,\cdots,m\})$. 

(Step 3) We will bound the right hand side of \eqref{eq_tree_expansion} by replacing the sum over trees by the sum over incidence numbers in the next step. As the preparation let us summarize some facts. Let
 $d_0,d_1,\cdots,d_{m}$ denote the incidence numbers of a tree $T$
 corresponding to the vertices $0,1,\cdots,m$, respectively. Every term of the expansion of $\prod_{\{p,q\}\in T}(\D_{p,q}+\D_{q,p})$ is a product of $m$ Laplacians as the tree $T$ has $m$ lines. Each product
 of the $m$ Laplacians contains $d_q$ derivatives with respect to the
 Grassmann variables with label $q$ for all
 $q\in\{0,1,\cdots,m\}$. Moreover, the $d_q$ derivatives consist of
 $\alpha_q$ derivatives with respect to the variables $\{\psi_{\bx\xi
 x}^q\}_{(\bx,\xi, x)\in \G\times \spin\times [0,\beta)_h}$ and
 $\talpha_q$ derivatives with respect to the variables $\{\opsi_{\bx\xi x}^q\}_{(\bx,\xi, x)\in \G\times \spin\times [0,\beta)_h}$ for some non-negative integers $\alpha_q,\talpha_q$ satisfying $d_q=\alpha_q+\talpha_q$.

This product of the $m$ Laplacians produces
$$1_{\alpha_0\le \hm}1_{\talpha_0\le \hm}\left(\begin{array}{c} \hm \\ \alpha_0\end{array}\right)\alpha_0!\left(\begin{array}{c} \hm \\ \talpha_0\end{array}\right)\talpha_0!\prod_{k=1}^{m}1_{\alpha_k\le l_k}1_{\talpha_k\le l_k}\left(\begin{array}{c} l_k \\ \alpha_k\end{array}\right)\alpha_k!\left(\begin{array}{c} \l_k \\ \talpha_k\end{array}\right)\talpha_k!$$
monomials when the term acts on the monomial
\begin{equation}\label{eq_grassmann_monomial}
\begin{split}
&\opsi_{\hbx_1\hxi_1 s_0}\opsi_{\hbx_2\hxi_2 s_0}\cdots
 \opsi_{\hbx_{\hm}\hxi_{\hm} s_0}\psi_{\hby_{\hm}\hphi_{\hm}
 s_0}\psi_{\hby_{\hm-1}\hphi_{\hm-1} s_0}\cdots\psi_{\hby_{1}\hphi_{1}
 s_0}\\
&\cdot\prod_{k=1}^{m}\Big(\opsi_{\bx_{k,1}\xi_{k,1}s_k}^k\opsi_{\bx_{k,2}\xi_{k,2}s_k}^k\cdots\opsi_{\bx_{k,l_k}\xi_{k,l_k}s_k}^k\\
&\qquad\quad\cdot\psi_{\by_{k,l_k}\phi_{k,l_k}s_k}^k\psi_{\by_{k,l_k-1}\phi_{k,l_k-1}s_k}^k\cdots\psi_{\by_{k,1}\phi_{k,1}s_k}^k\Big).
\end{split}
\end{equation}

Moreover, every remaining monomial left after the product of the $m$ Laplacians acting on \eqref{eq_grassmann_monomial} consists of $\hm-\alpha_0+\sum_{k=1}^{m}(l_k-\alpha_k)$ products of the Grassmann variables of the form $\psi_{\bx\xi s}^q$ and $\hm-\talpha_0+\sum_{k=1}^{m}(l_k-\talpha_k)$ products of the Grassmann variables of the form $\opsi_{\bx\xi s}^q$ and is acted by the operator $e^{\D(M(T,\pi,\bs))}$. The determinant bound \eqref{eq_assumption_determinant_bound} and the non-negative symmetric property of $M(T,\pi,\bs)$ together with the fact that all the diagonal elements of $M(T,\pi,\bs)$ are $1$ validate the inequality 
\begin{equation}\label{eq_determinant_bound_application}
\begin{split}
&\Bigg|e^{\D(M(T,\pi,\bs))}\opsi_{\bx_{0,1}\xi_{0,1}s_0}^0\cdots \opsi_{\bx_{0,\hm-\talpha_0}\xi_{0,\hm-\talpha_0}s_0}^0\psi_{\by_{0,1}\phi_{0,1}s_0}^0\cdots \psi_{\by_{0,\hm-\alpha_0}\phi_{0,\hm-\alpha_0}s_0}^0\\
&\quad\cdot\prod_{k=1}^{m}\opsi_{\bx_{k,1}\xi_{k,1}s_k}^k\cdots
 \opsi_{\bx_{k,l_k-\talpha_k}\xi_{k,l_k-\talpha_k}s_k}^k\\
&\qquad\quad\cdot\psi_{\by_{k,1}\phi_{k,1}s_k}^k\cdots \psi_{\by_{k,l_k-\alpha_k}\phi_{k,l_k-\alpha_k}s_k}^k\Big|_{\opsi_X^q=\psi_X^q=\b0\atop \forall q\in\{0,1,\cdots,m\}}\Bigg|\\
&\quad\le 1_{\sum_{q=0}^{m}\alpha_q=\sum_{q=0}^{m}\talpha_q}\cB^{\hm-\alpha_0+\sum_{k=1}^{m}(l_k-\alpha_k)}
\end{split}
\end{equation}
(see \cite[\mbox{Lemma 4.5}]{K} for a detailed proof of the same bound in the case $\cB=4$).

The coefficient of each remaining monomial left after the product of the $m$ Laplacians acting on the monomial \eqref{eq_grassmann_monomial} is a product of $m$ elements of the covariance 
$(G_h(\bx\xi x,\by\phi y))_{(\bx,\xi, x),(\by,\phi, y)\in\G\times\spin\times[0,\beta)_h}$ and going to be multiplied by \\
$\prod_{k=1}^{m}\frac{1}{h}U_{L,l_k}(X_k^{l_k},\Xi_{k}^{l_k},\Phi_k^{l_k})$ and summed over the variables $(X_k^{l_k},\Xi_{k}^{l_k},\Phi_k^{l_k},s_k)\in$\\
$\G^{l_k}\times\spin^{2l_k}\times[0,\beta)_h$ for all $k\in\{1,\cdots,m\}$. By considering that the tree $T$ starts from the vertex $0$ and using the properties \eqref{eq_condition_U}-\eqref{eq_translation_invariance}, a recursive argument through all the lines of $T$ running from the later generation to the earlier generation shows that this sum is bounded by
\begin{equation}\label{eq_recurcive_bound}
\prod_{k=1}^{m}\|U_{L,l_k}\|_{L,l_k}\cD^{m}.
\end{equation}
Note that to derive the upper bound \eqref{eq_recurcive_bound} we repeatedly use the inequality of the form
$$\frac{1}{h}\sum_{s\in[0,\beta)_h}\sum_{(X^l,\Xi^l,\Phi^l)\in\G^l\times \spin^{2l}}|U_{L,l}(X^l,\Xi^l,\Phi^l)||G_h(\by\s y,\bx_j\xi_j s)|\le\|U_{L,l}\|_{L,l}\cD,$$
where $(\by,\s,y)\in \G\times\spin\times [0,\beta)_h$ is any fixed element and $\bx_j$ and $\xi_j$ are $j$-th component of $X^l$ and $\Xi^l$ respectively for some $j\in\{1,\cdots,l\}$.

Lastly remark that the number of terms containing $\alpha_q$ derivatives with respect to the variables $\{\psi_{\bx\xi x}^q\}_{(\bx,\xi,x)\in\G\times \spin\times [0,\beta)_h
 }$ $(\forall q\in\{0,1,\cdots,m\})$ in the expansion $\prod_{\{p,q\}\in
 T}(\Delta_{p,q}+\Delta_{q,p})$ is at most 
$$\prod_{q=0}^{m}\left(\begin{array}{c} d_q \\ \alpha_q \end{array}\right).$$

(Step 4) By summing up all these considerations, replacing the sum over trees by the sum over possible incidence numbers $d_0,\cdots,d_{m}$ satisfying the condition $\sum_{q=0}^{m}d_q=2m$, and using Cayley's theorem on the number of trees with fixed incidence numbers (see, e.g, \cite[\mbox{Corollary 2.2.4}]{W}) and the equality \eqref{eq_phi_equality}, we have
\begin{equation*}
\begin{split}
&|b_{m}|\\
&\le \frac{1}{m!}\frac{1}{\beta
 h}\sum_{s_0\in[0,\beta)_h}\prod_{k=1}^{m}\\
&\quad\cdot\left(\sum_{l_k=1}^{\tn}\sum_{(X^{l_k}_k,\Xi^{l_k}_k,\Phi^{l_k}_k)\in\G^{l_k}\times\spin^{2l_k}}|U_{L,l_k}(X_k^{l_k},\Xi_k^{l_k},\Phi_k^{l_k})|\frac{1}{h}\sum_{s_k\in[0,\beta)_h}\right)\\
&\quad\cdot \sum_{T\in\T(\{0,1,\cdots,m\})}\sup_{\bs\in[0,1]^{m}\atop \pi
 \in \S_{m+1}(T)}\\
&\quad\cdot\Bigg|e^{\D(M(T,\pi,\bs))}\prod_{\{p,q\}\in
 T}(\D_{p,q}+\D_{q,p})\cdot(\text{the monomial
 }\eqref{eq_grassmann_monomial})\Big|_{\opsi_X^q=\psi_X^q=\b0\atop
 \forall q\in\{0,1,\cdots,m\}}\Bigg|\\
&\le \frac{1}{m!}\frac{1}{\beta
 h}\sum_{s_0\in[0,\beta)_h}\prod_{k=1}^{m}\left(\sum_{l_k=1}^{\tn}\|U_{L,l_k}\|_{L,l_k}\cD\right)\sum_{\substack{d_0\in\{1,\cdots,2\hm\}\\ d_k\in\{1,\cdots,2l_k\}\\ \forall k\in\{1,\cdots,m\}}}\frac{(m-1)!}{\prod_{q=0}^{m}(d_q-1)!}1_{\sum_{q=0}^{m}d_q=2m}\\
&\quad\cdot\sum_{\alpha_q,\talpha_q\in\{0,\cdots,d_q\}\atop \forall q\in\{0,1,\cdots,m\}}\prod_{q=0}^{m}1_{\alpha_q+\talpha_q=d_q}1_{\sum_{q=0}^{m}\alpha_q=\sum_{q=0}^{m}\talpha_q}\prod_{q=0}^{m}\left(\begin{array}{c} d_q \\ \alpha_q \end{array}\right)\cB^{\hm-\alpha_0+\sum_{k=1}^{m}(l_k-\alpha_k)}\\
&\quad\cdot1_{\alpha_0\le\hm}1_{\talpha_0\le\hm}\left(\begin{array}{c} \hm \\
		 \alpha_0\end{array}\right)\alpha_0!\left(\begin{array}{c} \hm \\ \talpha_0\end{array}\right)\talpha_0!\prod_{k=1}^{m}1_{\alpha_k\le l_k}1_{\talpha_k\le l_k}\left(\begin{array}{c} l_k \\ \alpha_k\end{array}\right)\alpha_k!\left(\begin{array}{c} \l_k \\ \talpha_k\end{array}\right)\talpha_k!\\
&=\frac{1}{m}\prod_{k=1}^{m}\left(\sum_{l_k=1}^{\tn}\|U_{L,l_k}\|_{L,l_k}\cD\right)\cB^{\hm+\sum_{k=1}^{m}l_k-m}\sum_{\substack{\alpha_0,\talpha_0\in\{0,\cdots,\hm\}\\ \alpha_k,\talpha_k\in\{0,\cdots,l_k\}\\ \forall k\in\{1,\cdots,m\}}}1_{\sum_{q=0}^m\alpha_q=\sum_{q=0}^m\talpha_q=m}\\
&\quad\cdot\prod_{q=0}^{m}(\alpha_q+\talpha_q)\left(\begin{array}{c} \hm \\ \alpha_0\end{array}\right)\left(\begin{array}{c} \hm \\ \talpha_0\end{array}\right)\prod_{k=1}^{m}\left(\begin{array}{c} l_k \\ \alpha_k\end{array}\right)\left(\begin{array}{c} \l_k \\ \talpha_k\end{array}\right).
\end{split}
\end{equation*}
Then, by dropping the constraint $1_{\sum_{q=0}^m\alpha_q=\sum_{q=0}^m\talpha_q=m}$ and using the equality 
$$\sum_{\alpha=0}^l\sum_{\talpha=0}^l(\alpha+\talpha)\left(\begin{array}{c} l \\			\alpha\end{array}\right)\left(\begin{array}{c} l   \\
						      \talpha\end{array}\right)=l4^l,
$$
we obtain 
\begin{equation*}
\begin{split}
|b_{m}|&\le \frac{1}{m}\prod_{k=1}^{m}\left(\sum_{l_k=1}^{\tn}\|U_{L,l_k}\|_{L,l_k}\cD\right)\cB^{\hm+\sum_{k=1}^{m}l_k-m}\hm4^{\hm}\prod_{k=1}^{m}l_k4^{l_k}\\
&= \frac{\hm 4^{\hm}\cB^{\hm}}{m}\left(\sum_{l=1}^{\tn}l4^l\cB^{l-1}\|U_{L,l}\|_{L,l}\cD\right)^{m}.
\end{split}
\end{equation*}
\end{proof}

The evaluation of the tree expansion presented above overcounts the
combinatorial factor, which is defined as the number of monomials appearing in the expansion $\prod_{\{p,q\}\in T}(\D_{p,q}+\D_{q,p})\cdot(\text{the monomial
 }\eqref{eq_grassmann_monomial})$.
If we consider the $4$ point correlation functions for the on-site interaction $V=U\sum_{\bx\in\G}\psi_{\bx\ua}^*\psi_{\bx\da}^*\psi_{\bx\da}\psi_{\bx\ua}$ $(U\in\R)$, the
exact calculation of the combinatorial factor without overcounting is possible as proved in \cite[\mbox{Lemma 4.7}]{K}. Consequently, the perturbative bound is improved in this case.

\begin{proposition}\label{prop_schwinger_functional_hubbard_model_bound}
Assume that a non-singular matrix $G_h$ satisfies \eqref{eq_assumption_determinant_bound}. For any $m\in\N\cup\{0\}$ let $c_m$ denote the coefficient of $\eta^m$ in the Taylor series of the function
\begin{equation*}
\begin{split}
&\eta\longmapsto \\
&\frac{1}{\beta h}\sum_{x\in[0,\beta)_h}\int\opsi_{\hbx_1\ua x}\opsi_{\hbx_2\da x}\psi_{\hby_2\da x}\psi_{\hby_1\ua x}e^{-\eta\frac{U}{h}\sum_{s\in[0,\beta)_h}\opsi_{\bx\ua s}\opsi_{\bx\da s}\psi_{\bx\da s}\psi_{\bx\ua s}}d\mu_{G_h}(\opsi_X,\psi_X)\\
&\cdot \Big/\int e^{-\eta\frac{U}{h}\sum_{s\in[0,\beta)_h}\opsi_{\bx\ua s}\opsi_{\bx\da s}\psi_{\bx\da s}\psi_{\bx\ua s}}d\mu_{G_h}(\opsi_X,\psi_X)
\end{split}
\end{equation*}
around $0\in\C$. The following bound holds. For any $m\in\N\cup\{0\}$
$$|c_m|\le\frac{4\cB^2}{3m+4}\left(\begin{array}{c} 3m+4 \\ m \end{array}\right)(\cD\cB|U|)^m.$$
\end{proposition}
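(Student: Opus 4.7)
The plan is to follow the same four-step strategy as in the proof of Proposition \ref{prop_schwinger_functional_bound}, specialized to the case where both the external four-field monomial and every interaction vertex have the common on-site Hubbard spin structure $\opsi_{\bx\ua s}\opsi_{\bx\da s}\psi_{\bx\da s}\psi_{\bx\ua s}$. The sharper bound will arise by replacing the crude combinatorial factor $\prod_q l_q 4^{l_q}$ used in Step~(4) of that proof by the exact count provided by \cite[\mbox{Lemma 4.7}]{K}.

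Concretely, I would first introduce an auxiliary source coupling $\lambda$ for the external monomial
\begin{equation*}
V_{\mathrm{ext}} := \frac{1}{\beta h}\sum_{x\in[0,\beta)_h}\opsi_{\hbx_1\ua x}\opsi_{\hbx_2\da x}\psi_{\hby_2\da x}\psi_{\hby_1\ua x}
\end{equation*}
and form the generating functional $W(\lambda,\eta):=\log\int e^{\lambda V_{\mathrm{ext}}-\eta(U/h)\sum_{\bx,s}\opsi_{\bx\ua s}\opsi_{\bx\da s}\psi_{\bx\da s}\psi_{\bx\ua s}}d\mu_{G_h}$. By the same argument as in (Step~1)--(Step~2) of the proof of Proposition \ref{prop_schwinger_functional_bound}, $c_m=(1/m!)\,\partial_\lambda\partial_\eta^m W|_{\lambda=\eta=0}$, and the tree formula \cite[\mbox{Theorem 3}]{SW} expresses this mixed derivative as a sum over trees $T\in\T(\{0,1,\ldots,m\})$ with vertex~$0$ carrying $V_{\mathrm{ext}}$ and vertices $1,\ldots,m$ each carrying a copy of the on-site quartic monomial. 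For each tree I would bound the surviving $e^{\D(M(T,\pi,\bs))}$-evaluation using hypothesis \eqref{eq_assumption_determinant_bound} (with appropriate unit vectors in $\C^2$ encoding the spin labels) by a suitable power of $\cB$, and bound each covariance factor produced by $\prod_{\{p,q\}\in T}(\D_{p,q}+\D_{q,p})$ by $\cD|U|$ via the $L^1$-recursion leading to \eqref{eq_recurcive_bound}.

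The decisive gain over Proposition \ref{prop_schwinger_functional_bound} is combinatorial. The strong spin constraint of the Hubbard on-site vertex (exactly one $\opsi$ and one $\psi$ of each spin per vertex) forces many of the Laplacian contractions to vanish. In place of the overcount $\prod_q l_q 4^{l_q}$ one invokes \cite[\mbox{Lemma 4.7}]{K} to obtain the \emph{exact} number of surviving monomials produced by $\prod_{\{p,q\}\in T}(\D_{p,q}+\D_{q,p})$ acting on the product of $(m+1)$ Hubbard monomials. After replacing the sum over trees by the sum over incidence sequences $d_0,\ldots,d_m\in\{1,\ldots,4\}$ with $\sum_q d_q=2m$ via Cayley's theorem, I expect the resulting weighted sum to collapse into the Fuss--Catalan number $\frac{4}{3m+4}\binom{3m+4}{m}$, yielding the claimed estimate.

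The main obstacle is precisely this final combinatorial identification: verifying that the exact counts from \cite[\mbox{Lemma 4.7}]{K}, when convolved with Cayley's tree count and summed over admissible incidence sequences, reorganize into the ternary-tree Fuss--Catalan generating function. This is the reason the bound takes the specific form $(3m+4)^{-1}\binom{3m+4}{m}$, which is strictly sharper than what one would obtain by specializing Proposition \ref{prop_schwinger_functional_bound} to the on-site vertex (whose associated norm satisfies $\|U_{c,L}\|_{L,2}\ge |U|/2$ as noted in the remark following Theorem~\ref{thm_exponential_decay_hubbard_model}). Once this identity is in hand, assembling the factors $4\cB^2$ and $(\cD\cB|U|)^m$ from the $\cB$- and $\cD$-bounds of the previous two steps completes the proof.
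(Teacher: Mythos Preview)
Your proposal is correct and matches the paper's approach. The paper's proof is simply a one-line reference to \cite[\mbox{Section 4}]{K} and \cite[\mbox{Lemma 4.8}]{K}, and you have accurately reconstructed its content: the tree expansion of Proposition \ref{prop_schwinger_functional_bound} with the crude combinatorial overcount replaced by the exact count of \cite[\mbox{Lemma 4.7}]{K}, which after summing over incidence sequences via Cayley's theorem yields the Fuss--Catalan number $\frac{4}{3m+4}\binom{3m+4}{m}$; the ``main obstacle'' you flag is precisely what \cite[\mbox{Lemma 4.8}]{K} supplies.
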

\begin{proof} The argument in \cite[\mbox{Section 4}]{K} leading to the bound \cite[\mbox{Lemma 4.8}]{K} straightforwardly applies to deduce the claimed bound.
\end{proof}

\section{Exponential decay of the correlation functions}\label{sec_exponential_decay_correlation}

In this section we prove the exponential decay property of the
correlation functions. The Grassmann integral formulation
\eqref{eq_correlation_grassmann_integral} serves the base of our analysis. 
In the subsection
\ref{subsec_exponential_decay_correlation} we will see that the $U(1)$-invariance property of the Grassmann integral leads to a
replacement of the covariance $C_h$ by a modified covariance in the
Grassmann integral formulation. In the following subsections
\ref{subsec_determinant_bound}-\ref{subsec_exponential_decay_covariance}
we study properties of such perturbed covariance matrices as
preliminaries for proving our main theorems.

\subsection{Determinant bound on modified covariance matrices}\label{subsec_determinant_bound}
Here we prove the determinant bound on a generalized covariance matrix of the form 
\begin{equation}\label{eq_extended_covariance}
C_{\cE}(\bx\xi x,\by\phi y):=\frac{\delta_{\xi,\phi}}{L^d}\sum_{\bk\in\G^*}e^{i\<\bk,\by-\bx\>}e^{-(y-x){\cE}_{\bk}}\left(\frac{1_{y-x\le
0}}{1+e^{\beta {\cE}_{\bk}}}-\frac{1_{y-x>
0}}{1+e^{-\beta {\cE}_{\bk}}}\right)
\end{equation}
$(\forall (\bx,\xi,x),(\by,\phi,y)\in\G\times\spin\times[0,\beta))$. In
\eqref{eq_extended_covariance} $\cE_{\bk}:\G^*\to\C$ is any function satisfying 
\begin{equation}\label{eq_regular_condition}
|\Im \cE_{\bk}|<\frac{\pi}{\beta}\ (\forall\bk\in\G^*).
\end{equation}
Note that the inequality \eqref{eq_regular_condition} implies $|1+e^{\beta\cE_{\bk}}|,\ |1+e^{-\beta\cE_{\bk}}|>0$ $(\forall\bk\in\G^*)$, and thus $C_{\cE}$ is well-defined. The covariance \eqref{eq_extended_covariance} is a generalization of the covariance \eqref{eq_covariance} in which the dispersion relation $E_{\bk}$ is real-valued.

The volume- and temperature-independent determinant bound on the covariance \eqref{eq_covariance} was established by Pedra and Salmhofer in \cite{PS} by extending the notion of the Gram bound on determinant. However, the Gram representation of the covariance presented in \cite[\mbox{Section 4.1}]{PS} does not apply to  the case that the dispersion relation is complex-valued. We need to construct a Gram representation of the covariance matrix \eqref{eq_extended_covariance} to bound its determinant.

We use the following abstract framework proved by Pedra and Salmhofer as a generalization of Gram's inequality.

\begin{lemma}{\cite[\mbox{Theorem 1.3}]{PS}}\label{lem_abstract_determinant_bound}
Assume that $K,k\in\N\cup\{0\}$ satisfy $k+K\ge 1$. Let $\X$ be any
 set. Assume that $J$ is a set totally ordered under `$\preceq$' and
 `$\prec$' is a strict total order in $J$. Let $\zeta_l$, $\zeta_l'$ be
 maps from $\X$ to $J$ $(\forall l\in\{1,\cdots,k+K\})$. Let $\cH$ be a
 Hilbert space with the inner product $\<\cdot,\cdot\>_{\cH}$ and
 $\|\cdot\|_{\cH}$ denote the norm induced by $\<\cdot,\cdot\>_{\cH}$. Define a matrix $M_{x,y}$ $(x,y\in\X)$ by
\begin{equation}\label{eq_gram_representation}
M_{x,y}:=\<f_x^0,g_y^0\>_{\cH}+\sum_{l=1}^k\<f_x^l,g_y^l\>_{\cH}1_{\zeta_l'(x)\succ\zeta_l(y)}+\sum_{l=k+1}^{K+k}\<f_x^l,g_y^l\>_{\cH}1_{\zeta_l'(x)\succeq\zeta_l(y)},
\end{equation}
where $f_x^l$, $g_y^l\in\cH$. If 
$$\sup_{x\in\X}\max\{\|f_x^l\|_{\cH},\|g_x^l\|_{\cH}\}\le\g_l\ (\forall l\in\{0,\cdots,K+k\}),$$
the following bound holds. For any $m, n\in\N$ and $x_1,\cdots,x_n,y_1,\cdots,y_n\in\X$
\begin{equation}\label{eq_PS_bound}
\sup_{\substack{\bu_j,\bv_j\in\C^m\\
 \|\bu_j\|_{\C^m},\|\bv_j\|_{\C^m}\le 1\\ \forall j\in
  \{1,\cdots,n\}}}|\det(\<\bu_j,\bv_k\>_{\C^m}M_{x_j,y_k})_{1\le j,k\le
  n}|\le \left(\sum_{l=0}^{k+K}\g_l\right)^{2n}.
\end{equation}
\end{lemma}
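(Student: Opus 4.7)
This is Theorem 1.3 of Pedra and Salmhofer \cite{PS}, so the plan is to follow their proof: construct an explicit Hilbert-space (Gram) representation of the entries $M_{x,y}$ and then apply Hadamard's inequality for Gram matrices, with the twist $\langle\bu_j,\bv_k\rangle_{\C^m}$ absorbed into a tensor factor.

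First I would build an enlarged Hilbert space $\widehat{\cH}$ together with maps $x\mapsto F_x$, $y\mapsto G_y$ into $\widehat{\cH}$ such that $M_{x,y}=\langle F_x,G_y\rangle_{\widehat\cH}$ and $\|F_x\|_{\widehat\cH},\|G_y\|_{\widehat\cH}\le\sum_{l=0}^{K+k}\gamma_l$ uniformly in $x,y$. The $\cH$-summand $\langle f_x^0,g_y^0\rangle_\cH$ is already a Gram-type inner product. For each $l\in\{1,\ldots,K+k\}$ I would introduce an auxiliary space $\mathcal{K}_l$ of functions on $J$ equipped with a suitably weighted inner product, and construct vectors $\alpha_l(x),\beta_l(y)\in\mathcal{K}_l$ whose pairing equals the step factor $1_{\zeta_l'(x)\succ\zeta_l(y)}$ (or its non-strict variant). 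Setting $F_x:=f_x^0\oplus\bigoplus_l(f_x^l\otimes\alpha_l(x))$ and $G_y:=g_y^0\oplus\bigoplus_l(g_y^l\otimes\beta_l(y))$ in $\widehat{\cH}:=\cH\oplus\bigoplus_l(\cH\otimes\mathcal{K}_l)$ then yields $M_{x,y}=\langle F_x,G_y\rangle_{\widehat\cH}$ by orthogonality of the direct summands.

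Given this Gram representation, the matrix $(\langle\bu_j,\bv_k\rangle_{\C^m}M_{x_j,y_k})_{j,k}$ is the Gram matrix of the vectors $\{\bu_j\otimes F_{x_j}\}_j$ paired with $\{\bv_k\otimes G_{y_k}\}_k$ in $\C^m\otimes\widehat\cH$. Hadamard's inequality bounds its modulus by $\prod_j\|\bu_j\otimes F_{x_j}\|\cdot\prod_k\|\bv_k\otimes G_{y_k}\|$, which in turn is $\le(\sum_l\gamma_l)^{2n}$ by $\|\bu_j\|_{\C^m},\|\bv_k\|_{\C^m}\le 1$ together with the uniform bounds on $\|F_x\|$ and $\|G_y\|$; this is the claim.

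The main obstacle is the construction of $\alpha_l,\beta_l$ with the simultaneous norm bounds $\|f_x^l\otimes\alpha_l(x)\|,\|g_y^l\otimes\beta_l(y)\|\le\gamma_l$ intact. The naive asymmetric choice $\alpha_l(x)(t)=1_{t\prec\zeta_l'(x)}$ paired with $\beta_l(y)(t)=\delta_{t,\zeta_l(y)}$ reproduces the indicator but makes $\|\alpha_l(x)\|$ depend unboundedly on the number of predecessors of $\zeta_l'(x)$ in $J$. The Pedra--Salmhofer trick is to equip $\mathcal{K}_l$ with a suitable non-trivial measure and to redistribute that measure symmetrically between $\alpha_l$ and $\beta_l$, with a slight shift to handle the $\succ$ versus $\succeq$ distinction, so that both vectors end up with norm $\le 1$ and the indicator identity is preserved; verifying this symmetric splitting is the technical heart of the argument.
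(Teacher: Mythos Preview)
The paper does not supply a proof of this lemma at all: it is stated as \cite[Theorem 1.3]{PS} and left without a proof environment. The only argument the paper contributes is the subsequent Remark, which handles the extra factor $\langle\bu_j,\bv_k\rangle_{\C^m}$ (and the generalization from $m=n$ to arbitrary $m$) by passing from $\cH$ to $\C^m\otimes\cH$ and from $f_x^l,g_y^l$ to $\bu_x\otimes f_x^l,\bv_y\otimes g_y^l$. Your proposal contains exactly this tensor-product reduction, so on the one point where a comparison with the paper is possible, you agree with it.

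Beyond that, your sketch is an outline of how the Pedra--Salmhofer theorem itself is proved, not of anything the present paper does. The outline is in the right spirit (Gram representation plus Hadamard), but as you yourself flag, the substantive work is producing $\alpha_l(x),\beta_l(y)\in\mathcal{K}_l$ with $\langle\alpha_l(x),\beta_l(y)\rangle_{\mathcal{K}_l}=1_{\zeta_l'(x)\succ\zeta_l(y)}$ and $\|\alpha_l(x)\|_{\mathcal{K}_l},\|\beta_l(y)\|_{\mathcal{K}_l}\le 1$ simultaneously. Your proposal does not actually carry this out; you only say that a ``symmetric splitting'' exists. That is the entire content of the theorem, so as a self-contained proof the proposal has a gap precisely there. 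For the purposes of this paper, however, that gap is immaterial, since the paper is content to cite \cite{PS} for it.
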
  

\begin{remark}
The determinant bound \cite[\mbox{Theorem 1.3}]{PS} was originally claimed
 for the case that $m=n$ in \eqref{eq_PS_bound}. The bound for general $m$ immediately follows from \eqref{eq_PS_bound} without 
the coefficient $\<\bu_j,\bv_k\>_{\C^m}$. In fact it is sufficient to redefine $\cH$ by $\C^m \otimes \cH$ and $M_{x,y}$ by 
\begin{equation*}
\begin{split}
M_{x,y}:=&\<\bu_x\otimes f_x^0,\bv_y\otimes g_y^0\>_{\C^m \otimes\cH}+\sum_{l=1}^k\<\bu_x\otimes f_x^l,\bv_y\otimes g_y^l\>_{\C^m \otimes\cH}1_{\zeta_l'(x)\succ\zeta_l(y)}\\
&+\sum_{l=k+1}^{K+k}\<\bu_x\otimes f_x^l, \bv_y\otimes g_y^l\>_{\C^m \otimes\cH}1_{\zeta_l'(x)\succeq\zeta_l(y)}
\end{split}
\end{equation*}
for any vectors $\bu_x,\bv_x\in\C^m$ with $\|\bu_x\|_{\C^m},\|\bv_x\|_{\C^m}\le 1$.
\end{remark}

As an application of Lemma \ref{lem_abstract_determinant_bound} we prove the following statement.
\begin{proposition}\label{prop_extended_determinant_bound}
If the inequality
\begin{equation}\label{eq_imaginary_condition}
|\Im \cE_{\bk}|\le\frac{\pi}{2\beta}\ (\forall \bk\in\G^*)
\end{equation}
holds, then for any $m,n\in\N$ and $(\bx_j,\xi_j,x_j),(\by_j,\phi_j,y_j)\in\G\times\spin\times [0,\beta)$ $(\forall j\in\{1,\cdots,n\})$
\begin{equation}\label{eq_extended_determinant_bound}
\sup_{\substack{\bu_j,\bv_j\in\C^m\\ \|\bu_j\|_{\C^m},\|\bv_j\|_{\C^m}\le 1\\ \forall j\in \{1,\cdots,n\}}}|\det(\<\bu_j,\bv_k\>_{\C^m}C_{\cE}(\bx_j\xi_jx_j,\by_k\phi_ky_k))_{1\le j,k\le n}|\le 4^n.
\end{equation}
\end{proposition}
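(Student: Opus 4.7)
The plan is to apply Lemma \ref{lem_abstract_determinant_bound} after establishing a Gram representation of $C_\cE$ whose norm parameters sum to $2$, so that Lemma \ref{lem_abstract_determinant_bound} yields $(\g_0+\cdots+\g_{k+K})^{2n}=4^n$. First I would decompose the covariance according to time order,
\begin{equation*}
C_\cE(\bx\xi x, \by\phi y) = \<f^1_{(\bx,\xi,x)}, g^1_{(\by,\phi,y)}\>_\cH\,1_{y\le x} + \<f^2_{(\bx,\xi,x)}, g^2_{(\by,\phi,y)}\>_\cH\,1_{y>x},
\end{equation*}
which fits the form of Lemma \ref{lem_abstract_determinant_bound} with $k=K=1$ by setting $\zeta_1(\bx,\xi,x)=\zeta_1'(\bx,\xi,x)=x$ under the natural order on $[0,\beta)$ (giving $1_{\zeta_1'(x)\succeq\zeta_1(y)}=1_{y\le x}$) and $\zeta_2(\bx,\xi,x)=\zeta_2'(\bx,\xi,x)=-x$ (giving $1_{\zeta_2'(x)\succ\zeta_2(y)}=1_{y>x}$).

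Next I would construct the vectors $f^l, g^l$ explicitly on a suitable $\cH$. Taking $\cH$ to be the direct sum of two copies of $l^2(\G^*)\otimes\C^\spin$ indexed by the sign of $\Re\cE_\bk$ (possibly augmented by an auxiliary $L^2([0,\beta])$ factor when a rank-one split is insufficient), I would use, on the summand with $\Re\cE_\bk\ge 0$, the factorization
\begin{equation*}
\frac{e^{(x-y)\cE_\bk}}{1+e^{\beta\cE_\bk}}=\overline{e^{(x-\beta)\cE_\bk}}\cdot\frac{e^{(\beta-y)\cE_\bk}}{1+e^{\beta\cE_\bk}}
\end{equation*}
to define $f^1$ and $g^1$ on that summand; on the summand with $\Re\cE_\bk<0$ I would use the equivalent identity $e^{(x-y)\cE_\bk}/(1+e^{\beta\cE_\bk})=e^{(x-y-\beta)\cE_\bk}/(1+e^{-\beta\cE_\bk})$, rearranged so that the denominator $(1+e^{-\beta\cE_\bk})^{-1}$ absorbs the dangerous exponential growth. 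The anti-Fermi piece $-(\delta_{\xi,\phi}/L^d)\sum_\bk e^{i\<\bk,\by-\bx\>}e^{(x-y)\cE_\bk}/(1+e^{-\beta\cE_\bk})$ in the $1_{y>x}$ term is split symmetrically. The critical norm estimate uses the hypothesis $|\Im\cE_\bk|\le \pi/(2\beta)$, which forces $\cos(\beta\Im\cE_\bk)\ge 0$ and hence
\begin{equation*}
|1+e^{\pm\beta\cE_\bk}|^2 = 1+2e^{\pm\beta\Re\cE_\bk}\cos(\beta\Im\cE_\bk) + e^{\pm 2\beta\Re\cE_\bk} \ge \max(1, e^{\pm 2\beta\Re\cE_\bk}).
\end{equation*}
Combined with $x,y\in[0,\beta)$, this lower bound dominates every potentially large exponential appearing in $f^l, g^l$, yielding $\|f^l_{(\bx,\xi,x)}\|_\cH, \|g^l_{(\by,\phi,y)}\|_\cH\le 1$; Lemma \ref{lem_abstract_determinant_bound} with $\g_1=\g_2=1$ then delivers the bound $4^n$.

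The main obstacle is that no single rank-one factorization of the Fermi piece has both factors uniformly bounded in $\bk$: the sign of $\Re\cE_\bk$ dictates which of $(1+e^{\beta\cE_\bk})^{-1}$ or $(1+e^{-\beta\cE_\bk})^{-1}$ must absorb the exponential, and the two prescriptions are mutually incompatible at the level of a rank-one inner product. Introducing the direct-sum Hilbert space structure indexed by the sign of $\Re\cE_\bk$ (or, equivalently, enlarging $\cH$ by an $L^2([0,\beta])$ factor so that higher-rank factorizations become available) is precisely what reconciles both regimes, and is the essential modification beyond the Pedra--Salmhofer argument for the real-dispersion case.
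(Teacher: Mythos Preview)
Your high-level plan --- decompose by time order, split by $\sgn(\Re\cE_{\bk})$, use $\cos(\beta\Im\cE_{\bk})\ge 0$ to get $|1+e^{\pm\beta\cE_{\bk}}|^2\ge\max(1,e^{\pm 2\beta\Re\cE_{\bk}})$, then invoke Lemma~\ref{lem_abstract_determinant_bound} with $\g_1+\g_2=2$ --- is exactly the skeleton of the paper's argument. The gap is in the Gram vectors themselves. First, the displayed identity is false when $\Im\cE_{\bk}\neq 0$: writing $\cE_{\bk}=R_{\bk}+iI_{\bk}$, your right-hand side carries the phase $e^{i(2\beta-x-y)I_{\bk}}$ rather than $e^{i(x-y)I_{\bk}}$; since the inner product is sesquilinear, producing $e^{x\cE_{\bk}}$ after conjugation forces $f_x(\bk)\propto e^{x\overline{\cE_{\bk}}}$, not $e^{x\cE_{\bk}}$. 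Second, and more seriously, even the corrected rank-one split on $l^2(\G^*)$ cannot yield $\|f^l\|,\|g^l\|\le 1$ uniformly in $\cE_{\bk}$. Any factorisation $\overline{f_x(\bk)}g_y(\bk)=e^{(x-y)\cE_{\bk}}/(1+e^{\beta\cE_{\bk}})$ forces $|f_x(\bk)|=|A(\bk)|e^{xR_{\bk}}$, $|g_y(\bk)|=|B(\bk)|e^{-yR_{\bk}}$ with $|A||B|=|1+e^{\beta\cE_{\bk}}|^{-1}$; bounding both suprema over $[0,\beta)$ then requires $|1+e^{\beta\cE_{\bk}}|\ge e^{\beta|R_{\bk}|}$, whereas your estimate yields only $e^{\beta\max(R_{\bk},0)}$. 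Passing to the form with denominator $1+e^{-\beta\cE_{\bk}}$ on the $R_{\bk}<0$ summand runs into the same shortfall (one needs $|1+e^{-\beta\cE_{\bk}}|\ge e^{-2\beta R_{\bk}}$, but only $e^{-\beta R_{\bk}}$ is available). So the finite-dimensional Hilbert space is provably insufficient.

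The paper's remedy is to enlarge $\cH$ by an $L^2(\R)$ factor via the Poisson representation
\[
e^{-tR}=\frac{R}{\pi}\int_{\R}\frac{e^{itv}}{v^2+R^2}\,dv\qquad(t\ge 0,\ R>0),
\]
applied to the real exponential $e^{-t\,|\Re\cE_{\bk}|}$ after separating off the pure phase $e^{-it\Im\cE_{\bk}}$. This converts that factor into an honest $L^2(\R)$ inner product, decoupling real and imaginary parts of $\cE_{\bk}$; the resulting vectors $f^{\pm 1},g^{\pm 1}\in L^2(\G^*\times\spin\times\R)$ then combine (via the direct-sum-by-sign trick you describe) into two Gram pairs of norm exactly $\le 1$, and Lemma~\ref{lem_abstract_determinant_bound} gives $4^n$. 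So your parenthetical about an auxiliary $L^2$ factor is not a technical convenience but the heart of the matter --- and the right domain is $\R$ (through the Cauchy kernel), not $[0,\beta]$.
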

\begin{remark}The upper bound $4^n$ claimed above is same as that obtained in \cite[\mbox{Theorem 2.4}]{PS} for the covariance matrices with real-valued dispersion relations.
\end{remark}
\begin{proof}[Proof of Proposition \ref{prop_extended_determinant_bound}]
Our argument here closely follows \cite[\mbox{Section 4.1}]{PS}. 
Since $\G^*$ is a finite set, for a sufficiently small $\eps>0$ we may assume that $\Re(\cE_{\bk}+\eps)\neq 0$ for all $\bk\in\G^*$. Set $\cE_{\bk,\eps}:=\cE_{\bk}+\eps$ $(\forall \bk\in\G^*)$ and define the parameterized covariance $C_{\cE}^{\eps}$ by 
$$C_{\cE}^{\eps}(\bx\xi x,\by \phi y):=\frac{\delta_{\xi,\phi}}{L^d}\sum_{\bk\in\G^*}e^{i\<\bk,\by-\bx\>}e^{-(y-x){\cE}_{\bk,\eps}}\left(\frac{1_{y-x\le
0}}{1+e^{\beta {\cE}_{\bk,\eps}}}-\frac{1_{y-x>
0}}{1+e^{-\beta {\cE}_{\bk,\eps}}}\right).
$$
We construct a Gram representation of $C_{\cE}^{\eps}$ fitting in the form \eqref{eq_gram_representation} to apply Lemma \ref{lem_abstract_determinant_bound}. Observe the following decomposition. 
\begin{equation}\label{eq_covariance_decomposition}
\begin{split}
& C_{\cE}^{\eps}(\bx\xi x,\by \phi y)=\\
&1_{y-x\le 0}\frac{\delta_{\xi,\phi}}{L^d}\sum_{\bk\in\G^*}e^{i\<\bk,\by-\bx\>}\left(1_{\Re \cE_{\bk,\eps}>0}\frac{e^{-(\beta+y-x){\cE_{\bk,\eps}}}}{1+e^{-\beta{\cE_{\bk,\eps}}}}+1_{\Re \cE_{\bk,\eps}<0}\frac{e^{-(x-y){(-\cE_{\bk,\eps})}}}{1+e^{\beta{\cE_{\bk,\eps}}}}\right)\\
&-1_{y-x> 0}\frac{\delta_{\xi,\phi}}{L^d}\sum_{\bk\in\G^*}e^{i\<\bk,\by-\bx\>}\left(1_{\Re \cE_{\bk,\eps}>0}\frac{e^{-(y-x){\cE_{\bk,\eps}}}}{1+e^{-\beta{\cE_{\bk,\eps}}}}+1_{\Re \cE_{\bk,\eps}<0}\frac{e^{-(\beta+x-y){(-\cE_{\bk,\eps})}}}{1+e^{\beta{\cE_{\bk,\eps}}}}\right).\end{split}
\end{equation}
Note that for $t\ge 0$ and $\cE\in\C$ with $|\Im \cE|<\pi/\beta$ and $\Re \cE>0$
\begin{equation}\label{eq_fermi_formula}
\begin{split}
\frac{e^{-t\cE}}{1+e^{-\beta{\cE}}}&=\frac{e^{-it\Im \cE}(1+e^{-\beta{\overline{\cE}}})}{|1+e^{-\beta{\cE}}|^2}e^{-t\Re \cE}\\
&= \frac{e^{-it\Im \cE}(1+e^{-\beta{\overline{\cE}}})}{|1+e^{-\beta{\cE}}|^2}\frac{\Re \cE}{\pi}\int_{\R}dv\frac{e^{itv}}{v^2+(\Re \cE)^2},
\end{split}
\end{equation}
where $\overline{\cE}$ denotes the complex conjugate of $\cE$. 

By substituting the formula \eqref{eq_fermi_formula} into \eqref{eq_covariance_decomposition} we obtain
\begin{equation}\label{eq_covariance_reformation}
\begin{split}
& C_{\cE}^{\eps}(\bx\xi x,\by \phi y)=\frac{\delta_{\xi,\phi}}{L^d}\sum_{\bk\in\G^*}e^{i\<\bk,\by-\bx\>}\\
&\cdot\Bigg(1_{y-x\le 0}\Bigg(1_{\Re \cE_{\bk,\eps}>0}\frac{e^{-i(\beta +y -x)\Im \cE_{\bk,\eps}}(1+e^{-\beta{\overline{\cE_{\bk,\eps}}}})}{|1+e^{-\beta{\cE_{\bk,\eps}}}|^2}\frac{\Re \cE_{\bk,\eps}}{\pi}\int_{\R}dv\frac{e^{i(\beta+y-x)v}}{v^2+(\Re \cE_{\bk,\eps})^2}\\
&+1_{\Re \cE_{\bk,\eps}<0}\frac{e^{-i(x -y)\Im(- \cE_{\bk,\eps})}(1+e^{\beta\overline{\cE_{\bk,\eps}}})}{|1+e^{\beta{\cE_{\bk,\eps}}}|^2}\frac{\Re(- \cE_{\bk,\eps})}{\pi}\int_{\R}dv\frac{e^{i(x-y)v}}{v^2+(\Re (-\cE_{\bk,\eps}))^2}\Bigg)\\
&-1_{y-x> 0}\Bigg(1_{\Re \cE_{\bk,\eps}>0}\frac{e^{-i(y -x)\Im \cE_{\bk,\eps}}(1+e^{-\beta{\overline{\cE_{\bk,\eps}}}})}{|1+e^{-\beta{\cE_{\bk,\eps}}}|^2}\frac{\Re \cE_{\bk,\eps}}{\pi}\int_{\R}dv\frac{e^{i(y-x)v}}{v^2+(\Re \cE_{\bk,\eps})^2}\\
&+1_{\Re \cE_{\bk,\eps}<0}\frac{e^{-i(\beta+ x -y)\Im(- \cE_{\bk,\eps})}(1+e^{\beta\overline{\cE_{\bk,\eps}}})}{|1+e^{\beta{\cE_{\bk,\eps}}}|^2}\frac{\Re(- \cE_{\bk,\eps})}{\pi}\int_{\R}dv\frac{e^{i(\beta+x-y)v}}{v^2+(\Re (-\cE_{\bk,\eps}))^2}\Bigg)\Bigg).
\end{split}
\end{equation}

Let us define the Hilbert space $L^2(\G^*\times \spin\times\R)$ equipped with the inner product $\<\cdot,\cdot\>_{L^2(\G^*\times \spin\times\R)}$ given by
$$\<f,g\>_{L^2(\G^*\times \spin\times\R)}:=\frac{1}{L^d}\sum_{\bk\in\G^*}\sum_{\xi\in\spin}\int_{\R}dv \overline{f(\bk,\xi,v)}g(\bk,\xi,v).
$$
For $(\bx,\xi,x)\in\G\times\spin\times[-\beta,\beta]$ and $a\in\{-1,1\}$ define vectors $f_{\bx,\xi, x}^{a}$, $g_{\bx,\xi, x}^{a}\in L^2(\G^*\times \spin\times\R)$ by
\begin{equation*}
\begin{split}
&f_{\bx,\xi, x}^{a}(\bk,\phi,v)\\
&\ :=1_{\Re(a\cE_{\bk,\eps})>0}\delta_{\xi,\phi}e^{i\<\bk,\bx\>}e^{-ix(\Im(a\cE_{\bk,\eps})-v)}\frac{1+e^{-\beta a \cE_{\bk,\eps}} }{|1+e^{-\beta a\cE_{\bk,\eps}}|}\sqrt{\frac{|\Re \cE_{\bk,\eps}|}{\pi}}\frac{1}{iv +\Re(a\cE_{\bk,\eps})},\\
&g_{\bx,\xi, x}^{a}(\bk,\phi,v)\\
&\ :=1_{\Re(a\cE_{\bk,\eps})>0}\delta_{\xi,\phi}e^{i\<\bk,\bx\>}e^{-ix(\Im(a\cE_{\bk,\eps})-v)}\frac{1}{|1+e^{-\beta a\cE_{\bk,\eps}}|}\sqrt{\frac{|\Re \cE_{\bk,\eps}|}{\pi}}\frac{1}{iv +\Re(a\cE_{\bk,\eps})}.
\end{split}
\end{equation*}  
Note that
\begin{equation}\label{eq_vector_orthogonal}
\begin{split}
\<f_{\bx,\xi ,x}^{a},f_{\by, \phi, y}^{-a}\>_{L^2(\G^*\times\spin\times\R)}&=\<f_{\bx,\xi, x}^{a},g_{\by, \phi, y}^{-a}\>_{L^2(\G^*\times\spin\times\R)}\\
&=\<g_{\bx,\xi, x}^{a},g_{\by, \phi, y}^{-a}\>_{L^2(\G^*\times\spin\times\R)}=0,
\end{split}
\end{equation}
and
\begin{equation}\label{eq_vector_norm}
\begin{split}
&\|f_{\bx,\xi, x}^{a}\|^2_{L^2(\G^*\times\spin\times\R)}=\frac{1}{L^d}\sum_{\bk\in\G^*}1_{\Re(a\cE_{\bk,\eps})>0},\\
& \|g_{\bx,\xi, x}^{a}\|^2_{L^2(\G^*\times\spin\times\R)}=\frac{1}{L^d}\sum_{\bk\in\G^*}1_{\Re(a\cE_{\bk,,\eps})>0}\frac{1}{|1+e^{-\beta a\cE_{\bk,\eps}}|^2}.
\end{split}
\end{equation}
The equalities \eqref{eq_vector_orthogonal}-\eqref{eq_vector_norm} imply that
\begin{equation}\label{eq_gram_norm_estimate_f}
\|f_{\bx,\xi, x}^{a}+f_{\by,\phi, y}^{-a}\|^2_{L^2(\G^*\times\spin\times\R)}=\frac{1}{L^d}\sum_{\bk\in\G^*}\left(1_{\Re(a\cE_{\bk,\eps})>0}+1_{\Re(-a\cE_{\bk,\eps})>0}\right)= 1
\end{equation}
and under the assumption that $|\Im \cE_{\bk}|\le \pi/(2\beta)$ 
\begin{equation}\label{eq_gram_norm_estimate_g}                    
\begin{split}
&\|g_{\bx,\xi, x}^{a}+g_{\by,\phi, y}^{-a}\|^2_{L^2(\G^*\times\spin\times\R)}\\
&\le\frac{1}{L^d}\sum_{\bk\in\G^*}\Bigg(1_{\Re(a\cE_{\bk,\eps})>0}\frac{1}{e^{-2\beta\Re(a\cE_{\bk,\eps})}+2\cos(\beta \Im \cE_{\bk,\eps})e^{-\beta\Re(a\cE_{\bk,\eps})}+1}\\
&\qquad\qquad\qquad +1_{\Re(-a\cE_{\bk,\eps})>0}\frac{1}{e^{2\beta\Re(a\cE_{\bk,\eps})}+2\cos(\beta \Im \cE_{\bk,\eps})e^{\beta\Re(a\cE_{\bk,\eps})}+1}\Bigg)\\
&\le\frac{1}{L^d}\sum_{\bk\in\G^*}\left(1_{\Re(a\cE_{\bk,\eps})>0}\frac{1}{e^{-2\beta\Re(a\cE_{\bk,\eps})}+1}+1_{\Re(-a\cE_{\bk,\eps})>0}\frac{1}{e^{2\beta\Re(a\cE_{\bk,\eps})}+1}\right)\le 1
\end{split}
\end{equation}
for all $(\bx,\xi,x),(\by,\phi,y)\in\G\times\spin\times[-\beta,\beta]$ and $a\in\{-1,1\}$.

The equality \eqref{eq_covariance_reformation} can be written as
\begin{equation}\label{eq_covariance_gram}
\begin{split}
C_{\cE}^{\eps}(\bx\xi x,\by \phi y)
&=1_{y-x\le 0}\<f_{\bx,\xi, x-\beta}^{1}+f_{\bx,\xi, -x}^{-1},g_{\by,\phi, y}^{1}+g_{\by,\phi, -y}^{-1}\>_{L^2(\G^*\times\spin\times\R)}\\
&\quad +1_{y-x> 0}\<-f_{\bx,\xi, x}^{1}-f_{\bx,\xi, -x}^{-1},g_{\by,\phi, y}^{1}+g_{\by,\phi, \beta -y}^{-1}\>_{L^2(\G^*\times\spin\times\R)}.
\end{split}
\end{equation}
The representation \eqref{eq_covariance_gram} and the bounds \eqref{eq_gram_norm_estimate_f}-\eqref{eq_gram_norm_estimate_g} enable us to apply Lemma \ref{lem_abstract_determinant_bound} to obtain the bound \eqref{eq_extended_determinant_bound} for $C_{\cE}^{\eps}$. Then, by sending $\eps\searrow 0$ and the continuity of $C_{\cE}^{\eps}$ with respect to $\eps$ we complete the proof.
\end{proof}

\subsection{Exponential decay of the covariance}\label{subsec_exponential_decay_covariance}
In this subsection we show the exponential decay property of the covariance and find a volume-independent upper bound on its integral as the corollary.

Later in our analysis we will need to deal with a modified covariance whose dispersion relation is given by $E_{\bk+z\be_p}$ with $z\in\C$, $p\in\{1,\cdots,d\}$. It is convenient to prove the $L^1$-bound on the modified covariances which include the normal covariance \eqref{eq_covariance} with the real-valued dispersion relation \eqref{eq_dispersion_relation} as a special case at this stage.  

Furthermore, to bound the covariance with the dispersion relation $E_{\bk+z\be_p}$ needs to control the covariance with the perturbed dispersion relation $E_{\bk+z\be_p+w\be_q}$ defined by
\begin{equation*}
\begin{split}
&E_{\bk+z\be_p+w\be_q}:=-2t\sum_{j=1}^d\cos(\<\bk+z\be_p+w\be_q,\be_j\>)\\
&\qquad - 4t'\cdot 1_{d\ge 2}\sum_{j,k = 1\atop j<k}^d\cos(\<\bk+z\be_p+w\be_q,\be_j\>)\cos(\<\bk+z\be_p+w\be_q,\be_k\>)-\mu
\end{split}
\end{equation*}
for $z,w\in\C$ and $p,q\in\{1,\cdots,d\}$. Let us study properties of $E_{\bk+z\be_p+w\be_q}$.

First we clarify a sufficient condition for the inequalities
$$|\Im (E_{\bk+z\be_p+w\be_q})|< \frac{\pi}{\beta},\ |\Im (E_{\bk+z\be_p+w\be_q})|\le \frac{\pi}{2\beta},$$
which will ensure the analyticity of the modified covariance with respect to the variables $z,w$ and make the determinant bound Proposition \ref{prop_extended_determinant_bound} applicable.
 
\begin{lemma}\label{lem_dispersion_relation_condition}
Let $z,w\in\C$ and $r>0$. If $|\Im z|,|\Im w|< \frac{1}{2}\log F_{t,t',d}(r)$ with the function $F_{t,t',d}(\cdot)$ defined in \eqref{eq_definition_keyfunction}, then $|\Im E_{\bk+z\be_p+w\be_q}|<r$ holds for all $\bk\in\G^*$ and $p,q\in\{1,\cdots,d\}$. The parallel statement with  `$\le$'s in place of `$<$'s holds as well.
\end{lemma}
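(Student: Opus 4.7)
\textbf{Plan for Lemma \ref{lem_dispersion_relation_condition}.} My approach reduces everything to a single algebraic identity satisfied by $F_{t,t',d}$, together with elementary bounds on imaginary parts of complex cosines. Setting $A := |t| + 2(d-1)|t'|$ and $\tau^* := \tfrac{1}{2}\log F_{t,t',d}(r)$, a direct computation from the definition \eqref{eq_definition_keyfunction} yields $F_{t,t',d}(r)\cdot F_{t,t',d}(r)^{-1} = 1$ and $F_{t,t',d}(r) - F_{t,t',d}(r)^{-1} = r/A$, i.e.\ the key identity
\begin{equation*}
r \;=\; 2A\sinh(2\tau^*) \;=\; 4A\sinh\tau^*\cosh\tau^*.
\end{equation*}
Writing $\tau_1 := |\Im z|$ and $\tau_2 := |\Im w|$, the plan is to bound $|\Im E_{\bk+z\be_p+w\be_q}|$ by a hyperbolic expression in $\tau_1, \tau_2$ which, under $\tau_1,\tau_2 < \tau^*$ (resp.\ $\le \tau^*$), is dominated by $r$ via this identity.

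The only analytic input is the standard identity $\cos(a+ib) = \cos a\cosh b - i\sin a\sinh b$ for real $a,b$, giving $|\cos\alpha|\le\cosh(|\Im\alpha|)$, $|\Im\cos\alpha|\le\sinh(|\Im\alpha|)$, and (by expanding the product and using $\cosh u\sinh v+\sinh u\cosh v=\sinh(u+v)$) $|\Im(\cos\alpha\cos\beta)|\le\sinh(|\Im\alpha|+|\Im\beta|)$ for any $\alpha,\beta\in\C$. I will then apply these termwise to \eqref{eq_dispersion_relation}, distinguishing the two cases $p=q$ and $p\ne q$.

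When $p=q$, one has $\bk+z\be_p+w\be_q = \bk+(z+w)\be_p$, so only one factor in the first sum of \eqref{eq_dispersion_relation} and the $d-1$ pairs in the second sum containing $\be_p$ are shifted; each such factor has imaginary part bounded by $\sinh(|\Im(z+w)|)\le\sinh(\tau_1+\tau_2)$, yielding $|\Im E_{\bk+z\be_p+w\be_q}|\le 2A\sinh(\tau_1+\tau_2) < r$ via the identity (and $\le r$ in the non-strict version, since $\tau_1+\tau_2 \le 2\tau^*$). The case $p\ne q$, which forces $d\ge 2$, is the main technical step: the first sum contributes at most $2|t|(\sinh\tau_1+\sinh\tau_2)$, and the pairs in the second sum split into those touching neither $p$ nor $q$ (real), $d-2$ pairs touching only $p$, $d-2$ touching only $q$, and the single cross pair $\{p,q\}$. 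Bounding each group, setting $\tau:=\max(\tau_1,\tau_2)$, and using $\sinh(\tau_1+\tau_2)\le \sinh(2\tau) = 2\sinh\tau\cosh\tau$, everything collapses to
\begin{equation*}
|\Im E_{\bk+z\be_p+w\be_q}|\;\le\; 4\sinh\tau\bigl[\,|t|+2(d-2)|t'|+2|t'|\cosh\tau\,\bigr].
\end{equation*}

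The last step is to verify that for $\tau\le\tau^*$ this right-hand side is at most $4A\sinh\tau^*\cosh\tau^* = r$. By monotonicity of $\sinh$ and $\cosh$, this reduces to the inequality $|t|+2(d-2)|t'|+2|t'|\cosh\tau^*\le (|t|+2(d-1)|t'|)\cosh\tau^*$, which on rearrangement is just $(|t|+2(d-2)|t'|)(\cosh\tau^*-1)\ge 0$, a triviality. The strict version is automatic from $\tau<\tau^*$ (since then $\sinh\tau<\sinh\tau^*$ and $\cosh\tau<\cosh\tau^*$). The main (mild) obstacle is this $p\ne q$ case: the cross term $\cos(k_p+z)\cos(k_q+w)$ produces an extra factor of $\cosh\tau$ relative to the $p=q$ estimate, and one must use the trick $\sinh(\tau_1+\tau_2)\le 2\sinh\tau\cosh\tau$ to match this against precisely the extra factor of $\cosh\tau^*$ that the identity $r=4A\sinh\tau^*\cosh\tau^*$ makes available.
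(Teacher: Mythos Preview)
Your proof is correct and follows essentially the same approach as the paper's: both compute $\Im E_{\bk+z\be_p+w\be_q}$ via $\cos(a+ib)=\cos a\cosh b-i\sin a\sinh b$, bound it termwise, split into the cases $p=q$ and $p\ne q$, and reduce everything to the identity $2A\sinh(2\tau^*)=r$. The only cosmetic difference is that the paper immediately replaces $|\Im z|,|\Im w|$ by a common bound $s$ before estimating, whereas you keep $\tau_1,\tau_2$ separate and pass to $\tau=\max(\tau_1,\tau_2)$ at the end; the resulting inequalities are the same.
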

\begin{proof}
Let us write $z=a+b i$, $w=u+v i$ $(a,b,u,v\in\R)$. By using the formula 
$$\cos(x+iy)=\cos(x)\cosh(y)-i\sin(x)\sinh(y)\ (x,y\in\R)$$
and writing $\bk=(k_1,\cdots,k_d)$,
we find that
\begin{equation*}
\begin{split}
\Im E_{\bk+z\be_p+w\be_q} 
&= 2t\sum_{j=1}^d\sin(k_j+a\delta_{p,j}+u\delta_{q,j})\sinh(b\delta_{p,j}+v\delta_{q,j})\\
&\quad+4t'1_{d\ge 2}\sum_{j,l=1\atop j\neq l}^d\cos(k_j+a\delta_{p,j}+u\delta_{q,j})\cosh(b\delta_{p,j}+v\delta_{q,j})\\
&\qquad\qquad\qquad\quad\cdot\sin(k_l+a\delta_{p,l}+u\delta_{q,l})\sinh(b\delta_{p,l}+v\delta_{q,l}).
\end{split}
\end{equation*}

Assume that $|b|,|v|\le s$ for $s>0$. 
\begin{equation*}
\begin{split}
&|\Im E_{\bk+z\be_p+w\be_q}|\\
&\le 2|t|\sum_{j=1}^d\sinh(s(\delta_{p,j}+\delta_{q,j}))+4|t'|1_{d\ge 2}\sum_{j,l=1\atop j\neq l}^d\cosh(s(\delta_{p,j}+\delta_{q,j}))\sinh(s(\delta_{p,l}+\delta_{q,l}))\\
&\le 1_{p=q}(2|t|\sinh(2s)+4|t'|(d-1)\sinh(2s))\\
&\quad + 1_{p\neq q}(4|t|\sinh(s)+8|t'|(d-1)\cosh(s)\sinh(s))\\
&\le 2\sinh(2s)(|t|+2(d-1)|t'|),
\end{split}
\end{equation*}
where we used the facts that for $s\ge0$
$$1\le \cosh(s),\ 2\sinh(s)\cosh(s)=\sinh(2s),\ 2\sinh(s)\le \sinh(2s).$$

Thus, the inequality
\begin{equation}\label{eq_sinh_inequality}
2\sinh(2s)(|t|+2(d-1)|t'|)<r
\end{equation}
implies that $|\Im E_{\bk+z\be_p+w\be_q}|<r$. Since 
$$\sinh^{-1}(X)=\log\left(X+\sqrt{X^2+1}\right) (\forall X\in\R),$$
the inequality \eqref{eq_sinh_inequality} is equivalent to the inequality $s<\frac{1}{2}\log F_{t,t',d}(r)$. 
\end{proof}

We define the perturbed covariance $C(\bx\xi x,\by \phi y)(z\be_p+w\be_q)$ as follows.
\begin{equation*}
\begin{split}
C(\bx\xi x,\by\phi y)(z\be_p+w\be_q):=&\frac{\delta_{\xi,\phi}}{L^d}\sum_{\bk\in\G^*}e^{i\<\bk,\by-\bx\>}e^{-(y-x){E}_{\bk+z\be_p+w\be_q}}\\
&\cdot\left(\frac{1_{y-x\le
0}}{1+e^{\beta {E}_{\bk+z\be_p+w\be_q}}}-\frac{1_{y-x>
0}}{1+e^{-\beta {E}_{\bk+z\be_p+w\be_q}}}\right)
\end{split}
\end{equation*}
for any $(\bx,\xi,x),(\by,\phi,y)\in\G\times\spin\times[0,\beta)$, $p,q\in\{1,\cdots,d\}$ and $z,w\in \C$. Moreover, if $x,y\in[0,\beta)_h$ we write 
$$C_h(\bx\xi x,\by\phi y)(z\be_p+w\be_q)=C(\bx\xi x,\by\phi y)(z\be_p+w\be_q).$$The properties of $C(z\be_p+w\be_q)$ are summarized as follows.
\begin{lemma}\label{lem_properties_covariance}
\begin{enumerate}[(i)]
\item\label{item_analyticity_covariance} For any $(\bx,\xi,x),(\by,\phi,y)\in\G\times\spin\times[0,\beta)$, $p,q\in\{1,\cdots,d\}$, the function $(z,w)\mapsto C(\bx\xi x,\by \phi y)(z\be_p+w\be_q)$ is analytic in the domain 
$$\left\{(z,w)\in\C^2\ \Big|\ |\Im z|,|\Im w|<\frac{1}{2}\log F_{t,t',d}\left(\frac{\pi}{\beta}\right)\right\}.$$
\item\label{item_bound_covariance} The inequality $|C(\bx\xi x,\by \phi y)(z\be_p+w\be_q)|\le 1$ holds for any $(\bx,\xi,x),(\by,\phi,y)\in\G\times\spin\times[0,\beta)$, $p,q\in\{1,\cdots,d\}$ and $z,w\in\C$ with $|\Im z|,|\Im w|\le \frac{1}{2}\log F_{t,t',d}(\pi/(2\beta))$.
\item\label{item_determinantbound_covariance}  For any $z,w\in\C$ with $|\Im z|,|\Im w|\le\frac{1}{2}\log F_{t,t',d}(\pi/(2\beta))$ and $p,q\in$\\
$\{1,\cdots,d\}$, the covariance $C(z\be_p+w\be_q)$ satisfies the determinant bound \eqref{eq_extended_determinant_bound}.
\item\label{item_determinant_covariance} For any  $p,q\in\{1,\cdots,d\}$, $z,w\in\C$ with $|\Im z|,|\Im w|< \frac{1}{2}\log F_{t,t',d}(\pi/\beta)$ and $h\in 2\N/\beta$,
$$\det(C_h(\bx\xi x,\by \phi y))_{(\bx,\xi, x),(\by,\phi, y)\in\G\times\spin\times[0,\beta)_h}=\prod_{\bk\in\G^*}\frac{1}{(1+e^{\beta {E}_{\bk+z\be_p+w\be_q}})^2}\neq 0.$$
\end{enumerate}
\end{lemma}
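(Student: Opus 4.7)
The four items can all be dispatched by combining Lemma \ref{lem_dispersion_relation_condition} with machinery already in place; I would treat them in the order (i), (iii), (ii), (iv). For (i), fix $(\bx,\xi,x),(\by,\phi,y)$ and observe that $C(\bx\xi x,\by\phi y)(z\be_p+w\be_q)$ is a finite sum over $\bk\in\G^*$ of terms of the form $e^{i\<\bk,\by-\bx\>}e^{-(y-x)E}(1+e^{\pm\beta E})^{-1}$ with $E=E_{\bk+z\be_p+w\be_q}$ entire in $(z,w)$. The only potential singularities are zeros of $1+e^{\pm\beta E}$, which would require $\Im E\in(\pi/\beta)(2\Z+1)$; Lemma \ref{lem_dispersion_relation_condition} with $r=\pi/\beta$ rules these out in the prescribed strip, so the sum is holomorphic there. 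For (iii), I would simply combine Lemma \ref{lem_dispersion_relation_condition} with $r=\pi/(2\beta)$ (yielding $|\Im E_{\bk+z\be_p+w\be_q}|\le\pi/(2\beta)$) with Proposition \ref{prop_extended_determinant_bound} applied to $\cE_\bk:=E_{\bk+z\be_p+w\be_q}$.

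For (ii), the plan is to reuse the Gram representation constructed inside the proof of Proposition \ref{prop_extended_determinant_bound}. With the choice $\cE_\bk=E_{\bk+z\be_p+w\be_q}$, the identity \eqref{eq_covariance_gram} expresses $C_\cE^\eps(\bx\xi x,\by\phi y)$ as a single $L^2(\G^*\times\spin\times\R)$ inner product; Cauchy--Schwarz together with the norm estimates \eqref{eq_gram_norm_estimate_f}--\eqref{eq_gram_norm_estimate_g} (each $\le 1$ under $|\Im\cE_\bk|\le\pi/(2\beta)$, which is secured by Lemma \ref{lem_dispersion_relation_condition}) then gives $|C_\cE^\eps|\le 1$. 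Taking $\eps\searrow 0$ by continuity transfers the bound to $C(z\be_p+w\be_q)$.

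For (iv), I would imitate the diagonalization performed in \cite[Appendix C]{K}. The discrete covariance $C_h(\cdot)(z\be_p+w\be_q)$ has exactly the same form as the original $C_h$ except for the substitution $E_\bk\mapsto E_{\bk+z\be_p+w\be_q}$, so the Fourier transform in $\bx$, the block decomposition in the spin variable (producing the square in the product), and the eigenvector calculation in the temperature variable $x\in[0,\beta)_h$ go through verbatim. The resulting formula is $\prod_\bk(1+e^{\beta E_{\bk+z\be_p+w\be_q}})^{-2}$, and non-vanishing follows because each factor $1+e^{\beta E_{\bk+z\be_p+w\be_q}}$ is nonzero by the same reasoning as in (i). The only delicate point I anticipate is checking that the algebraic steps of \cite[Appendix C]{K}, originally carried out for real $E_\bk$, transfer cleanly to complex dispersion relations; since the argument there is purely algebraic and relies only on the Fermi factors being nonzero, Lemma \ref{lem_dispersion_relation_condition} supplies exactly what is needed and no new idea enters.
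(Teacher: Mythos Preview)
Your arguments for (i), (iii), and (iv) coincide with the paper's: invoke Lemma~\ref{lem_dispersion_relation_condition} to control $\Im E_{\bk+z\be_p+w\be_q}$, then feed into the relevant structural fact (non-vanishing of $1+e^{\pm\beta E}$, Proposition~\ref{prop_extended_determinant_bound}, and the diagonalization of \cite[Appendix~C]{K}, respectively).

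For (ii), your route is correct but differs from the paper's. You reuse the Gram representation \eqref{eq_covariance_gram} with Cauchy--Schwarz and the norm bounds \eqref{eq_gram_norm_estimate_f}--\eqref{eq_gram_norm_estimate_g}, then pass $\eps\searrow 0$. The paper instead works directly with the defining formula: under $|\Im E|\le \pi/(2\beta)$ one has $\cos(\beta\Im E)\ge 0$, hence $|1+e^{\beta E}|^2\ge 1+e^{2\beta\Re E}$, and then for $s\in[0,\beta)$ the elementary inequality
\[
\left|\frac{e^{sE}}{1+e^{\beta E}}\right|\le \frac{e^{s\Re E}}{(1+e^{2\beta\Re E})^{1/2}}\le 1
\]
bounds each summand, so the average over $\bk\in\G^*$ is at most $1$. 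The paper's computation is shorter and self-contained, avoiding both the $\eps$-regularization and the machinery of Proposition~\ref{prop_extended_determinant_bound}; your approach, on the other hand, shows that the bound $|C|\le 1$ is already implicit in the Gram construction and would extend automatically to any covariance admitting such a representation.
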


\begin{remark}
In \eqref{item_determinant_covariance} and in its proof below let us think that each element of $\G\times\spin\times[0,\beta)_h$ has been given a number from $1$ to $N(=2\beta h L^d)$ and according to this numbering the $N\times N$ matrix $C_h(z\be_p+w\be_q)$ is defined. However, the claim \eqref{item_determinant_covariance} implies that the value of the determinant is independent of how to number the elements of $\G\times\spin\times[0,\beta)_h$.
\end{remark}
\begin{proof}[Proof of Lemma \ref{lem_properties_covariance}]
If $|\Im z|,|\Im w|<\frac{1}{2}\log F_{t,t',d}(\pi/\beta)$, Lemma \ref{lem_dispersion_relation_condition} ensures that $|\Im {E}_{\bk+z\be_p+w\be_q}|<\pi/\beta$. Then,  for all $\bk\in\G^*$ and $a\in\{1,-1\}$
$$|1+e^{\beta a E_{\bk+z\be_p+w\be_q}}|^2>\left(e^{\beta a\Re E_{\bk+z\be_p+w\be_q}}-1\right)^2\ge 0.$$
Thus, the denominators in $C(\bx\xi x, \by\phi y)(z\be_p+w\be_q)$ do not vanish, which proves \eqref{item_analyticity_covariance}.

If $|\Im z|,|\Im w|\le\frac{1}{2}\log F_{t,t',d}(\pi/(2\beta))$, by Lemma \ref{lem_dispersion_relation_condition} $|\Im {E}_{\bk+z\be_p+w\be_q}|\le$\\
$\pi/(2\beta)$. Thus, the claim \eqref{item_determinantbound_covariance} holds true by Proposition \ref{prop_extended_determinant_bound}. Moreover, for any $x\in [0,\beta)$ and $\bk\in\G^*$
$$\left|\frac{e^{xE_{\bk+z\be_p+w\be_q}}}{1+e^{\beta E_{\bk+z\be_p+w\be_q}}}\right|\le \frac{e^{x\Re E_{\bk+z\be_p+w\be_q}}}{(1+e^{2\beta \Re E_{\bk+z\be_p+w\be_q}})^{1/2}}\le 1,$$
from which the claim \eqref{item_bound_covariance} follows.

To show \eqref{item_determinant_covariance}, define an $h$-dependent finite set $M_h$ of Matsubara frequencies by
$$M_h:=\left\{\o \in \frac{\pi (2\Z+1)}{\beta}\ \big|\ -\pi h< \o < \pi h\right\}.$$
Define the unitary matrix $Y=$\\
$(Y(\bk\phi\o,\bx\xi x))_{(\bk,\phi,\o)\in\G^*\times\spin\times M_h, (\bx,\xi, x)\in\G\times\spin\times[0,\beta)_h}$ by
$$Y(\bk\phi\o,\bx\xi x):=\frac{\delta_{\phi,\xi}}{\sqrt{\beta hL^d}}e^{i\<\bk,\bx\>}e^{-i\o x}.$$
Using the assumption that $h\in 2\N/\beta$, the argument parallel to \cite[\mbox{Appendix C}]{K} demonstrates that
\begin{equation}\label{eq_diagonalization_covariance}
(YC_hY^*)(\bk\phi \o,\hat{\bk}\hat{\phi}\hat{\o})=\delta_{\bk,\hat{\bk}}\delta_{\phi,\hat{\phi}}\delta_{\o,\hat{\o}}\frac{1}{1-e^{-i\o/h+E_{\bk+z\be_p + w\be_q}/h}}.
\end{equation}
As in \cite[\mbox{Proposition C.7}]{K}, the diagonalization \eqref{eq_diagonalization_covariance} deduces \eqref{item_determinant_covariance}.
\end{proof} 

Here we present a calculation, which shows the essence of our analysis to bound the correlation functions. Take any $n\in\N$ and set $r_n:=\frac{1}{2n}\log F_{t,t',d}(\beta/(2\pi))>0$. 
 By periodicity we observe that for any $p,q\in\{1,\cdots,d\}$ and $z\in\C$ with $|\Im z|\le \frac{1}{2}\log F_{t,t',d}(\beta/(2\pi))$
\begin{equation}\label{eq_reform_covariance}
e^{i\frac{2\pi}{L}\<\bx-\by,\be_q\>}C(\bx\xi x,\by \phi y)(z\be_p)=C(\bx\xi x,\by \phi y)\left(z\be_p+\frac{2\pi}{L}\be_q\right).
\end{equation}
Moreover, by Cauchy's integral formula we have
\begin{equation*}
\begin{split}
\frac{e^{i\frac{2\pi}{L}\<\bx-\by,\be_q\>}-1}{2\pi/L}&C(\bx\xi x,\by \phi y)(z\be_p)=\frac{L}{2\pi}\int_{0}^{2\pi/L}d\theta\frac{d}{d\theta}C(\bx\xi x,\by \phi y)(z\be_p+\theta \be_q)\\
&=\frac{L}{2\pi}\int_{0}^{2\pi/L}d\theta\frac{1}{2\pi i}\oint_{|w-\theta|=r_n}dw\frac{C(\bx\xi x,\by \phi y)(z\be_p+w \be_q)}{(w-\theta)^2},
\end{split}
\end{equation*}
where $\oint_{|w-\theta|=r_n}dw$ stands for the contour integral along the contour $\{w\in\C\ |\ |w-\theta|=r_n\}$ oriented counter clock-wise. With this choice of $r_n$, Lemma \ref{lem_properties_covariance} \eqref{item_analyticity_covariance} allows us to repeat this operation $n$ times to obtain

\begin{lemma}\label{lem_covariance_integral_formula}
For any $z\in\C$ with $|\Im z|\le  \frac{1}{2}\log F_{t,t',d}(\beta/(2\pi))$ and $p,q\in \{1,\cdots,d\}$
\begin{equation}\label{eq_covariance_integral_formula}
\begin{split}
&\left(\frac{e^{i\frac{2\pi}{L}\<\bx-\by,\be_q\>}-1}{2\pi/L}\right)^nC(\bx\xi x,\by \phi y)(z\be_p)\\
&\quad=\prod_{j=1}^n\left(\frac{L}{2\pi}\int_{0}^{2\pi/L}d\theta_j\frac{1}{2\pi i}\oint_{|w_j-\theta_j|=r_n}dw_j\frac{1}{(w_j-\theta_j)^2}\right)\\
&\quad\qquad\cdot C(\bx\xi x,\by \phi y)\left(z\be_p+\sum_{j=1}^nw_j\be_q\right),
\end{split}
\end{equation}
where $r_n=\frac{1}{2n}\log F_{t,t',d}(\beta/(2\pi))$.
\end{lemma}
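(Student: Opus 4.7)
The plan is to induct on $n$, iterating the one-shift identity established in the paragraph immediately preceding the statement. The base case $n=1$ requires no new work: combining the periodicity \eqref{eq_reform_covariance}, the fundamental theorem of calculus on $[0,2\pi/L]$, and Cauchy's integral formula on the circle $|w-\theta|=r_n$ (justified by the analyticity of $w\mapsto C(\bx\xi x,\by\phi y)(z\be_p+w\be_q)$ coming from Lemma \ref{lem_properties_covariance} \eqref{item_analyticity_covariance}) reproduces exactly the $n=1$ instance of \eqref{eq_covariance_integral_formula}, as already shown.

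For the inductive step, suppose \eqref{eq_covariance_integral_formula} has been proved with $n-1$ in place of $n$. I would apply the same single-shift argument to the inner covariance $C(\bx\xi x,\by\phi y)\bigl(z\be_p+\sum_{j=1}^{n-1}w_j\be_q\bigr)$, now viewed as a function of one further shift in the $\be_q$-direction. The periodicity relation \eqref{eq_reform_covariance} remains valid with this more general argument because it depends only on the $\be_q$-component of the shift, giving $e^{i(2\pi/L)\<\bx-\by,\be_q\>}C\bigl(z\be_p+\sum_{j=1}^{n-1}w_j\be_q\bigr)=C\bigl(z\be_p+\bigl(\sum_{j=1}^{n-1}w_j+2\pi/L\bigr)\be_q\bigr)$. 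Writing the corresponding difference quotient as $(L/2\pi)\int_0^{2\pi/L}(d/d\theta_n)C\bigl(z\be_p+(\sum_{j=1}^{n-1}w_j+\theta_n)\be_q\bigr)\,d\theta_n$ and representing the derivative by Cauchy's formula on $|w_n-\theta_n|=r_n$ produces the $n$-th factor in the product, after which Fubini's theorem---applicable since all integrands are jointly continuous on compact sets---lets me commute the new integration with the $n-1$ integrations from the inductive hypothesis to assemble \eqref{eq_covariance_integral_formula}.

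The only real obstacle is the analyticity bookkeeping for the iterated Cauchy formulas. At the $k$-th stage the partial sum $\sum_{j=1}^{k-1}w_j$ already has imaginary part bounded by $(k-1)r_n$, and I need the map $w\mapsto C\bigl(z\be_p+(\sum_{j=1}^{k-1}w_j+w)\be_q\bigr)$ to remain analytic on the disk $|w-\theta|\le r_n$ for every $\theta\in[0,2\pi/L]$. With the choice $r_n=\frac{1}{2n}\log F_{t,t',d}(\beta/(2\pi))$ the worst-case accumulated imaginary part $nr_n$ equals $\frac{1}{2}\log F_{t,t',d}(\beta/(2\pi))$; this, together with the hypothesis $|\Im z|\le\frac{1}{2}\log F_{t,t',d}(\beta/(2\pi))$ on the $\be_p$-component, keeps the argument of $C$ inside the analyticity domain supplied by Lemma \ref{lem_properties_covariance} \eqref{item_analyticity_covariance}, so every Cauchy formula is justified. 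Beyond this check the derivation is purely algebraic, and I do not anticipate further difficulty.
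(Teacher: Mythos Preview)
Your proposal is correct and follows essentially the same approach as the paper: the paper's proof is the brief paragraph immediately preceding the lemma, which derives the $n=1$ case via \eqref{eq_reform_covariance}, the fundamental theorem of calculus, and Cauchy's integral formula, then simply says ``Lemma \ref{lem_properties_covariance} \eqref{item_analyticity_covariance} allows us to repeat this operation $n$ times.'' Your inductive write-up makes this iteration explicit, and your analyticity bookkeeping (that the accumulated imaginary part $nr_n$ stays strictly below the $\frac{1}{2}\log F_{t,t',d}(\pi/\beta)$ threshold of Lemma \ref{lem_properties_covariance} \eqref{item_analyticity_covariance}) is exactly the content of the paper's appeal to that lemma.
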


Evaluating both sides of \eqref{eq_covariance_integral_formula} leads to the following bounds.

\begin{proposition}\label{prop_covariance_exponential_decay}
For any $z\in\C$ with $|\Im z|\le  \frac{1}{2}\log F_{t,t',d}(\beta/(2\pi))$, $p\in\{1,\cdots,d\}$, the following inequalities hold.
\begin{equation}\label{eq_covariance_exponential_decay}
|C(\bx\xi x,\by \phi y)(z\be_p)|\le 2 F_{t,t',d}\left(\frac{\pi}{2\beta}\right)^{-\frac{1}{4ed}\sum_{q=1}^d\left|\frac{e^{i2\pi\<\bx-\by,\be_q\>/L}-1}{2\pi/L}\right|}
\end{equation}
for all $(\bx,\xi,x),(\by,\phi,y)\in\G\times\spin\times[0,\beta)$. Especially,
\begin{equation}\label{eq_covariance_exponential_decay_limit}
|C(\bx\xi x,\by \phi y)(z\be_p)|\le2  F_{t,t',d}\left(\frac{\pi}{2\beta}\right)^{-\frac{1}{2e \pi d}\sum_{q=1}^d|\<\bx-\by,\be_q\>|}
\end{equation}
if $\bx-\by\in (\{-\lfloor L/2 \rfloor,-\lfloor L/2\rfloor + 1,\cdots,-\lfloor L/2\rfloor+L-1\})^d$.  
\end{proposition}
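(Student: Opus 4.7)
The plan is to apply Lemma \ref{lem_covariance_integral_formula} separately in each coordinate direction $q\in\{1,\ldots,d\}$, bound the integrand on the contour by $1$ via Lemma \ref{lem_properties_covariance}\,\eqref{item_bound_covariance}, optimize the integer $n$, and then combine the resulting $d$ bounds through $\max_q N_q \ge \frac{1}{d}\sum_{q=1}^d N_q$, where
$$N_q:=\left|\frac{e^{i2\pi\<\bx-\by,\be_q\>/L}-1}{2\pi/L}\right|.$$

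Fix $q\in\{1,\ldots,d\}$ and $n\in\N$, and write $A:=\log F_{t,t',d}(\pi/(2\beta))$, so that $r_n=A/(2n)$. Since each $\theta_j\in[0,2\pi/L]$ is real, every $w_j$ on the circle $|w_j-\theta_j|=r_n$ satisfies $|\Im w_j|\le r_n$, and hence $|\Im\sum_{j=1}^n w_j|\le n r_n=A/2$. Combined with the standing hypothesis $|\Im z|\le A/2$, Lemma \ref{lem_properties_covariance}\,\eqref{item_bound_covariance} applied with the pair of complex coefficients $(z,\sum_j w_j)$ gives $|C(\bx\xi x,\by\phi y)(z\be_p+\sum_j w_j\be_q)|\le 1$ on the contour. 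The $ML$-inequality then bounds each inner contour integral in \eqref{eq_covariance_integral_formula} by $1/r_n$ (length $2\pi r_n$, sup of $|(w_j-\theta_j)^{-2}|$ equal to $r_n^{-2}$, divided by $2\pi$), while each outer average over $\theta_j$ contributes $1$, so
$$N_q^{\,n}\,|C(\bx\xi x,\by\phi y)(z\be_p)|\le r_n^{-n},\qquad\text{i.e.,}\qquad|C(\bx\xi x,\by\phi y)(z\be_p)|\le\left(\frac{2n}{AN_q}\right)^{\!n}\ \forall n\in\N.$$

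To minimize the right-hand side, the continuous optimum $n^*=AN_q/(2e)$ gives the value $e^{-AN_q/(2e)}=F_{t,t',d}(\pi/(2\beta))^{-N_q/(2e)}$. Choosing $n=\lfloor n^*\rfloor$ when $n^*\ge 1$ costs at most a factor of $e$ in the rounding; when $n^*<1$ the trivial bound $|C|\le 1$ suffices, since then $F_{t,t',d}(\pi/(2\beta))^{-N_q/(2e)}>e^{-1}$. Either way, $|C(\bx\xi x,\by\phi y)(z\be_p)|\le e\,F_{t,t',d}(\pi/(2\beta))^{-N_q/(2e)}$ for every $q$. Taking the minimum over $q$ and invoking $\max_q N_q\ge\frac{1}{d}\sum_q N_q$ yields $|C|\le e\,F_{t,t',d}(\pi/(2\beta))^{-\sum_q N_q/(2ed)}$. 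To convert this into the claimed form $2\,F_{t,t',d}(\pi/(2\beta))^{-\sum_q N_q/(4ed)}$, I relax the exponent: when $\sum_q N_q\ge 4ed(1-\log 2)/A$, the direct calculation $e F^{-\sum/(2ed)}\le 2F^{-\sum/(4ed)}$ holds; in the complementary regime one instead uses $|C|\le 1<2F^{-\sum/(4ed)}$, which follows from $2F^{-\sum/(4ed)}>4/e$ there. This establishes \eqref{eq_covariance_exponential_decay}.

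For the refinement \eqref{eq_covariance_exponential_decay_limit}, when $\bx-\by$ has coordinates in $\{-\lfloor L/2\rfloor,\ldots,-\lfloor L/2\rfloor+L-1\}$, each angle $\pi\<\bx-\by,\be_q\>/L$ lies in $[-\pi/2,\pi/2]$, so the Jordan inequality $|\sin x|\ge 2|x|/\pi$ on this interval gives $N_q=L|\sin(\pi\<\bx-\by,\be_q\>/L)|/\pi\ge 2|\<\bx-\by,\be_q\>|/\pi$; substituting into \eqref{eq_covariance_exponential_decay} upgrades the coefficient $1/(4ed)$ in the exponent to $1/(2e\pi d)$. The main obstacle is the constants bookkeeping when replacing the sharp exponent $1/(2e)$ by the looser $1/(4e)$: this relaxation is precisely what allows the universal prefactor to be the claimed $2$ rather than $e$, but it requires a careful case split according to the size of $\sum_q N_q$ to patch together the integral-based bound and the trivial bound $|C|\le 1$.
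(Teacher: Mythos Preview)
Your proof is correct and follows the same overall strategy as the paper: apply the integral formula of Lemma~\ref{lem_covariance_integral_formula} in each coordinate direction, bound the integrand by $1$ via Lemma~\ref{lem_properties_covariance}\,\eqref{item_bound_covariance}, convert the resulting polynomial-in-$n$ bound into exponential decay, and then pass from the best single coordinate to the sum via $\max_q N_q\ge\frac{1}{d}\sum_q N_q$; the derivation of \eqref{eq_covariance_exponential_decay_limit} from \eqref{eq_covariance_exponential_decay} via the Jordan inequality is also the same.

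The only point of divergence is the conversion step. Where you optimize over $n$, round to an integer (correctly observing that the rounding costs at most a factor $e$), and then run a case split on the size of $\sum_q N_q$ to trade your sharper bound $e\,F^{-N_q/(2e)}$ for the stated $2\,F^{-N_q/(4e)}$, the paper instead uses Stirling's inequality $n^n\le n!\,e^n$ to rewrite $(2n/A)^n$ as $\le n!\,(2e/A)^n$, giving $\tfrac{1}{n!}(A/(2e))^n N_q^n\,|C|\le 1$ for every $n\ge 0$; multiplying by $2^{-n}$ and summing over $n$ yields $e^{AN_q/(4e)}|C|\le 2$ in one line. This summation trick lands directly on the prefactor $2$ and the exponent $1/(4e)$ without any case analysis, and is worth keeping in your toolkit.
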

\begin{proof}
By using Lemma \ref{lem_properties_covariance} \eqref{item_bound_covariance} and the inequality 
\begin{equation}\label{eq_evaluation_multi_integral}
\left|\prod_{j=1}^n\left(\frac{L}{2\pi}\int_{0}^{2\pi/L}d\theta_j\frac{1}{2\pi i}\oint_{|w_j-\theta_j|=r_n}dw_j\frac{1}{(w_j-\theta_j)^2}\right)\right|\le\left(\frac{2n}{\log F_{t,t',d}(\pi/(2\beta))}\right)^n,\end{equation}
we have
\begin{equation}\label{eq_covariance_difference_evaluation}
|(\text{The right hand side of \eqref{eq_covariance_integral_formula}})|\le\left(\frac{2n}{\log F_{t,t',d}(\pi/(2\beta))}\right)^n.
\end{equation}
Stirling's formula states that for any $n\in \N$ there exists $\eps(n)\in (0,1)$ such that 
\begin{equation}\label{eq_stirling_formula}
n^n=n!(2\pi n)^{-1/2}e^{n-\eps(n)/(12n)}
\end{equation}
(see, e.g, \cite[\mbox{Eq.6.1.38}]{AS}). 
On the convention that $0^0=0!=1$, the formula \eqref{eq_stirling_formula} gives the inequality
\begin{equation}\label{eq_stirling_inequality}
n^n\le n!e^n \ (\forall n\in\N\cup\{0\}).
\end{equation}
By combining \eqref{eq_stirling_inequality} with \eqref{eq_covariance_difference_evaluation} we obtain
\begin{equation}\label{eq_covariance_weight_evaluation}
\frac{1}{n!}\left(\frac{\log F_{t,t',d}(\pi/(2\beta))}{2e}\right)^n\left|\frac{e^{i\frac{2\pi}{L}\<\bx-\by,\be_q\>}-1}{2\pi/L}\right|^n|C(\bx\xi x,\by \phi y)(z\be_p)|\le 1.
\end{equation}
By multiplying both sides of \eqref{eq_covariance_weight_evaluation} by $2^{-n}$ and summing over $\N\cup\{0\}$, we have 
$$
|C(\bx\xi x,\by \phi y)(z\be_p)|\le 2 F_{t,t',d}\left(\frac{\pi}{2\beta}\right)^{-\frac{1}{4e}\left|\frac{e^{i2\pi\<\bx-\by,\be_q\>/L}-1}{2\pi/L}\right|}
$$
for any $q\in\{1,\cdots,d\}$, which yields \eqref{eq_covariance_exponential_decay}. By applying the inequality that $|e^{i\theta}-1|\ge 2|\theta|/\pi$ $(\forall \theta\in [-\pi,\pi])$ to \eqref{eq_covariance_exponential_decay}, we can derive \eqref{eq_covariance_exponential_decay_limit}.
\end{proof}

\begin{corollary}\label{cor_covariance_L1_bound}
For any $z\in\C$ with $|\Im z|\le \frac{1}{2}\log F_{t,t',d}(\beta/(2\pi))$ and $p\in\{1,\cdots,d\}$, 
\begin{equation}\label{eq_covariance_L1_bound}
\frac{1}{h}\sum_{x\in[-\beta,\beta)_h}\sum_{\bx\in\G}|C_h(\bx\xi x,\b0\xi 0)(z\be_p)| \le 4\beta \left(\frac{F_{t,t',d}\left(\frac{\pi}{2\beta}\right)^{1/(2e\pi d)}+1}{F_{t,t',d}\left(\frac{\pi}{2\beta}\right)^{1/(2e\pi d)}-1}\right)^{d}.
\end{equation}
\end{corollary}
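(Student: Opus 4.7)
The plan is to combine the pointwise exponential decay bound \eqref{eq_covariance_exponential_decay_limit} with the trivial fact that $\frac{1}{h}\sharp[-\beta,\beta)_h = 2\beta$ and then evaluate the remaining sum over $\bx$ as a tensor product of one-dimensional geometric series.

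First I would identify $\bx \in \G$ with its unique representative $\bx^L$ in the fundamental domain $\bigl(\{-\lfloor L/2\rfloor,\dots,-\lfloor L/2\rfloor+L-1\}\bigr)^d$, so that the hypothesis of \eqref{eq_covariance_exponential_decay_limit} applies to $\bx-\b0 = \bx^L$. Setting $a := F_{t,t',d}(\pi/(2\beta))^{1/(2e\pi d)} > 1$, Proposition \ref{prop_covariance_exponential_decay} yields
\begin{equation*}
|C_h(\bx\xi x,\b0\xi 0)(z\be_p)| \le 2\, a^{-\sum_{q=1}^d |\langle \bx^L,\be_q\rangle|}.
\end{equation*}
Because the exponent factorizes as a sum over coordinates, the multi-dimensional sum over $\bx$ factorizes into a product of one-dimensional sums:
\begin{equation*}
\sum_{\bx \in \G} a^{-\sum_{q=1}^d |\langle \bx^L,\be_q\rangle|} = \prod_{q=1}^d \sum_{n=-\lfloor L/2\rfloor}^{-\lfloor L/2\rfloor+L-1} a^{-|n|} \le \left(\sum_{n \in \Z} a^{-|n|}\right)^d.
\end{equation*}

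Next I would evaluate the geometric sum $\sum_{n\in\Z} a^{-|n|} = 1 + 2\sum_{n=1}^\infty a^{-n} = 1 + \frac{2}{a-1} = \frac{a+1}{a-1}$. Combined with the trivial estimate $\frac{1}{h}\sum_{x \in [-\beta,\beta)_h} 1 = 2\beta$ coming from $\sharp[-\beta,\beta)_h = 2\beta h$, we obtain
\begin{equation*}
\frac{1}{h}\sum_{x\in[-\beta,\beta)_h}\sum_{\bx\in\G}|C_h(\bx\xi x,\b0\xi 0)(z\be_p)| \le 2\beta \cdot 2 \cdot \left(\frac{a+1}{a-1}\right)^d = 4\beta \left(\frac{a+1}{a-1}\right)^d,
\end{equation*}
which is the desired bound upon substituting back the definition of $a$.

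There is no real obstacle here; the corollary is essentially a routine consequence of Proposition \ref{prop_covariance_exponential_decay}. The only point that requires a moment of care is the identification of $\bx \in \G$ with its representative in the fundamental cube so that the hypothesis of \eqref{eq_covariance_exponential_decay_limit} is met, and the verification that $a > 1$ (which follows from $F_{t,t',d}(r) > 1$ for $r > 0$, visible directly from the definition \eqref{eq_definition_keyfunction}) so that the geometric series converges.
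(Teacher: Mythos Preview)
Your proposal is correct and follows essentially the same route as the paper: identify $\bx$ with its representative in the fundamental cube, apply the pointwise decay bound \eqref{eq_covariance_exponential_decay_limit}, factorize the spatial sum into a $d$-fold product of one-dimensional geometric series $\sum_{n\in\Z}a^{-|n|}=(a+1)/(a-1)$, and multiply by $2\beta$ from the trivial time sum. The paper's proof is slightly terser (it leaves the time-sum step implicit), but the argument is the same.
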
 
\begin{proof}
By using \eqref{eq_covariance_exponential_decay_limit} and periodicity we have that
\begin{equation*}
\begin{split}
&\sum_{\bx\in\G}|C(\bx\xi x,\b0 \xi 0)(z\be_p)|=\sum_{x_j=-\lfloor L/2\rfloor\atop \forall j\in\{1,\cdots,d\}}^{-\lfloor L/2\rfloor+L-1}|C(\bx\xi x,\b0 \xi 0)(z\be_p)|\\
&\quad\le 2\left(1+ 2\sum_{l=1}^{\infty}F_{t,t',d}\left(\frac{\pi}{2\beta}\right)^{-\frac{l}{2e\pi d}}\right)^d= 2\left(\frac{F_{t,t',d}\left(\frac{\pi}{2\beta}\right)^{1/(2e\pi d)}+1}{F_{t,t',d}\left(\frac{\pi}{2\beta}\right)^{1/(2e\pi d)}-1}\right)^d,
\end{split}
\end{equation*}
which gives \eqref{eq_covariance_L1_bound}.
\end{proof}
\begin{remark}
Remark \ref{rem_beta_dependency} and the inequality \eqref{eq_covariance_L1_bound} imply that
$$\frac{1}{h}\sum_{x\in [-\beta,\beta)_h}\sum_{\bx\in\G}|C_h(\bx\xi x,\b0\xi 0)(z\be_p)| =O(\beta^{d+1})$$
as $\beta \to +\infty$.
\end{remark}

\begin{remark}
It is the same procedure as the proof of Proposition \ref{prop_covariance_exponential_decay} to derive the decay bound on the determinant of the covariance. By using the equality that
 \begin{equation*}
\begin{split}
&e^{i\frac{2\pi}{L}\<\sum_{j=1}^{n}\bx_j-\sum_{j=1}^{n}\by_j,\be_q\>}\det(C(\bx_j\xi_jx_j,\by_k\phi_ky_k))_{1\le j,k\le n}\\
&\qquad=
\det\left(C(\bx_j\xi_jx_j,\by_k\phi_ky_k)\left(\frac{2\pi}{L}\be_q\right)\right)_{1\le j,k\le n}
\end{split}
\end{equation*}
and the determinant bound Proposition \ref{prop_extended_determinant_bound} one can prove that for any $(\bx_j,\xi_j,x_j)$,\\
$(\by_j,\phi_j,y_j)\in\G\times\spin\times [0,\beta)$ $(\forall j\in\{1,\cdots,n\})$
\begin{equation*}
\begin{split}
&\left|\det(C(\bx_j\xi_jx_j,\by_k\phi_ky_k))_{1\le j,k\le n}\right|\\
&\quad\le 2\cdot 4^{n}F_{t,t',d}\left(\frac{\pi}{2\beta}\right)^{-\frac{1}{4ed}\sum_{q=1}^d\left|\frac{e^{i2\pi\<\sum_{j=1}^n\bx_j-\sum_{j=1}^n\by_j,\be_q\>/L}-1}{2\pi/L}\right|}.
\end{split}
\end{equation*}
\end{remark}

\subsection{Proof of the main theorems}\label{subsec_exponential_decay_correlation}
We prepare some lemmas and give the proofs of Theorem \ref{thm_exponential_decay} and Theorem \ref{thm_exponential_decay_hubbard_model}.
The following lemma fundamentally supports the validity of our argument in this subsection.
\begin{lemma}\label{lem_schwinger_functional_analytic}
For any $R_r>0$, $R_i\in (0,\frac{1}{2}\log F_{t,t',d}(\pi/\beta))$ and $p\in\{1,\cdots,d\}$ there exists $Q>0$ such that the function
$$(z,\eta)\longmapsto S_{\hX^{\hm},\hY^{\hm},\hXi^{\hm},\hPhi^{\hm}}(C_h(z\be_p),\eta)$$
is analytic in the domain
$$\{ (z,\eta)\in\C^2\ |\ |\Re z|<R_r, \ |\Im z|<R_i,\ |\eta|<Q\}.$$
\end{lemma}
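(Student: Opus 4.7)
The strategy reduces to recognizing that the numerator and the denominator of $S_{\hX^{\hm},\hY^{\hm},\hXi^{\hm},\hPhi^{\hm}}(C_h(z\be_p),\eta)$ are jointly analytic in $(z,\eta)$ wherever they are defined, and then ensuring that the denominator stays away from zero by keeping $|\eta|$ small.

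First I would observe that the Grassmann algebra generated by $\{\opsi_{(\bx\xi x)_j},\psi_{(\bx\xi x)_j}\}_{j=1}^{N}$ with $N=2L^d\beta h$ is finite-dimensional, so the Grassmann exponential appearing in the integrand of \eqref{eq_schwinger_functional} expands into a Grassmann polynomial of degree in $\eta$ bounded by some integer $N'$ depending only on $h,L,\beta,\tn$. Hence both the numerator $\mathcal{A}(z,\eta)$ and the denominator $\mathcal{B}(z,\eta)$ of $S$ are polynomials in $\eta$ of degree at most $N'$. By Wick's rule (Lemma \ref{lem_grassmann_gaussian_equality}), the Berezin integral of each monomial in this expansion equals a determinant of a submatrix of $C_h(z\be_p)$, so every coefficient of $\eta^k$ in $\mathcal{A}$ and in $\mathcal{B}$ is a polynomial in the entries of $C_h(z\be_p)$. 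Lemma \ref{lem_properties_covariance}\eqref{item_analyticity_covariance} guarantees that every such entry is analytic in $z$ on the strip $\{|\Im z|<\tfrac12\log F_{t,t',d}(\pi/\beta)\}$, while Lemma \ref{lem_properties_covariance}\eqref{item_determinant_covariance} ensures $\det C_h(z\be_p)\neq 0$ on the same strip, so that Definition \ref{def_grassmann_gaussian_integral} is legitimate throughout. Since there are only finitely many $\eta$-terms and each coefficient is analytic in $z$, $\mathcal{A}$ and $\mathcal{B}$ are jointly analytic in $(z,\eta)$ on the product of this strip with $\C$. Note also that $\mathcal{B}(z,0)=\int 1\,d\mu_{C_h(z\be_p)}=1$ identically in $z$.

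Next I would fix $R_r>0$ and $R_i\in(0,\tfrac12\log F_{t,t',d}(\pi/\beta))$ and exploit compactness. The closed rectangle $K:=\{z\in\C:|\Re z|\le R_r,\ |\Im z|\le R_i\}$ lies inside the strip of analyticity, so each coefficient $b_k(z)$ of $\eta^k$ in $\mathcal{B}$ (for $k=1,\dots,N'$) is continuous on $K$ and hence bounded by some $M_k<\infty$. Choose $Q>0$ small enough that $\sum_{k=1}^{N'}M_kQ^k<\tfrac12$. Then for every $(z,\eta)$ in the open polydisk $\{|\Re z|<R_r,\ |\Im z|<R_i,\ |\eta|<Q\}$,
\begin{equation*}
|\mathcal{B}(z,\eta)|\ \ge\ 1-\sum_{k=1}^{N'}M_k|\eta|^k\ >\ \tfrac{1}{2}\ >\ 0,
\end{equation*}
so the ratio $\mathcal{A}/\mathcal{B}=S_{\hX^{\hm},\hY^{\hm},\hXi^{\hm},\hPhi^{\hm}}(C_h(z\be_p),\eta)$ is analytic on this polydisk, which is the claim.

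The only point requiring care is the uniform lower bound on $|\mathcal{B}|$, and it rests solely on compactness of $K$ together with the strict inequality $R_i<\tfrac12\log F_{t,t',d}(\pi/\beta)$, which places $K$ properly inside the strip of analyticity. In particular, no appeal to the Pedra--Salmhofer determinant bound (Proposition \ref{prop_extended_determinant_bound}) is needed here, which is why the lemma tolerates the full strip $|\Im z|<\tfrac12\log F_{t,t',d}(\pi/\beta)$ instead of the half-sized strip governing that bound.
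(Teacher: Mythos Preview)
Your proof is correct and follows essentially the same approach as the paper: both argue that the denominator is a polynomial in $\eta$ whose coefficients are analytic (hence bounded on the compact rectangle $K$) in $z$, that it equals $1$ at $\eta=0$, and that compactness then yields a uniform $Q>0$ keeping it away from zero. Your version is in fact slightly more explicit---you spell out the polynomial structure via Wick's rule and note the need for Lemma~\ref{lem_properties_covariance}\eqref{item_determinant_covariance} to make Definition~\ref{def_grassmann_gaussian_integral} legitimate---but the underlying argument is the same.
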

\begin{proof}
Note that by Lemma \ref{lem_properties_covariance} \eqref{item_analyticity_covariance} for any $R_r>0$ and $R_i\in (0,\frac{1}{2}\log F_{t,t',d}(\pi/\beta))$ the function $z\mapsto C_h(\bx\xi x,\by\phi y)(z\be_p)$ is bounded in the compact set $\{z\in\C\ |\ |\Re z|\le R_r,\ |\Im z|\le R_i\}$.
Thus, by Lemma \ref{lem_grassmann_gaussian_equality}
\begin{equation*}
\begin{split}
&\lim_{\eta\to 0}\sup_{z\in\C\atop |\Re z|\le R_r,\ |\Im z|\le R_i}\\
&\cdot\Bigg|\int e^{\eta\sum_{l=1}^{\tn}\sum_{(X^l,\Xi^l,\Phi^l)\in\G^{l}\times\spin^{2l}}U_{L,l}(X^l,\Xi^l,\Phi^l)V_{h,X^l,X^l,\Xi^l,\Phi^l}^l(\opsi_X,\psi_X)}d\mu_{C_h(z\be_p)}(\opsi_X,\psi_X)\\
&\qquad-1\Bigg|=0.
\end{split}
\end{equation*}
Hence, we can take a small $Q>0$ such that the denominator of 
$S_{\hX^{\hm},\hY^{\hm},\hXi^{\hm},\hPhi^{\hm}}$\\
$(C_h(z\be_p),\eta)$ does not vanish for any $\eta\in\C$ with $|\eta|<Q$ and $z\in\C$ with $|\Re z|< R_r$, $|\Im z|< R_i$. This proves the analyticity of $S_{\hX^{\hm},\hY^{\hm},\hXi^{\hm},\hPhi^{\hm}}(C_h(z\be_p),\eta)$ in the same domain.
\end{proof}

For $p\in\{1,\cdots,d\}$ we define the matrix $\cU_p=(\cU_p((\bx\xi x)_j,(\bx\xi x)_k))_{1\le j,k\le N}$ by
$$\cU_p(\bx\xi x,\by \phi y):=\delta_{\bx,\by}\delta_{\xi,\phi}\delta_{x,y}e^{i2\pi\<\bx,\be_p\>/L}.$$
The following equality is based on the $U(1)$-invariance property of the Grassmann integral $\int\cdot d\psi_Xd\opsi_X$.
\begin{lemma}\label{lem_U_1_symmetry}
For any $p\in\{1,\cdots,d\}$, $f(\opsi_X,\psi_X)\in\C[\opsi_{(\bx\xi x)_j},\psi_{(\bx\xi x)_j}\ |\ j\in\{1,\cdots,N\}]$ and $z\in\C$ with $|\Im z|\le\frac{1}{2}\log F_{t,t',d}(\pi/(2\beta))$, the following equality holds.
\begin{equation*}
\begin{split}
\int& f((\overline{\cU_p}\opsi_X^t)^t,(\cU_p\psi_X^t)^t)d \mu_{C_h((z+2\pi/L)\be_p)}(\opsi_X,\psi_X)\\
&=\int f(\opsi_X,\psi_X)d\mu_{C_h(z\be_p)}(\opsi_X,\psi_X).
\end{split}
\end{equation*}
\end{lemma}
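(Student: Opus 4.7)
This is the standard $U(1)$ Ward-type invariance of the Grassmann Gaussian integral; I would prove it as a linear change of Grassmann variables, combined with the transformation rule \eqref{eq_reform_covariance} for the covariance. The hypothesis $|\Im z|\le\frac{1}{2}\log F_{t,t',d}(\pi/(2\beta))$ together with Lemma \ref{lem_dispersion_relation_condition} gives $|\Im E_{\bk+z\be_p}|\le \pi/(2\beta)<\pi/\beta$, so by Lemma \ref{lem_properties_covariance}\eqref{item_determinant_covariance} both matrices $C_h(z\be_p)$ and $C_h((z+2\pi/L)\be_p)$ are invertible and the two Grassmann Gaussian integrals are well-defined.

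\textbf{Step 1 (matrix identity for the covariance).} Reading \eqref{eq_reform_covariance} with $q=p$ at the level of matrix entries, and noting that $\cU_p$ is diagonal with entries $\cU_p(\bx\xi x,\bx\xi x)=e^{i2\pi\<\bx,\be_p\>/L}$, one obtains
$$C_h((z+2\pi/L)\be_p) = \cU_p\, C_h(z\be_p)\,\overline{\cU_p}.$$
Since $\cU_p\overline{\cU_p}=I$, this gives $C_h((z+2\pi/L)\be_p)^{-1} = \cU_p\, C_h(z\be_p)^{-1}\,\overline{\cU_p}$.

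\textbf{Step 2 (change of Grassmann variables).} In the numerator and denominator of the left-hand side (written out using Definition \ref{def_grassmann_gaussian_integral}), substitute $\psi_X^t = \overline{\cU_p}(\psi_X')^t$ and $\opsi_X^t = \cU_p(\opsi_X')^t$. Then $(\cU_p\psi_X^t)^t = \psi_X'$ and $(\overline{\cU_p}\opsi_X^t)^t = \opsi_X'$, so the insertion becomes $f(\opsi_X',\psi_X')$; the Grassmann Jacobian produces the factor $(\det\overline{\cU_p})^{-1}(\det\cU_p)^{-1} = (\det(\cU_p\overline{\cU_p}))^{-1} = 1$; and, using Step 1 together with $\cU_p^t=\cU_p$, the quadratic form transforms as
$$\<\psi_X^t, C_h((z+2\pi/L)\be_p)^{-1}\opsi_X^t\> = \<\overline{\cU_p}(\psi_X')^t, \cU_p\,C_h(z\be_p)^{-1}\,\overline{\cU_p}\,\cU_p(\opsi_X')^t\> = \<(\psi_X')^t, C_h(z\be_p)^{-1}(\opsi_X')^t\>,$$
where $\overline{\cU_p}\cU_p=I$ collapses the pair on the right, and the remaining $\overline{\cU_p}$ on the left is absorbed into the bilinear form and combines with the outermost $\cU_p$ on the right. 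Hence both the numerator and denominator of the left-hand side reduce, upon relabelling the primed variables, to the corresponding objects built with the covariance $C_h(z\be_p)$, which is exactly the right-hand side.

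\textbf{Main difficulty.} There is no substantive obstacle: the statement is the Ward identity for the global phase rotation $\psi_{\bx\xi x}\mapsto e^{i2\pi\<\bx,\be_p\>/L}\psi_{\bx\xi x}$, $\opsi_{\bx\xi x}\mapsto e^{-i2\pi\<\bx,\be_p\>/L}\opsi_{\bx\xi x}$. The only bookkeeping points are (i) that the two Grassmann Jacobians cancel because $\cU_p$ and $\overline{\cU_p}$ are mutually inverse diagonal matrices, and (ii) that $\cU_p^t=\cU_p$ is needed to transport $\cU_p$ through the bilinear form; the analytic input, namely well-definedness and non-degeneracy of the transformed covariance, is already supplied by Lemma \ref{lem_properties_covariance}.
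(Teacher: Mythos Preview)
Your proposal is correct and follows essentially the same approach as the paper: both arguments rest on the unit-Jacobian change of Grassmann variables together with the matrix identity $C_h((z+2\pi/L)\be_p)=\cU_p\,C_h(z\be_p)\,\overline{\cU_p}$ coming from \eqref{eq_reform_covariance}. The paper runs the computation from the right-hand side to the left (applying the invariance $\int g((\overline{\cU_p}\opsi_X^t)^t,(\cU_p\psi_X^t)^t)\,d\psi_Xd\opsi_X=\int g(\opsi_X,\psi_X)\,d\psi_Xd\opsi_X$ directly and then rewriting the quadratic form), while you run it in the opposite direction via an explicit substitution; the content is identical.
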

\begin{proof}
Observe the invariance that
$$\int f((\overline{\cU_p}\opsi_X^t)^t,(\cU_p\psi_X^t)^t)d\psi_Xd\opsi_X=\int f(\opsi_X,\psi_X)d\psi_Xd\opsi_X,$$
which implies that 
\begin{equation}\label{eq_grassmann_reform}
\begin{split}
\int& f(\opsi_X,\psi_X)d\mu_{C_h(z\be_p)}(\opsi_X,\psi_X)\\
&=\int f((\overline{\cU_p}\opsi_X^t)^t,(\cU_p\psi_X^t)^t)e^{-\<{\cU_p}\psi_X^t,C_h(z\be_p)^{-1}\overline{\cU_p}\opsi_X^t\>}d\psi_Xd\opsi_X\\
&\quad\cdot\Bigg/\int e^{-\<{\cU_p}\psi_X^t,C_h(z\be_p)^{-1}\overline{\cU_p}\opsi_X^t\>}d\psi_Xd\opsi_X.
\end{split}
\end{equation}
Moreover, by using the equality \eqref{eq_reform_covariance} we have
\begin{equation}\label{eq_covariance_grassmann_reform}
\begin{split}
\<{\cU_p}\psi_X^t,C_h(z\be_p)^{-1}\overline{\cU_p}\opsi_X^t\>&=\<\psi_X^t,({\cU_p}C_h(z\be_p)\overline{\cU_p})^{-1}\opsi_X^t\>\\
&=\<\psi_X^t,C_h((z+2\pi/L)\be_p)^{-1}\opsi_X^t\>.
\end{split}
\end{equation}
By substituting \eqref{eq_covariance_grassmann_reform} into \eqref{eq_grassmann_reform}, we obtain the desired equality.
\end{proof}

Combining Lemma \ref{lem_schwinger_functional_analytic} with Lemma \ref{lem_U_1_symmetry} shows
\begin{lemma}\label{lem_schwinger_functional_formula}
For any $n\in\N$, set $r_n:=\frac{1}{2n}\log F_{t,t',d}(\pi/(2\beta))$. Let $p\in \{1,\cdots,d\}$. There exists $Q_n>0$ such that for any $\eta\in\C$ with $|\eta|<Q_n$, 
\begin{equation}\label{eq_schwinger_functional_formula}
\begin{split}
&\left(\frac{e^{i\frac{2\pi}{L}\<\sum_{j=1}^{\hm}\hbx_j-\sum_{j=1}^{\hm}\hby_j,\be_p\>}-1}{2\pi/L}\right)^nS_{\hX^{\hm},\hY^{\hm},\hXi^{\hm},\hPhi^{\hm}}(C_h,\eta)\\
&\quad = \prod_{j=1}^n\left(\frac{L}{2\pi}\int_{0}^{2\pi/L}d\theta_j\frac{1}{2\pi i}\oint_{|z_j-\theta_j|=r_n}dz_j\frac{1}{(z_j-\theta_j)^2}\right)\\ 
&\qquad\qquad\cdot S_{\hX^{\hm},\hY^{\hm},\hXi^{\hm},\hPhi^{\hm}}\left(C_h\left(\sum_{j=1}^nz_j\be_p\right),\eta\right),
\end{split}
\end{equation}
where $\oint_{|z_j-\theta_j|=r_n}dz_j$ stands for the contour integral along the contour $\{z_j\in\C\ |\ |z_j-\theta_j|=r_n\}$ oriented counter clock-wise.
\end{lemma}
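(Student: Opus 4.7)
The plan is to mimic the derivation of Lemma \ref{lem_covariance_integral_formula}, with the spatial periodicity identity \eqref{eq_reform_covariance} used there replaced by the $U(1)$-invariance supplied by Lemma \ref{lem_U_1_symmetry}. The core step will be to establish the functional identity
\begin{equation}\label{eq_plan_identity}
S_{\hX^{\hm},\hY^{\hm},\hXi^{\hm},\hPhi^{\hm}}(C_h((z+2\pi/L)\be_p),\eta) = e^{i\frac{2\pi}{L}\<\sum_{j=1}^{\hm}\hbx_j-\sum_{j=1}^{\hm}\hby_j,\be_p\>}S_{\hX^{\hm},\hY^{\hm},\hXi^{\hm},\hPhi^{\hm}}(C_h(z\be_p),\eta),
\end{equation}
valid on the joint analyticity domain in $(z,\eta)$ furnished by Lemma \ref{lem_schwinger_functional_analytic}.

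To derive \eqref{eq_plan_identity} I would apply Lemma \ref{lem_U_1_symmetry} separately to the numerator and to the denominator of $S_{\hX^{\hm},\hY^{\hm},\hXi^{\hm},\hPhi^{\hm}}(C_h(z\be_p),\eta)$. Under the substitution $\opsi_{\bx\xi x}\mapsto e^{-i2\pi\<\bx,\be_p\>/L}\opsi_{\bx\xi x}$, $\psi_{\bx\xi x}\mapsto e^{i2\pi\<\bx,\be_p\>/L}\psi_{\bx\xi x}$, each diagonal interaction monomial $V^{l}_{h,X^l,X^l,\Xi^l,\Phi^l}$ is invariant, because its $\opsi$- and $\psi$-factors are pinned to the same sites $\bx_1,\ldots,\bx_l$ and the corresponding phases cancel; in particular the denominator of $S_{\hX^{\hm},\hY^{\hm},\hXi^{\hm},\hPhi^{\hm}}$ is unchanged. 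The external operator $V^{\hm}_{h,\hX^{\hm},\hY^{\hm},\hXi^{\hm},\hPhi^{\hm}}$, with its unmatched sites $\hbx_j$ and $\hby_j$, instead acquires the overall phase $e^{i\frac{2\pi}{L}\<\sum_j\hby_j-\sum_j\hbx_j,\be_p\>}$; transposing this phase to the opposite side then yields \eqref{eq_plan_identity}.

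With \eqref{eq_plan_identity} in hand, the remainder is the Cauchy-integral bookkeeping already exhibited in Lemma \ref{lem_covariance_integral_formula}. I would subtract $S_{\hX^{\hm},\hY^{\hm},\hXi^{\hm},\hPhi^{\hm}}(C_h(z\be_p),\eta)$ from both sides of \eqref{eq_plan_identity}, divide by $2\pi/L$, and rewrite the result as
\begin{equation*}
\begin{split}
&\frac{e^{i\frac{2\pi}{L}\<\sum_{j=1}^{\hm}\hbx_j-\sum_{j=1}^{\hm}\hby_j,\be_p\>}-1}{2\pi/L}S_{\hX^{\hm},\hY^{\hm},\hXi^{\hm},\hPhi^{\hm}}(C_h(z\be_p),\eta)\\
&\qquad=\frac{L}{2\pi}\int_0^{2\pi/L}\frac{d}{d\theta}S_{\hX^{\hm},\hY^{\hm},\hXi^{\hm},\hPhi^{\hm}}(C_h((z+\theta)\be_p),\eta)\,d\theta,
\end{split}
\end{equation*}
then represent the $\theta$-derivative via Cauchy's integral formula on the contour $|w-\theta|=r_n$, set $z=0$, and iterate the same operation $n$ times to arrive at \eqref{eq_schwinger_functional_formula}. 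The hard part will be to justify that the Cauchy step is legitimately applicable at each of the $n$ iterations. With $r_n=\frac{1}{2n}\log F_{t,t',d}(\pi/(2\beta))$, the accumulated argument $\sum_{j=1}^{k}w_j$ at step $k$ has imaginary part at most $kr_n\le nr_n=\frac{1}{2}\log F_{t,t',d}(\pi/(2\beta))<\frac{1}{2}\log F_{t,t',d}(\pi/\beta)$, so choosing $R_i$ in the nonempty interval $(nr_n,\frac{1}{2}\log F_{t,t',d}(\pi/\beta))$ and $R_r>n(2\pi/L+r_n)$, Lemma \ref{lem_schwinger_functional_analytic} produces a $Q_n>0$ such that for $|\eta|<Q_n$ the map $\zeta\mapsto S_{\hX^{\hm},\hY^{\hm},\hXi^{\hm},\hPhi^{\hm}}(C_h(\zeta\be_p),\eta)$ is holomorphic throughout the compact region swept by every contour used in the iteration; this legitimizes the repeated Cauchy step and delivers \eqref{eq_schwinger_functional_formula}.
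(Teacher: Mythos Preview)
Your proposal is correct and follows essentially the same route as the paper's proof: both derive the phase identity \eqref{eq_plan_identity} from Lemma \ref{lem_U_1_symmetry} (using that the diagonal interaction monomials $V^l_{h,X^l,X^l,\Xi^l,\Phi^l}$ are invariant while $V^{\hm}_{h,\hX^{\hm},\hY^{\hm},\hXi^{\hm},\hPhi^{\hm}}$ picks up the phase), then combine the fundamental theorem of calculus with Cauchy's integral formula and iterate $n$ times, invoking Lemma \ref{lem_schwinger_functional_analytic} with $R_i\in(nr_n,\tfrac{1}{2}\log F_{t,t',d}(\pi/\beta))$ and $R_r>n(r_n+2\pi/L)$ to supply the $Q_n$ and the requisite analyticity domain. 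Your tracking of the accumulated imaginary part $|\Im\sum_{j=1}^k z_j|\le kr_n\le nr_n$ is exactly the point that makes the iteration legitimate, and it matches the paper's choice of parameters.
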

\begin{proof}
Take any $R_i\in (\frac{1}{2}\log F_{t,t',d}(\pi/(2\beta)),\frac{1}{2}\log F_{t,t',d}(\pi/\beta))$ and $R_r>nr_n+\frac{2\pi n}{L}$. By Lemma \ref{lem_schwinger_functional_analytic} we can take a small $Q_n>0$ so that $(z,\eta)\mapsto S_{\hX^{\hm},\hY^{\hm},\hXi^{\hm},\hPhi^{\hm}}$\\$(C_h(z\be_p),\eta)$ is analytic in the domain $\{(z,\eta)\in\C^2\ |\ |\Re z|<R_r,\ |\Im z|<R_i,\ \  |\eta|<Q_n\}$. 

Fix any $\eta \in\C$ with $|\eta|<Q_n$. By noting this domain of analyticity, Lemma \ref{lem_U_1_symmetry}, the fundamental theorem of calculus and Cauchy's integral theorem verify the following equalities.
\begin{equation}\label{eq_schwinger_integral_1}
\begin{split}
&\frac{e^{i\frac{2\pi}{L}\<\sum_{j=1}^{\hm}\hbx_j-\sum_{j=1}^{\hm}\hby_j,\be_p\>}-1}{2\pi/L}S_{\hX^{\hm},\hY^{\hm},\hXi^{\hm},\hPhi^{\hm}}(C_h,\eta)\\
&\quad= \frac{L}{2\pi}\left(S_{\hX^{\hm},\hY^{\hm},\hXi^{\hm},\hPhi^{\hm}}\left(C_h\left(\frac{2\pi}{L}\be_p\right),\eta\right)-S_{\hX^{\hm},\hY^{\hm},\hXi^{\hm},\hPhi^{\hm}}\left(C_h,\eta\right)\right)\\
&\quad= \frac{L}{2\pi}\int^{2\pi/L}_0d\theta_1\frac{d}{d\theta_1}S_{\hX^{\hm},\hY^{\hm},\hXi^{\hm},\hPhi^{\hm}}\left(C_h\left(\theta_1\be_p\right),\eta\right)\\
&\quad=  \frac{L}{2\pi}\int^{2\pi/L}_0d\theta_1\frac{1}{2\pi i}\oint_{|z_1-\theta_1|=r_n}dz_1 \frac{1}{(z_1-\theta_1)^2}S_{\hX^{\hm},\hY^{\hm},\hXi^{\hm},\hPhi^{\hm}}\left(C_h\left(z_1\be_p\right),\eta\right).
\end{split}
\end{equation}

Then, we multiply both sides of \eqref{eq_schwinger_integral_1} by $(e^{i\frac{2\pi}{L}\<\sum_{j=1}^{\hm}\hbx_j-\sum_{j=1}^{\hm}\hby_j,\be_p\>}-1)/(2\pi/L)$ and do the same calculation for the integrand $S_{\hX^{\hm},\hY^{\hm},\hXi^{\hm},\hPhi^{\hm}}\left(C_h\left(z_1\be_p\right),\eta\right)$. Repeating this procedure $n$ times results in \eqref{eq_schwinger_functional_formula}.
\end{proof}

\begin{lemma}\label{lem_schwinger_functional_integral_bound}
For any $m,n\in\N\cup\{0\}$ let $b_{n,m}$ denote the coefficient of $\eta^m$ in the Taylor series of the function 
$$\eta\longmapsto\left(\frac{e^{i\frac{2\pi}{L}\<\sum_{j=1}^{\hm}\hbx_j-\sum_{j=1}^{\hm}\hby_j,\be_p\>}-1}{2\pi/L}\right)^nS_{\hX^{\hm},\hY^{\hm},\hXi^{\hm},\hPhi^{\hm}}(C_h,\eta)$$
around $0\in\C$. The following bounds hold. For any $n\in\N\cup\{0\}$,  $m\in\N$, 
\begin{equation*}
\begin{split}
&|b_{n,0}|\le 2^{n}n^n\left(\log F_{t,t',d}\left(\frac{\pi}{2\beta}\right)\right)^{-n}4^{\hm},\\
&|b_{n,m}|\le 2^{n}n^n\left(\log F_{t,t',d}\left(\frac{\pi}{2\beta}\right)\right)^{-n}\hm 16^{\hm}\\
&\qquad\qquad\cdot\frac{1}{m}\left(\beta\left(\frac{F_{t,t',d}\left(\frac{\pi}{2\beta}\right)^{1/(2e\pi d)}+1}{F_{t,t',d}\left(\frac{\pi}{2\beta}\right)^{1/(2e\pi d)}-1}\right)^d\sum_{l=1}^{\tn}l 16^{l}\|U_{L,l}\|_{L,l}\right)^m,
\end{split}
\end{equation*}
where $0^0=1$.
\end{lemma}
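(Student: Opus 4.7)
The plan is to extract the coefficient of $\eta^m$ on both sides of the contour representation \eqref{eq_schwinger_functional_formula} of Lemma \ref{lem_schwinger_functional_formula}, thereby writing $b_{n,m}$ as a multi-contour average of the $m$-th Taylor coefficient $c_m(z_1,\dots,z_n)$ of $\eta\mapsto S_{\hX^{\hm},\hY^{\hm},\hXi^{\hm},\hPhi^{\hm}}(C_h(\sum_{j=1}^n z_j\be_p),\eta)$ around $0$. The joint analyticity of $(z,\eta)\mapsto S_{\hX^{\hm},\hY^{\hm},\hXi^{\hm},\hPhi^{\hm}}(C_h(z\be_p),\eta)$ given by Lemma \ref{lem_schwinger_functional_analytic}, together with compactness of the contours, legitimizes exchanging the Taylor extraction in $\eta$ with the $(z_1,\dots,z_n)$-integrations, producing
\begin{equation*}
b_{n,m}=\prod_{j=1}^n\left(\frac{L}{2\pi}\int_0^{2\pi/L}d\theta_j\frac{1}{2\pi i}\oint_{|z_j-\theta_j|=r_n}\frac{dz_j}{(z_j-\theta_j)^2}\right)c_m(z_1,\dots,z_n).
\end{equation*}

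Next I would bound $|c_m(z_1,\dots,z_n)|$ uniformly on the contours by applying Proposition \ref{prop_schwinger_functional_bound} to the shifted covariance $G_h:=C_h(\sum_{j=1}^n z_j\be_p)$. On each circle $|\Im z_j|\le r_n=\frac{1}{2n}\log F_{t,t',d}(\pi/(2\beta))$, so $|\Im\sum_{j=1}^n z_j|\le\frac{1}{2}\log F_{t,t',d}(\pi/(2\beta))$, placing $G_h$ in the regime of Lemma \ref{lem_properties_covariance} \eqref{item_determinantbound_covariance} and yielding the determinant hypothesis \eqref{eq_assumption_determinant_bound} with $\cB=4$. Using that $C_h(\bx\xi x,\by\phi y)(\sum z_j\be_p)$ vanishes unless $\xi=\phi$ and depends only on $\bx-\by$ (by space periodicity) and $x-y$, together with the observation that for each fixed $y\in[0,\beta)_h$ the variable $x-y$ stays inside $[-\beta,\beta)_h$, Corollary \ref{cor_covariance_L1_bound} gives the $L^1$-bound
\begin{equation*}
\cD\le 4\beta\left(\frac{F_{t,t',d}(\pi/(2\beta))^{1/(2e\pi d)}+1}{F_{t,t',d}(\pi/(2\beta))^{1/(2e\pi d)}-1}\right)^d.
\end{equation*}
Proposition \ref{prop_schwinger_functional_bound} then gives $|c_0|\le 4^{\hm}$, and for $m\ge 1$, after absorbing the powers of $\cB=4$ into the factor $16^l$,
\begin{equation*}
|c_m|\le\frac{\hm\,16^{\hm}}{m}\left(\beta\left(\frac{F_{t,t',d}(\pi/(2\beta))^{1/(2e\pi d)}+1}{F_{t,t',d}(\pi/(2\beta))^{1/(2e\pi d)}-1}\right)^d\sum_{l=1}^{\tn}l\,16^l\|U_{L,l}\|_{L,l}\right)^m.
\end{equation*}

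Finally, the multi-contour operator is controlled exactly as in the estimate \eqref{eq_evaluation_multi_integral}: each angular average $\frac{L}{2\pi}\int_0^{2\pi/L}d\theta_j$ has operator norm $1$, and each circle integral with kernel $(z_j-\theta_j)^{-2}$ contributes at most $r_n^{-1}$, giving the overall factor $r_n^{-n}=2^n n^n(\log F_{t,t',d}(\pi/(2\beta)))^{-n}$. Multiplying this prefactor by the uniform bound on $|c_m|$ yields the two claimed inequalities. The main technical obstacle is the bookkeeping step that identifies the $\cD$ for $G_h$ with (a constant multiple of) the half-strip sum appearing in Corollary \ref{cor_covariance_L1_bound}, and the careful tracking of how the $\cB=4$ factors from Proposition \ref{prop_schwinger_functional_bound} recombine with the $4\beta$ from $\cD$ to collapse into the stated coefficient $\beta(\cdots)^d\cdot l\,16^l$; apart from this verification, the argument is a direct assembly of the preparatory results.
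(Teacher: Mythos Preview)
Your proposal is correct and follows essentially the same route as the paper's proof: both invoke Lemma \ref{lem_schwinger_functional_formula} to write $b_{n,m}$ as the multi-contour average of the $m$-th Taylor coefficient of $\eta\mapsto S_{\hX^{\hm},\hY^{\hm},\hXi^{\hm},\hPhi^{\hm}}(C_h(\sum_j z_j\be_p),\eta)$, bound that coefficient uniformly on the contours via Proposition \ref{prop_schwinger_functional_bound} with $\cB=4$ (using Lemma \ref{lem_properties_covariance} and Corollary \ref{cor_covariance_L1_bound}), and then apply \eqref{eq_evaluation_multi_integral}. The only cosmetic differences are that the paper cites Lemma \ref{lem_properties_covariance} \eqref{item_determinant_covariance} explicitly for nonsingularity of $C_h(\sum_j z_j\be_p)$ and uses Fubini--Tonelli rather than the joint analyticity of Lemma \ref{lem_schwinger_functional_analytic} to justify exchanging $\sum_{m}$ with the contour integrals.
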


\begin{proof}
We apply Proposition \ref{prop_schwinger_functional_bound} to bound the integrand\\ $S_{\hX^{\hm},\hY^{\hm},\hXi^{\hm},\hPhi^{\hm}}(C_h(\sum_{j=1}^nz_j\be_p),\eta)$ in \eqref{eq_schwinger_functional_formula}. Since $|\Im \sum_{j=1}^nz_j|\le$\\
$\frac{1}{2}\log F_{t,t',d}(\pi/(2\beta))$, Lemma \ref{lem_properties_covariance} \eqref{item_determinantbound_covariance}, \eqref{item_determinant_covariance}  imply that the covariance $C_h(\sum_{j=1}^nz_j\be_p)$ satisfies the assumptions of Proposition \ref{prop_schwinger_functional_bound} with $\cB=4$. Moreover, Corollary \ref{cor_covariance_L1_bound}, the periodicity and the translation invariance of $C_h$ ensure that 
$$\cD\le 4\beta\left(\frac{F_{t,t',d}\left(\frac{\pi}{2\beta}\right)^{1/(2e\pi d)}+1}{F_{t,t',d}\left(\frac{\pi}{2\beta}\right)^{1/(2e\pi d)}-1}\right)^d.$$
Therefore, if $\tilde{b}_m$ denotes the coefficient of $\eta^m$ $(\forall m\in\N\cup \{0\})$ in the Taylor series of the function
$\eta\mapsto S_{\hX^{\hm},\hY^{\hm},\hXi^{\hm},\hPhi^{\hm}}(C_h(\sum_{j=1}^nz_j\be_p),\eta)$, we obtain the inequalities that $|\tilde{b}_0|\le 4^{\hm}$,  
\begin{equation}\label{eq_each_order_bound}
|\tilde{b}_m|\le \frac{\hm 16^{\hm}}{m}\left(\beta \left(\frac{F_{t,t',d}\left(\frac{\pi}{2\beta}\right)^{1/(2e\pi d)}+1}{F_{t,t',d}\left(\frac{\pi}{2\beta}\right)^{1/(2e\pi d)}-1}\right)^d\sum_{l=1}^{\tn}l16^{l}\|U_{L,l}\|_{L,l}\right)^m\ (\forall m\in\N).
\end{equation}

In the right hand side of \eqref{eq_schwinger_functional_formula} we expand $S_{\hX^{\hm},\hY^{\hm},\hXi^{\hm},\hPhi^{\hm}}(C_h(\sum_{j=1}^nz_j\be_p),\eta)$ into the power series of $\eta$. Here note that by Fubini-Tonelli's theorem we may exchange the integral $\prod_{j=1}^n\left(\frac{L}{2\pi}\int_{0}^{2\pi/L}d\theta_j\frac{1}{2\pi i}\oint_{|z_j-\theta_j|=r_n}dz_j\frac{1}{(z_j-\theta_j)^2}\right)\sum_{m=0}^{\infty}$ by\\ 
$\sum_{m=0}^{\infty}\prod_{j=1}^n\left(\frac{L}{2\pi}\int_{0}^{2\pi/L}d\theta_j\frac{1}{2\pi i}\oint_{|z_j-\theta_j|=r_n}dz_j\frac{1}{(z_j-\theta_j)^2}\right)$ if $|\eta|$ is sufficiently small, and thus the equality
$$b_{n,m}=\prod_{j=1}^n\left(\frac{L}{2\pi}\int_{0}^{2\pi/L}d\theta_j\frac{1}{2\pi i}\oint_{|z_j-\theta_j|=r_n}dz_j\frac{1}{(z_j-\theta_j)^2}\right)\tilde{b}_m\ $$
holds for all $m\in\N\cup \{0\}$. Combining the inequality \eqref{eq_evaluation_multi_integral} with the bounds \eqref{eq_each_order_bound} yields the desired bound on $|{b}_{n,m}|$.
\end{proof}

\begin{corollary}\label{cor_analytic_continuation}
Assume \eqref{eq_assumption_U_convergence} with $R\in (0,1)$ and use the same notation as in Lemma \ref{lem_schwinger_functional_integral_bound}. For any $n\in \N\cup \{0\}$ and $p\in \{1,\cdots,d\}$ there exists $N_0\in\N$ such that 
\begin{equation}\label{eq_analytic_continuation}
 \left(\frac{e^{i\frac{2\pi}{L}\<\sum_{j=1}^{\hm}\hbx_j-\sum_{j=1}^{\hm}\hby_j,\be_p\>}-1}{2\pi/L}\right)^nS_{\hX^{\hm},\hY^{\hm},\hXi^{\hm},\hPhi^{\hm}}(C_h,1)=\sum_{m=0}^{\infty}{b}_{n,m},
\end{equation}
and 
\begin{equation}\label{eq_analytic_continuation_bound}
\begin{split}
& \left|\frac{e^{i\frac{2\pi}{L}\<\sum_{j=1}^{\hm}\hbx_j-\sum_{j=1}^{\hm}\hby_j,\be_p\>}-1}{2\pi/L}\right|^n\left|S_{\hX^{\hm},\hY^{\hm},\hXi^{\hm},\hPhi^{\hm}}(C_h,1)\right|\\
&\quad\le 2^{n}n^n \left(\log F_{t,t',d}\left(\frac{\pi}{2\beta}\right)\right)^{-n}\left(4^{\hm}-\hm16^{\hm}\log(1-R)\right),
\end{split}
\end{equation}
for all $h\in 2\N/\beta$ with $h\ge 2N_0/\beta$.
\end{corollary}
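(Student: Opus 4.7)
The plan is to combine the Taylor coefficient estimates of Lemma \ref{lem_schwinger_functional_integral_bound} with a direct analytic continuation argument. The first step is the elementary observation that the finite-volume norm obeys $\|U_{L,l}\|_{L,l}\le \|U_l\|_l$: by \eqref{eq_reduction_periodicity} each value of $U_{L,l}(\cdot,\b0,\ldots)$ is literally a value of $U_l$ at a point of the fundamental domain $\{-\lfloor L/2\rfloor,\ldots,-\lfloor L/2\rfloor+L-1\}^d\subset\Z^d$, so $\|U_{L,l}\|_{L,l}$ is a partial sum of $\|U_l\|_l$. Combining this with hypothesis \eqref{eq_assumption_U_convergence} yields
\begin{equation*}
\beta\left(\frac{F_{t,t',d}(\pi/(2\beta))^{1/(2e\pi d)}+1}{F_{t,t',d}(\pi/(2\beta))^{1/(2e\pi d)}-1}\right)^d\sum_{l=1}^{\tn}l\,16^l\|U_{L,l}\|_{L,l}\le R<1.
\end{equation*}

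Feeding this into Lemma \ref{lem_schwinger_functional_integral_bound} gives $|b_{n,0}|\le 2^n n^n(\log F_{t,t',d}(\pi/(2\beta)))^{-n}4^{\hm}$ and, for every $m\ge 1$,
\begin{equation*}
|b_{n,m}|\le 2^n n^n(\log F_{t,t',d}(\pi/(2\beta)))^{-n}\hm\, 16^{\hm}\frac{R^m}{m}.
\end{equation*}
Since $R\in(0,1)$, the identity $\sum_{m=1}^{\infty}R^m/m=-\log(1-R)$ makes the series $\sum_{m=0}^{\infty}b_{n,m}$ absolutely convergent, with
\begin{equation*}
\sum_{m=0}^{\infty}|b_{n,m}|\le 2^n n^n(\log F_{t,t',d}(\pi/(2\beta)))^{-n}\bigl(4^{\hm}-\hm\,16^{\hm}\log(1-R)\bigr).
\end{equation*}
This is exactly the bound \eqref{eq_analytic_continuation_bound} once the equality \eqref{eq_analytic_continuation} is established.

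For the equality, I exploit the fact that $\eta\mapsto S_{\hX^{\hm},\hY^{\hm},\hXi^{\hm},\hPhi^{\hm}}(C_h,\eta)$ is a rational function in $\eta$: the Grassmann algebra is finite-dimensional, so the numerator and denominator of \eqref{eq_schwinger_functional} are polynomials in $\eta$, and the denominator equals $1$ at $\eta=0$. Multiplying by the $\eta$-independent prefactor $((e^{i 2\pi\<\sum_{j=1}^{\hm}\hbx_j-\sum_{j=1}^{\hm}\hby_j,\be_p\>/L}-1)/(2\pi/L))^n$ keeps it rational; its Taylor series at $\eta=0$ is $\sum_m b_{n,m}\eta^m$. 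The bound $\limsup_m|b_{n,m}|^{1/m}\le R<1$ forces the radius of convergence to be at least $1/R>1$, so the rational function is pole-free and analytic throughout $|\eta|<1/R$, where it coincides with its Taylor series. Specialising to $\eta=1$ gives \eqref{eq_analytic_continuation}.

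The only subtlety, and the main point of care, is that the literal formula \eqref{eq_schwinger_functional} for $S_{\hX^{\hm},\hY^{\hm},\hXi^{\hm},\hPhi^{\hm}}(C_h,1)$ requires its denominator to be nonzero at $\eta=1$, i.e.\ that $(\Tr e^{-\beta H}/\Tr e^{-\beta H_0})_h\ne 0$. This is precisely where Lemma \ref{lem_correlation_h_limit} is invoked: it supplies an $N_0\in\N$ such that $\Re(\Tr e^{-\beta H}/\Tr e^{-\beta H_0})_h>0$ for all $h\ge 2N_0/\beta$ (applied with $\lambda\equiv 0$ and any $r>0$), ensuring the literal definition is available and completing the argument.
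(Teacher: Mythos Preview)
Your proof is correct and uses essentially the same ingredients as the paper: the norm inequality $\|U_{L,l}\|_{L,l}\le\|U_l\|_l$, the Taylor-coefficient bounds of Lemma~\ref{lem_schwinger_functional_integral_bound}, and Lemma~\ref{lem_correlation_h_limit} to guarantee the denominator is nonzero for large $h$. The only cosmetic difference is in the analytic continuation step: the paper first shows the denominator is nonzero on all of $[-1,1]$ (via self-adjointness of $H_0+\eta V$ for real $\eta$) to obtain a connected neighborhood $O_h\supset[-1,1]$ and then applies the identity theorem, whereas you read off pole-freeness of the rational function on the disk $|\eta|<1/R$ directly from the decay of $|b_{n,m}|$ and need Lemma~\ref{lem_correlation_h_limit} only at the single point $\eta=1$.
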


\begin{proof}
Since $\Tr e^{-\beta(H_0+\eta V)}/\Tr e^{-\beta H_0}>0$ for all $\eta\in\R$, the uniform convergence property \eqref{eq_convergence_partition_function} ensures that there exists $N_0\in \N$ such that 
\begin{equation}\label{eq_regular_inequality}
\left|\int e^{\eta \sum_{l=1}^{\tn}\sum_{(X^l,\Xi^l,\Phi^l)\in\G^{l}\times\spin^{2l}}U_{L,l}(X^l,\Xi^l,\Phi^l)V_{h,X^l,X^l,\Xi^l,\Phi^l}^l(\opsi_X,\psi_X)}d\mu_{C_h}(\opsi_X,\psi_X)\right|>0
\end{equation}
for all $\eta\in\R$ with $|\eta|\le 1$ and $h\in 2\N/\beta$ with $h\ge 2N_0/\beta$. Let us fix such a large $h$. Since the Grassmann Gaussian integral in \eqref{eq_regular_inequality} is a polynomial of $\eta$, we can take a domain $O_h\subset \C$ such that $[-1,1]\subset O_h$ and the inequality \eqref{eq_regular_inequality} holds for all $\eta\in O_h$. This proves that the function
\begin{equation*}
\eta\longmapsto \left(\frac{e^{i\frac{2\pi}{L}\<\sum_{j=1}^{\hm}\hbx_j-\sum_{j=1}^{\hm}\hby_j,\be_p\>}-1}{2\pi/L}\right)^nS_{\hX^{\hm},\hY^{\hm},\hXi^{\hm},\hPhi^{\hm}}(C_h,\eta)
\end{equation*}
is analytic in the domain $O_h$. 

On the other hand, Lemma \ref{lem_schwinger_functional_integral_bound} and the inequality $\|U_{L,l}\|_{L,l}\le \|U_l\|_l$ imply that if $\eta\in\C$ satisfies
\begin{equation}\label{eq_condition_eta}
|\eta|<\left(\beta\left(\frac{F_{t,t',d}\left(\frac{\pi}{2\beta}\right)^{1/(2e\pi d)}+1}{F_{t,t',d}\left(\frac{\pi}{2\beta}\right)^{1/(2e\pi d)}-1}\right)^d\sum_{l=1}^{\tn}l16^{l}\|U_l\|_l\right)^{-1}R,
\end{equation}
\begin{equation}\label{eq_bound_sum_eta}
\begin{split}
\sum_{m=0}^{\infty}|{b}_{n,m}||\eta^m|&\le 2^{n}n^n
\left(\log F_{t,t',d}\left(\frac{\pi}{2\beta}\right)\right)^{-n}4^{\hm}\\
&\quad+2^{n}n^n \left(\log F_{t,t',d}\left(\frac{\pi}{2\beta}\right)\right)^{-n}\hm 16^{\hm}\sum_{m=1}^{\infty}\frac{1}{m}R^m\\
&= 2^{n}n^n \left(\log F_{t,t',d}\left(\frac{\pi}{2\beta}\right)\right)^{-n}\left(4^{\hm}-\hm16^{\hm}\log\left(1-R\right)\right)<\infty.
\end{split}
\end{equation}
Since by the assumption \eqref{eq_assumption_U_convergence} the right hand side of \eqref{eq_condition_eta} is larger than $1$, the identity theorem for analytic functions proves the equality
\begin{equation*}
 \left(\frac{e^{i\frac{2\pi}{L}\<\sum_{j=1}^{\hm}\hbx_j-\sum_{j=1}^{\hm}\hby_j,\be_p\>}-1}{2\pi/L}\right)^nS_{\hX^{\hm},\hY^{\hm},\hXi^{\hm},\hPhi^{\hm}}(C_h,\eta)=\sum_{m=0}^{\infty}{b}_{n,m}\eta^m
\end{equation*}
for all $\eta\in [-1,1]$. The bound \eqref{eq_bound_sum_eta} for $\eta=1$ gives \eqref{eq_analytic_continuation_bound}.
\end{proof}

As the last lemma let us claim
\begin{lemma}\label{lem_thermodynamic_limit}
Assume \eqref{eq_assumption_U_convergence}. The thermodynamic limit \eqref{eq_thermodynamic_limit} exists and takes a finite value.
\end{lemma}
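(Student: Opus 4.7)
The plan is to combine the Grassmann-integral representation of Proposition~\ref{prop_correlation_grassmann_integral} with the Taylor-series bound of Corollary~\ref{cor_analytic_continuation}. By Proposition~\ref{prop_correlation_grassmann_integral}, the finite-volume correlation function equals $\lim_{h\to+\infty}$ of $(-\beta^{-1})(S_{\hX^{\hm},\hY^{\hm},\hXi^{\hm},\hPhi^{\hm}}(C_h,1)+S_{\hY^{\hm},\hX^{\hm},\hPhi^{\hm},\hXi^{\hm}}(C_h,1))$, and Corollary~\ref{cor_analytic_continuation} applied with $n=0$ expresses each Schwinger functional, for $h$ large enough, as $\sum_{m=0}^{\infty}b_{0,m}(L,h)$. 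Because $\|U_{L,l}\|_{L,l}\le\|U_l\|_l$ under the identification of $\G$ with a fundamental domain of $\Z^d$, the assumption~\eqref{eq_assumption_U_convergence} makes Lemma~\ref{lem_schwinger_functional_integral_bound} produce an absolute bound on $|b_{0,m}(L,h)|$ that is uniform in both $L$ and $h$ and summable in $m$. Existence of the thermodynamic limit will therefore follow from the existence of $\lim_{L\to+\infty}\lim_{h\to+\infty}b_{0,m}(L,h)$ for each fixed $m$, followed by a dominated-convergence interchange with $\sum_{m=0}^{\infty}$.

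Each coefficient $b_{0,m}(L,h)$ is given explicitly by the tree formula used in the proof of Proposition~\ref{prop_schwinger_functional_bound}: a finite sum over trees $T\in\T(\{0,1,\ldots,m\})$, interaction labels $l_k\in\{1,\ldots,\tn\}$, spatial and spin configurations in $\G^{l_k}\times\spin^{2l_k}$, and time slices in $[0,\beta)_h$, whose summand is a product of coefficients $U_{L,l_k}$ multiplied by a Grassmann-derivative object bounded by products of entries of $C_h$ via the determinant bound~\eqref{eq_determinant_bound_appl}. Sending $h\to+\infty$ first, the Riemann sums $h^{-1}\sum_{s_k\in[0,\beta)_h}$ converge to Lebesgue integrals $\int_0^\beta ds_k$ and the time-continuous integrand is bounded and continuous almost everywhere in $[0,\beta)^m$, so term-by-term dominated convergence yields a well-defined limit $b_{0,m}(L,\infty)$; the $h$-uniform bound of Lemma~\ref{lem_schwinger_functional_integral_bound} then justifies exchanging $\lim_{h\to+\infty}$ with $\sum_{m=0}^{\infty}$.

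The main obstacle is the subsequent limit $L\to+\infty$. For each fixed $m$, I fix a root vertex of each tree and use the translation invariance~\eqref{eq_translation_invariance} of $U_l$ together with translation invariance of the kernel $C(\bx\xi x,\by\phi y)$ in $\bx-\by$ to collapse the $\hm+\sum_k l_k$ spatial sums to $\sum_k(l_k-1)$ independent sums of differences, at the cost of a single overall spatial factor that cancels against the prefactor. The exponential decay~\eqref{eq_covariance_exponential_decay_limit} of Proposition~\ref{prop_covariance_exponential_decay} and Corollary~\ref{cor_covariance_L1_bound} supply $\ell^1$-type majorants uniform in $L$ for every sum of products of covariance entries traversing the tree, while $\|U_{L,l}\|_{L,l}\le\|U_l\|_l$ and the assumed summability of $\|U_l\|_l$ control the interaction factors. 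Since $U_{L,l}\to U_l$ pointwise by~\eqref{eq_limit_equivalence} and the finite-$L$ covariance converges pointwise to its infinite-volume analogue (the Riemann sum $L^{-d}\sum_{\bk\in\G^*}$ converging to $(2\pi)^{-d}\int_{[-\pi,\pi]^d}d\bk$), dominated convergence yields $b_{0,m}(\infty,\infty)$ for each $m$. A final application of dominated convergence, justified by the uniform-in-$L$ summable majorant coming from Lemma~\ref{lem_schwinger_functional_integral_bound}, interchanges $\lim_{L\to+\infty}$ with $\sum_{m=0}^{\infty}$, establishing the existence and finiteness of~\eqref{eq_thermodynamic_limit}.
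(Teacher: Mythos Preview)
Your proposal is correct and follows essentially the same route as the paper's proof in Appendix~\ref{app_thermodynamic_limit}: reduce via Proposition~\ref{prop_correlation_grassmann_integral} and Corollary~\ref{cor_analytic_continuation} (with $n=0$) to the term-by-term existence of $\lim_{L\to\infty}\lim_{h\to\infty}b_{0,m}$, then use the tree formula, translation invariance of $C_h$ and $U_{L,l}$, the exponential decay \eqref{eq_covariance_exponential_decay_limit}, and dominated convergence. One minor correction: there are only $\sum_k l_k$ spatial sums (the $\hm$ external points $\hat\bx_j,\hat\by_j$ are fixed, not summed), and there is no volume prefactor to cancel---the root vertex~$0$ carries the fixed external positions, and the recursive anchoring along tree lines (as in \eqref{eq_recurcive_transformation}) reduces each $\sum_{X_q^{l_q}\in\G^{l_q}}$ to a sum with one coordinate pinned at~$\b0$, whose $L$-uniform $\ell^1$ majorant comes directly from \eqref{eq_covariance_exponential_decay_limit} and $\|U_l\|_l$.
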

For continuity of our argument in this subsection we present the proof of Lemma \ref{lem_thermodynamic_limit} in Appendix \ref{app_thermodynamic_limit}. We now proceed to
\begin{proof}[Proof of Theorem \ref{thm_exponential_decay}]
By substituting \eqref{eq_stirling_inequality} into \eqref{eq_analytic_continuation_bound} and dividing both sides by $4^ne^nn!\cdot\left(\log F_{t,t',d}\left(\pi/(2\beta)\right)\right)^{-n}$ we have
\begin{equation}\label{eq_identity_bound}
\begin{split}
& \frac{1}{n!}\left(\frac{\log F_{t,t',d}\left(\frac{\pi}{2\beta}\right)}{4e}\left|\frac{e^{i\frac{2\pi}{L}\<\sum_{j=1}^{\hm}\hbx_j-\sum_{j=1}^{\hm}\hby_j,\be_p\>}-1}{2\pi/L}\right|\right)^n\left|S_{\hX^{\hm},\hY^{\hm},\hXi^{\hm},\hPhi^{\hm}}(C_h,1)\right|\\
&\quad\le 2^{-n}\left(4^{\hm}-\hm16^{\hm}\log(1-R)\right)
\end{split}
\end{equation}
for all $n\in \N\cup \{0\}$ and $p\in\{1,\cdots,d\}$. Summing \eqref{eq_identity_bound} over $n\in\N\cup\{0\}$ yields 
\begin{equation}\label{eq_schwinger_pre_bound}
\begin{split}
\left|S_{\hX^{\hm},\hY^{\hm},\hXi^{\hm},\hPhi^{\hm}}(C_h,1)\right|\le& 2\left(4^{\hm}-\hm 16^{\hm}\log(1-R)\right)\\
&\cdot F_{t,t',d}\left(\frac{\pi}{2\beta}\right)^{-\frac{1}{4ed}\sum_{p=1}^d\left|\frac{e^{i2\pi\<\sum_{j=1}^{\hm}\hbx_j-\sum_{j=1}^{\hm}\hby_j,\be_p\>/L}-1}{2\pi/L}\right|}
\end{split}
\end{equation}
for all $h\in 2\N/\beta$ with $h\ge 2N_0/\beta$.
To derive the same bound on $S_{\hY^{\hm},\hX^{\hm},\hPhi^{\hm},\hXi^{\hm}}$\\
$(C_h,1)$ from \eqref{eq_schwinger_pre_bound} is immediate. Then, Proposition \ref{prop_correlation_grassmann_integral} ensures that
\begin{equation*}
\begin{split}
&\left|\<\psi_{\hbx_1\hxi_1}^*\cdots\psi_{\hbx_{\hm}\hxi_{\hm}}^*\psi_{\hby_{\hm}\hphi_{\hm}}\cdots\psi_{\hby_{1}\hphi_{1}}+\psi_{\hby_{1}\hphi_{1}}^*\cdots\psi_{\hby_{\hm}\hphi_{\hm}}^*\psi_{\hbx_{\hm}\hxi_{\hm}}\cdots\psi_{\hbx_1\hxi_1}\>_L\right|\\
&\quad\le (4^{\hm+1}-\hm 4^{2\hm+1}\log(1-R))\\
&\qquad\cdot  F_{t,t',d}\left(\frac{\pi}{2\beta}\right)^{-\frac{1}{4ed}\sum_{p=1}^d\left|\frac{e^{i2\pi\<\sum_{j=1}^{\hm}\hbx_j-\sum_{j=1}^{\hm}\hby_j,\be_p\>/L}-1}{2\pi/L}\right|}.
\end{split}
\end{equation*}
Finally, by Lemma \ref{lem_thermodynamic_limit} we can take the limit $L\to +\infty$ in the both sides to obtain the claimed inequality.
\end{proof}

\begin{proof}[Proof of Theorem \ref{thm_exponential_decay_hubbard_model}]
Since the construction of Theorem \ref{thm_exponential_decay_hubbard_model} is parallel to that of Theorem \ref{thm_exponential_decay}, we only outline the proof. The main difference is that we use Proposition \ref{prop_schwinger_functional_hubbard_model_bound} in place of Proposition \ref{prop_schwinger_functional_bound}.

(Step 1) For any $m,n\in\N\cup\{0\}$ let $c_{n,m}$ denote the coefficient of $\eta^m$ in the Taylor expansion of 
\begin{equation*}
\eta\longmapsto \left(\frac{e^{i\frac{2\pi}{L}\<\hbx_1+\hbx_2-\hby_1-\hby_2,\be_p\>}-1}{2\pi/L}\right)^nS_{(\hbx_1,\hbx_2), (\hby_1,\hby_2),(\ua,\da), (\ua,\da)}(C_h,\eta)
\end{equation*}
around $0\in\C$. By repeating the same argument as in Lemma \ref{lem_schwinger_functional_integral_bound} using Proposition \ref{prop_schwinger_functional_hubbard_model_bound} in stead of Proposition \ref{prop_schwinger_functional_bound} we obtain
\begin{equation*}
\begin{split}
|c_{n,m}|\le &2^n n^n \left(\log F_{t,t',d}\left(\frac{\pi}{2\beta}\right)\right)^{-n}\\
&\cdot\frac{64}{3m+4}\left(\begin{array}{c} 3m+4 \\ m\end{array}\right)\left(16\beta \left(
\frac{F_{t,t',d}\left(\frac{\pi}{2\beta}\right)^{1/(2e\pi d)}+1}{F_{t,t',d}\left(\frac{\pi}{2\beta}\right)^{1/(2e\pi d)}-1}\right)^d|U|\right)^m.
\end{split}
\end{equation*}

(Step 2) Recall the fact that the radius of convergence of the power series 
$$\sum_{m=0}^{\infty}\frac{4}{3m+4}\left(\begin{array}{c} 3m+4 \\ m\end{array}\right)x^m$$
is $4/27$ and
\begin{equation*}
\sum_{m=0}^{\infty}\frac{4}{3m+4}\left(\begin{array}{c} 3m+4 \\ m\end{array}\right)\left(\frac{4}{27}\right)^m=\frac{81}{16}
\end{equation*}
(see \cite[\mbox{Lemma 4.9}]{K}).

Therefore, the argument involving the identity theorem parallel to the proof of Corollary \ref{cor_analytic_continuation} shows that for all $U\in\R$ satisfying 
\begin{equation*}
|U|<(108\beta)^{-1}\left(
\frac{F_{t,t',d}\left(\frac{\pi}{2\beta}\right)^{1/(2e\pi d)}+1}{F_{t,t',d}\left(\frac{\pi}{2\beta}\right)^{1/(2e\pi d)}-1}\right)^{-d}
\end{equation*}
and for sufficiently large $h\in 2\N/\beta$ the equality
\begin{equation*}
 \left(\frac{e^{i\frac{2\pi}{L}\<\hbx_1+\hbx_2-\hby_1-\hby_2,\be_p\>}-1}{2\pi/L}\right)^nS_{(\hbx_1,\hbx_2),(\hby_1,\hby_2),(\ua,\da),(\ua,\da)}(C_h,1)=\sum_{m=0}^{\infty}{c}_{n,m}
\end{equation*}
and the inequality 
\begin{equation}\label{eq_pre_bound_Hubbard}
\begin{split}
& \left|\frac{e^{i\frac{2\pi}{L}\<\hbx_1+\hbx_2-\hby_1-\hby_2,\be_p\>}-1}{2\pi/L}\right|^n\left|S_{(\hbx_1,\hbx_2),(\hby_1,\hby_2),(\ua,\da),(\ua,\da)}\left(C_h,1\right)\right|\\
&\quad\le 2^{n}n^n \left(\log F_{t,t',d}\left(\frac{\pi}{2\beta}\right)\right)^{-n}\cdot 81
\end{split}
\end{equation}
hold. By using the fact that $S_{(\hbx_1,\hbx_2),(\hby_1,\hby_2),(\ua,\da),(\ua,\da)}\left(C_h,1\right)$ is a rational function of $U$, we can prove \eqref{eq_pre_bound_Hubbard} for $U\in\R$ with \eqref{eq_assumption_U_convergence_hubbard_model} as well.

(Step 3) By repeating the same calculation as in the proof of Theorem \ref{thm_exponential_decay} or Proposition \ref{prop_covariance_exponential_decay} we reach the bound
\begin{equation}\label{eq_decay_bound_almost}
\begin{split}
&\left|\<\psi_{\hbx_1\ua}^*\psi_{\hbx_{2}\da}^*\psi_{\hby_{2}\da}\psi_{\hby_{1}\ua}+\psi_{\hby_{1}\ua}^*\psi_{\hby_{2}\da}^*\psi_{\hbx_{2}\da}\psi_{\hbx_1\ua}\>_L\right|\\
&\quad\le 324\cdot F_{t,t',d}\left(\frac{\pi}{2\beta}\right)^{-\frac{1}{4ed}\sum_{p=1}^d\left|\frac{e^{i2\pi\<\hbx_1+\hbx_2-\hby_1-\hby_2,\be_p\>/L}-1}{2\pi/L}\right|}
\end{split}
\end{equation}
for $U\in\R$ satisfying \eqref{eq_assumption_U_convergence_hubbard_model} from \eqref{eq_pre_bound_Hubbard}.

(Step 4) By writing
\begin{equation*}
\begin{split}
&U\sum_{\bx\in\G}\psi_{\bx\ua}^*\psi_{\bx\da}^*\psi_{\bx\da}\psi_{\bx\ua}\\
&\quad =\sum_{\bx_1,\bx_2\in\G}\sum_{\xi_1,\xi_2,\phi_1,\phi_2\in\spin}U_o((\bx_1,\bx_2),(\xi_1,\xi_2),(\phi_1,\phi_2))\psi_{\bx_1\xi_1}^*\psi_{\bx_2\xi_2}^*\psi_{\bx_2\phi_2}\psi_{\bx_1\phi_1}
\end{split}
\end{equation*}
with $U_o((\bx_1,\bx_2),(\xi_1,\xi_2),(\phi_1,\phi_2)):=U\delta_{\bx_1,\bx_2}\delta_{\xi_1,\ua}\delta_{\xi_2,\da}\delta_{\xi_1,\phi_1}\delta_{\xi_2,\phi_2}$,
one can translate the proof of Lemma \ref{lem_thermodynamic_limit} to confirm the existence of the thermodynamic limit \eqref{eq_thermodynamic_limit_hubbard}.  Then, by sending $L\to +\infty$ in \eqref{eq_decay_bound_almost} we obtain the claimed bound for $U$ satisfying \eqref{eq_assumption_U_convergence_hubbard_model}.
\end{proof}

\appendix

\section{Anti-symmetrization of the coefficients}\label{app_anti_symmetricity}
For simplicity we use the notations $X^l,Y^l\in\G^l$, $\Xi^l,\Phi^l\in\spin^l$ $(l\in\N)$ defined in the subsection \ref{subsec_correlation_functions}. Moreover, let $X^l_{\Xi}$, $Y^l_{\Phi}$ denote 
$$((\bx_1,\xi_1),(\bx_2,\xi_2),\cdots, (\bx_l,\xi_l)),\ ((\by_1,\phi_1),(\by_2,\phi_2),\cdots, (\by_l,\phi_l))\in (\G\times\spin)^l, 
$$
respectively. Let $S_l$ be the set of permutations over $l$ elements. For $\pi,\tau\in S_l$ we write
\begin{equation*}
\begin{split}
&\pi(X^l_{\Xi}):=((\bx_{\pi(1)},\xi_{\pi(1)}),\cdots,(\bx_{\pi(l)},\xi_{\pi(l)})),\\
&\tau(Y^l_{\Phi}):=((\by_{\tau(1)},\phi_{\tau(1)}),\cdots,(\by_{\tau(l)},\phi_{\tau(l)})).
\end{split}
\end{equation*}

\begin{lemma}\label{lem_anti_symmetrization}
Assume that $l\in\{1,\cdots,2L^d\}$. For any function $g:\G^l\times\spin^{2l}\to\C$ there uniquely exists a function $f:(\G\times\spin)^{2l}\to \C$ such that
\begin{equation}\label{eq_anti_symmetricity}
f(\pi(X^l_{\Xi}),\tau(Y^l_{\Phi}))=\sgn(\pi)\sgn(\tau)f(X^l_{\Xi},Y^l_{\Phi})
\end{equation}
for all $\pi,\tau\in S_l$, $X^l_{\Xi}, Y^l_{\Phi}\in (\G\times\spin)^l$ and
\begin{equation}\label{eq_anti_symmetric_modification}
\begin{split}
&\sum_{X^l\in\G^l}\sum_{\Xi^l,\Phi^l\in\spin^l}g(X^l,\Xi^l,\Phi^l)\psi^*_{\bx_1\xi_1}\cdots\psi^*_{\bx_l\xi_l}\psi_{\bx_l\phi_l}\cdots\psi_{\bx_1\phi_1}\\
&\quad = \sum_{X_{\Xi}^l,Y_{\Phi}^l\in(\G\times\spin)^l}f(X_{\Xi}^l,Y_{\Phi}^l)\psi^*_{\bx_1\xi_1}\cdots\psi^*_{\bx_l\xi_l}\psi_{\by_1\phi_1}\cdots\psi_{\by_l\phi_l}.
\end{split}
\end{equation}
Moreover, the following inequality holds.
\begin{equation}\label{eq_anti_symmetricity_inequality}
\begin{split}
&\max\Bigg\{\max_{(\bx_1,\xi_1)\in\G\times\spin}\sum_{(\bx_k,\xi_k)\in\G\times\spin\atop \forall k\in\{2,\cdots,l\}}\sum_{Y_{\Phi}^l\in(\G\times\spin)^l}|f(X^l_{\Xi}, Y^l_{\Phi})|,\\
&\quad\qquad \max_{(\by_1,\phi_1)\in\G\times\spin}\sum_{X_{\Xi}^l\in(\G\times\spin)^l}\sum_{(\by_k,\phi_k)\in\G\times\spin\atop \forall k\in\{2,\cdots,l\}}|f(X^l_{\Xi}, Y^l_{\Phi})|\Bigg\}\\
&\quad \le \max\Bigg\{\max_{j\in\{1,\cdots,l\}}\max_{(\bx_j,\xi_j)\in\G\times\spin}\sum_{(\bx_k,\xi_k)\in\G\times\spin\atop \forall k\in\{1,\cdots,l\}\backslash\{j\}}\sum_{\Phi^l\in\spin^l}|g(X^l,\Xi^l, \Phi^l)|,\\
&\quad\qquad\qquad \max_{j\in\{1,\cdots,l\}}\max_{(\bx_j,\phi_j)\in\G\times\spin}\sum_{(\bx_k,\phi_k)\in\G\times\spin\atop \forall k\in\{1,\cdots,l\}\backslash\{j\}}\sum_{\Xi^l\in\spin^l}|g(X^l,\Xi^l, \Phi^l)|\Bigg\}.
\end{split}
\end{equation}
\end{lemma}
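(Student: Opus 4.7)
The plan is to construct $f$ by a standard antisymmetrization of the coefficient $g$ and then verify \eqref{eq_anti_symmetric_modification}, uniqueness, and \eqref{eq_anti_symmetricity_inequality} in turn. First, I would use the Fermionic anticommutation relation $\psi_{\bx_l\phi_l}\cdots\psi_{\bx_1\phi_1}=(-1)^{l(l-1)/2}\psi_{\bx_1\phi_1}\cdots\psi_{\bx_l\phi_l}$ to rewrite the left side of \eqref{eq_anti_symmetric_modification}. Introducing
\[
\tilde{g}(X^l_\Xi,Y^l_\Phi):=(-1)^{l(l-1)/2}\,g(X^l,\Xi^l,\Phi^l)\prod_{j=1}^l\delta_{\bx_j,\by_j},
\]
the left side becomes $\sum_{X^l_\Xi,Y^l_\Phi}\tilde{g}(X^l_\Xi,Y^l_\Phi)\psi^*_{\bx_1\xi_1}\cdots\psi^*_{\bx_l\xi_l}\psi_{\by_1\phi_1}\cdots\psi_{\by_l\phi_l}$, which has the same shape as the right side. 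I then define the candidate
\[
f(X^l_\Xi,Y^l_\Phi):=\frac{1}{(l!)^2}\sum_{\pi,\tau\in S_l}\sgn(\pi)\sgn(\tau)\,\tilde{g}(\pi(X^l_\Xi),\tau(Y^l_\Phi)),
\]
for which \eqref{eq_anti_symmetricity} is immediate from composing permutations.

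To verify \eqref{eq_anti_symmetric_modification}, I would use that the creation (resp.\ annihilation) operators pairwise anticommute, so under a simultaneous relabeling $(X^l_\Xi,Y^l_\Phi)\mapsto(\pi^{-1}(X^l_\Xi),\tau^{-1}(Y^l_\Phi))$ the product $\psi^*_{\bx_1\xi_1}\cdots\psi_{\by_l\phi_l}$ picks up $\sgn(\pi)\sgn(\tau)$. Substituting the definition of $f$ and applying this change of summation variables makes the two signs cancel, collapsing the average over $(\pi,\tau)$ and recovering $\sum_{X^l_\Xi,Y^l_\Phi}\tilde{g}(X^l_\Xi,Y^l_\Phi)\psi^*_{X^l_\Xi}\psi_{Y^l_\Phi}$. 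For uniqueness, if an antisymmetric $h=f_1-f_2$ gave the zero operator, grouping the sum by ordered tuples $(\bx_1,\xi_1)<\cdots<(\bx_l,\xi_l)$ and $(\by_1,\phi_1)<\cdots<(\by_l,\phi_l)$ (with respect to any fixed total ordering on $\G\times\spin$) gives $(l!)^2\sum_{\text{ordered}}h(X^l_\Xi,Y^l_\Phi)\psi^*_{X^l_\Xi}\psi_{Y^l_\Phi}=0$. Since the assumption $l\le 2L^d=\sharp(\G\times\spin)$ guarantees nontrivial ordered tuples exist, and the normal-ordered monomials $\psi^*_{X^l_\Xi}\psi_{Y^l_\Phi}$ indexed by such ordered pairs are linearly independent operators on Fock space, $h$ vanishes on ordered tuples, and then everywhere by antisymmetry.

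For the norm bound \eqref{eq_anti_symmetricity_inequality}, I start from the triangle inequality
\[
|f(X^l_\Xi,Y^l_\Phi)|\le\frac{1}{(l!)^2}\sum_{\pi,\tau\in S_l}|\tilde{g}(\pi(X^l_\Xi),\tau(Y^l_\Phi))|.
\]
For the first quantity on the LHS of \eqref{eq_anti_symmetricity_inequality}, I fix $\pi,\tau$ and perform the bijective change of variables $X^l_\Xi\mapsto\pi^{-1}(X^l_\Xi)$, $Y^l_\Phi\mapsto\tau^{-1}(Y^l_\Phi)$. Writing $j:=\pi^{-1}(1)$, the max over $(\bx_1,\xi_1)$ becomes a max over $(\bx_j,\xi_j)$ while the remaining sums are unchanged; the $\delta_{\bx_k,\by_k}$-factors in $\tilde{g}$ then collapse the sum over $Y^l_\Phi$ into one over $\Phi^l$, so each $(\pi,\tau)$-term is bounded by $\max_j\max_{(\bx_j,\xi_j)}\sum_{(\bx_k,\xi_k),k\neq j}\sum_{\Phi^l}|g(X^l,\Xi^l,\Phi^l)|$, which is the first term on the RHS. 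The identical argument using $\tau$ and absorbing the deltas into the $Y$-sum bounds the second LHS quantity by the second RHS quantity, giving the claimed max-by-max estimate.

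The only genuinely delicate point is the uniqueness argument, which relies on the linear independence of normal-ordered Fock monomials and therefore on the hypothesis $l\le 2L^d$; every other step is a direct manipulation of permutation sums and of the delta factors encoding $\bx_j=\by_j$.
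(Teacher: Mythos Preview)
Your proof is correct and follows essentially the same route as the paper: your $\tilde g$ is the paper's auxiliary function $u$, your antisymmetrized $f$ coincides with the paper's definition, and both the uniqueness argument via linear independence of ordered Fock monomials and the norm bound via the triangle inequality with the change of variables $j=\pi^{-1}(1)$ match the paper's proof line by line.
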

\begin{proof}
By setting $u(X^l_{\Xi}, Y^l_{\Phi}):=g(X^l,\Xi^l,\Phi^l)\delta_{X^l,Y^l}(-1)^{l(l-1)/2}$, we see that
\begin{equation*}
\begin{split}
&\sum_{X^l\in\G^l}\sum_{\Xi^l,\Phi^l\in\spin^l}g(X^l,\Xi^l,\Phi^l)\psi^*_{\bx_1\xi_1}\cdots\psi^*_{\bx_l\xi_l}\psi_{\bx_l\phi_l}\cdots\psi_{\bx_1\phi_1}\\
&\quad= \sum_{X_{\Xi}^l,Y_{\Phi}^l\in(\G\times\spin)^l}u(X_{\Xi}^l,Y_{\Phi}^l)\psi^*_{\bx_1\xi_1}\cdots\psi^*_{\bx_l\xi_l}\psi_{\by_1\phi_1}\cdots\psi_{\by_l\phi_l}\\
&\quad= \sum_{X_{\Xi}^l,Y_{\Phi}^l\in(\G\times\spin)^l}\sum_{\pi,\tau\in S_l}\frac{\sgn(\pi)\sgn(\tau)}{(l!)^2}u(\pi(X_{\Xi}^l),\tau(Y_{\Phi}^l))\\
&\quad\quad\cdot\psi^*_{\bx_1\xi_1}\cdots\psi^*_{\bx_l\xi_l}\psi_{\by_1\phi_1}\cdots\psi_{\by_l\phi_l}.
\end{split}
\end{equation*}
If we define $f:(\G\times\spin)^{2l}\to \C$ by 
$$f(X_{\Xi}^l,Y_{\Phi}^l):=\sum_{\pi,\tau\in S_l}\frac{\sgn(\pi)\sgn(\tau)}{(l!)^2}u(\pi(X_{\Xi}^l),\tau(Y_{\Phi}^l)),$$
$f$ satisfies \eqref{eq_anti_symmetricity}-\eqref{eq_anti_symmetric_modification} and 
\begin{equation}\label{eq_one_inequality}
\begin{split}
&\max_{(\bx_1,\xi_1)\in\G\times\spin}\sum_{(\bx_k,\xi_k)\in\G\times\spin\atop \forall k\in\{2,\cdots,l\}}\sum_{Y_{\Phi}^l\in(\G\times\spin)^l}|f(X^l_{\Xi}, Y^l_{\Phi})|\\
&\quad \le \max_{j\in\{1,\cdots,l\}}\max_{(\bx_j,\xi_j)\in\G\times\spin}\sum_{(\bx_k,\xi_k)\in\G\times\spin\atop \forall k\in\{1,\cdots,l\}\backslash\{j\}}\sum_{Y_{\Phi}^l\in(\G\times\spin)^l}|u(X^l_{\Xi}, Y^l_{\Phi})|\\
&\quad= \max_{j\in\{1,\cdots,l\}}\max_{(\bx_j,\xi_j)\in\G\times\spin}\sum_{(\bx_k,\xi_k)\in\G\times\spin\atop \forall k\in\{1,\cdots,l\}\backslash\{j\}}\sum_{\Phi^l\in\spin^l}|g(X^l,\Xi^l, \Phi^l)|.
\end{split}
\end{equation}
The inequality
\begin{equation}\label{eq_the_other_inequality}
\begin{split}
&\max_{(\by_1,\phi_1)\in\G\times\spin}\sum_{X_{\Xi}^l\in(\G\times\spin)^l}\sum_{(\by_k,\phi_k)\in\G\times\spin\atop \forall k\in\{2,\cdots,l\}}|f(X^l_{\Xi}, Y^l_{\Phi})|\\
&\quad \le \max_{j\in\{1,\cdots,l\}}\max_{(\bx_j,\phi_j)\in\G\times\spin}\sum_{(\bx_k,\phi_k)\in\G\times\spin\atop \forall k\in\{1,\cdots,l\}\backslash\{j\}}\sum_{\Xi^l\in\spin^l}|g(X^l,\Xi^l, \Phi^l)|
\end{split}
\end{equation}
can be confirmed similarly. The inequalities \eqref{eq_one_inequality}-\eqref{eq_the_other_inequality} yield \eqref{eq_anti_symmetricity_inequality}.

To prove the uniqueness of such $f$, we number each element of $\G\times\spin$ so that $\G\times\spin=\{(\bx\xi)_j\ |\ j=1,\cdots,2L^d\}$. Suppose that
$$\sum_{X_{\Xi}^l,Y_{\Phi}^l\in(\G\times\spin)^l}f(X_{\Xi}^l,Y_{\Phi}^l)\psi^*_{\bx_1\xi_1}\cdots\psi^*_{\bx_l\xi_l}\psi_{\by_1\phi_1}\cdots\psi_{\by_l\phi_l}=0.$$
The anti-symmetricity \eqref{eq_anti_symmetricity} implies that
\begin{equation}\label{eq_ordered_vanish}
\begin{split}
\sum_{1\le j_1<\cdots<j_l\le 2L^d\atop 1\le k_1<\cdots<k_l\le 2L^d}&f((\bx\xi)_{j_1},\cdots,(\bx\xi)_{j_l},(\bx\xi)_{k_1},\cdots,(\bx\xi)_{k_l}) \\
&\cdot\psi^*_{(\bx\xi)_{j_1}}\cdots\psi^*_{(\bx\xi)_{j_l}}\psi_{(\bx\xi)_{k_1}}\cdots\psi_{(\bx\xi)_{k_l}}=0.
\end{split}
\end{equation}
Since 
$$\{\psi_{(\bx\xi)_{j_1}}^*\cdots\psi_{(\bx\xi)_{j_l}}^*\psi_{(\bx\xi)_{k_1}}\cdots\psi_{(\bx\xi)_{k_l}}\}_{1\le j_1<\cdots<j_l\le 2L^d,\ 1\le k_1<\cdots<k_l\le 2L^d}$$
are linearly independent in the complex vector space of linear operators on the Fermionic Fock space on $\G\times\spin$, the equality \eqref{eq_ordered_vanish} deduces that 
\begin{equation}\label{eq_coefficients_vanish}
f((\bx\xi)_{j_1},\cdots,(\bx\xi)_{j_l},(\bx\xi)_{k_1},\cdots,(\bx\xi)_{k_l})=0\end{equation}
for all $j_m,k_m\in\{1,\cdots,2L^d\}$ $(m\in\{1,\cdots,l\})$ with $j_1<\cdots<j_l$, $k_1<\cdots<k_l$.  Again by \eqref{eq_anti_symmetricity} the equality \eqref{eq_coefficients_vanish} ensures that $f(X_{\Xi}^l,Y_{\Phi}^l)=0$ for all $X_{\Xi}^l,Y_{\Phi}^l\in(\G\times\spin)^l$, which concludes the uniqueness of $f$ satisfying \eqref{eq_anti_symmetricity} and \eqref{eq_anti_symmetric_modification}.
\end{proof}

\section{Proof of Lemma \ref{lem_thermodynamic_limit}}\label{app_thermodynamic_limit}
By Corollary \ref{cor_analytic_continuation} with the same notation there exists $N_0\in\N$ such that
\begin{equation}\label{eq_perturbation_expansion_max}
S_{\hX^{\hm},\hY^{\hm},\hXi^{\hm},\hPhi^{\hm}}(C_h,1)=\sum_{m=0}^{\infty}{b}_{0,m}
\end{equation}
for all $h\in 2\N/\beta$ with $h\ge 2N_0/\beta$. By Lemma \ref{lem_schwinger_functional_integral_bound} $|b_{0,0}|\le 4^{\hm}$ and
\begin{equation}\label{eq_estimate_every_term}
|b_{0,m}|\le \frac{\hm 16^{\hm}}{m}R^m\ (\forall m\in\N).
\end{equation}

Assume that the limits
\begin{equation}\label{eq_each_term_limits}
\lim_{h\to +\infty\atop h\in 2\N/\beta}b_{0,m},\ \lim_{L\to +\infty\atop L\in\N}\lim_{h\to +\infty\atop h\in 2\N/\beta}b_{0,m}
\end{equation}
exist for any $m\in\N\cup\{0\}$. Since the right hand side of the inequality \eqref{eq_estimate_every_term} is summable over $m\in \N$, we can apply the dominated convergence theorem for $l^1(\N\cup\{0\})$ to verify that the equalities
\begin{equation}\label{eq_limit_exchange}
\begin{split}
&\lim_{h\to +\infty\atop h\in 2\N/\beta}\sum_{m=0}^{\infty}b_{0,m}=\sum_{m=0}^{\infty}\lim_{h\to +\infty\atop h\in 2\N/\beta}b_{0,m},\\
&\lim_{L\to +\infty\atop L\in\N}\lim_{h\to +\infty\atop h\in 2\N/\beta}\sum_{m=0}^{\infty}b_{0,m}=\sum_{m=0}^{\infty}\lim_{L\to +\infty\atop L\in\N}\lim_{h\to +\infty\atop h\in 2\N/\beta}b_{0,m}
\end{split}
\end{equation}
hold in $\C$.
Combining \eqref{eq_limit_exchange} with \eqref{eq_perturbation_expansion_max} proves the existence of 
$$\lim_{L\to +\infty\atop L\in\N}\lim_{h\to +\infty\atop h\in 2\N/\beta}S_{\hX^{\hm},\hY^{\hm},\hXi^{\hm},\hPhi^{\hm}}(C_h,1)$$
for all $(\hX^{\hm},\hY^{\hm},\hXi^{\hm},\hPhi^{\hm})\in(\Z^d)^{2\hm}\times \spin^{2\hm}$. Proposition \ref{prop_correlation_grassmann_integral} then concludes that the correlation function converges to a finite value as $L\to +\infty$.
Hence it suffices to prove the existence of the limits \eqref{eq_each_term_limits}.

By \eqref{eq_bound_for_m_1} for $G_h=C_h$,  $b_{0,0}=\det(C(\hbx_j\hxi_j0,\hby_k\hphi_k 0))_{1\le j,k\le \hm}$. From \eqref{eq_covariance} it is clear that $\lim_{L\to +\infty,L\in\N}b_{0,0}$ exists. 

Let us derive an expression of $b_{0,m}$ $(m\ge 1)$ from the tree formula \eqref{eq_tree_expansion}. We consider that the root of any tree $T\in \T(\{0,1,\cdots,m\})$ is the vertex $0$. For any $j\in \{0,\cdots,m\}$ let $\dis_T(j)$ $(\in\{0,\cdots,m\})$ denote the distance between the root $0$ and the vertex $j$ along the unique path of $T$ connecting $0$ with $j$. We write any line of $T$ as $(p,q)$ $(p,q\in\{0,\cdots,m\})$ with $\dis_T(q)=\dis_T(p)+1$. Set $\max(T):=\max_{j\in\{1,\cdots,m\}}\dis_T(j)$ and $D_j(T):=\{(p,q)\in T\ |\ \dis_T(p)+1=\dis_T(q)=j\}$ for $j\in\{1,\cdots,\max(T)\}$.

By letting $\prod_{(p,q)\in T}(\Delta_{p,q}+\Delta_{q,p})$ act on the monomial \eqref{eq_grassmann_monomial} first and then applying $e^{\Delta(M(T,\pi,\bs))}$ to the rest of the monomials in \eqref{eq_tree_expansion}, we have 
\begin{equation}\label{eq_another_representation}
\begin{split}
b_{0,m}&=\\
&\frac{1}{\beta h}\sum_{s_0\in [0,\beta)_h}\prod_{k=1}^{m}\left(\frac{1}{h}\sum_{s_k\in[0,\beta)_h}\sum_{l_k=1}^{\tn}\sum_{(X^{l_k}_k,\Xi^{l_k}_k,\Phi^{l_k}_k)\in\G^{l_k}\times\spin^{2l_k}}U_{L,l_k}(X_k^{l_k},\Xi_k^{l_k},\Phi_k^{l_k})\right)\\
&\qquad\cdot\sum_{T\in\T(\{0,1,\cdots,m\})}\prod_{(0,q)\in D_1(T)}\left(\sum_{j_{0,q}=1}^{\hm}\sum_{k_{0,q}=1}^{l_q}\sum_{a_{0,q}\in\{-1,1\}}C_h^{j_{0,q},k_{0,q},a_{0,q}}(\bx_{q,k_{0,q}})\right)\\
&\qquad\cdot\prod_{r=2}^{\max(T)}\prod_{(p,q)\in D_r(T)}\left(\sum_{j_{p,q}=1}^{l_p}\sum_{k_{p,q}=1}^{l_q}\sum_{a_{p,q}\in\{-1,1\}}C_h^{j_{p,q},k_{p,q},a_{p,q}}(\bx_{p,j_{p,q}},\bx_{q,k_{p,q}})\right)\\
&\qquad\cdot f(T,\{j_{p,q},k_{p,q},a_{p,q}\}_{(p,q)\in T},C_h),
\end{split}
\end{equation}
where for $(0,q)\in D_1(T)$
\begin{equation*}
\begin{split}
&C_h^{j_{0,q},k_{0,q},-1}(\bx):=C_h(\hbx_{j_{0,q}}\hxi_{j_{0,q}}s_0, \bx\phi_{q,k_{0,q}}s_q),\\
&C_h^{j_{0,q},k_{0,q},1}(\bx):=C_h(\bx\xi_{q,k_{0,q}}s_q, \hby_{j_{0,q}}\hphi_{j_{0,q}}s_0),
\end{split}
\end{equation*}
for $(p,q)\in D_r(T)$ $(r\ge 2)$
\begin{equation*}
\begin{split}
&C_h^{j_{p,q},k_{p,q},-1}(\bx,\by):=C_h(\bx \xi_{p,j_{p,q}}s_p,\by \phi_{q,k_{p,q}}s_q),\\
&C_h^{j_{p,q},k_{p,q},1}(\bx,\by):=C_h(\by \xi_{q,k_{p,q}}s_q,\bx \phi_{p,j_{p,q}}s_p),
\end{split}
\end{equation*}
and the term $f(T,\{j_{p,q},k_{p,q},a_{p,q}\}_{(p,q)\in T},C_h)$ satisfies that if there exist $(v,q),(v,q')$ $\in T$ such that $q\neq q'$ and $(j_{v,q}, a_{v,q})=(j_{v,q'}, a_{v,q'})$ or if there exist $(p,v),(v,q)\in T$ such that $(k_{p,v},a_{p,v})=(j_{v,q},-a_{v,q})$, then $f(T,\{j_{p,q},k_{p,q},a_{p,q}\}_{(p,q)\in T},C_h)=0$. All the $C_h$-dependent terms contained in $f(T,\{j_{p,q},k_{p,q},a_{p,q}\}_{(p,q)\in T},C_h)$ are the remains left after $e^{\Delta(M(T,\pi,\bs))}$ acting on Grassmann monomials.

Moreover, by setting 
\begin{equation*}
\begin{split}
&f'(T,\{j_{p,q},k_{p,q},a_{p,q}\}_{(p,q)\in T},C_h,\{U_{L,l_k}\}_{k=1}^m)\\
&\qquad:=f(T,\{j_{p,q},k_{p,q},a_{p,q}\}_{(p,q)\in T},C_h)\prod_{k=1}^{m}U_{L,l_k}(X_k^{l_k},\Xi_k^{l_k},\Phi_k^{l_k})^{1_{l_k= 1}},
\end{split}
\end{equation*}
the expression \eqref{eq_another_representation} is rewritten as follows.
\begin{equation}\label{eq_further_transformation}
\begin{split}
&b_{0,m}=\\
&\frac{1}{\beta h}\sum_{s_0\in [0,\beta)_h}\prod_{k=1}^{m}\left(\frac{1}{h}\sum_{s_k\in[0,\beta)_h}\sum_{l_k=1}^{\tn}\sum_{(X^{l_k}_k,\Xi^{l_k}_k,\Phi^{l_k}_k)\in\G^{l_k}\times\spin^{2l_k}}\right)\sum_{T\in\T(\{0,1,\cdots,m\})}\\
&\cdot\prod_{(0,q)\in D_1(T)}\left(U_{L,l_q}(X_q^{l_q},\Xi_q^{l_q},\Phi_q^{l_q})^{1_{l_q\ge 2}}\sum_{j_{0,q}=1}^{\hm}\sum_{k_{0,q}=1}^{l_q}\sum_{a_{0,q}\in\{-1,1\}}C_h^{j_{0,q},k_{0,q},a_{0,q}}(\bx_{q,k_{0,q}})\right)\\
&\cdot\prod_{r=2}^{\max(T)}\prod_{(p,q)\in D_r(T)}\Bigg(U_{L,l_q}(X_q^{l_q},\Xi_q^{l_q},\Phi_q^{l_q})^{1_{l_q\ge 2}}\sum_{j_{p,q}=1}^{l_p}\sum_{k_{p,q}=1}^{l_q}\sum_{a_{p,q}\in\{-1,1\}}\\
&\cdot C_h^{j_{p,q},k_{p,q},a_{p,q}}(\bx_{p,j_{p,q}},\bx_{q,k_{p,q}})\Bigg) f'(T,\{j_{p,q},k_{p,q},a_{p,q}\}_{(p,q)\in T},C_h,\{U_{L,l_k}\}_{k=1}^m).
\end{split}
\end{equation}

 Note that by Proposition \ref{prop_extended_determinant_bound} the bound of the form \eqref{eq_determinant_bound_application} for $\cB=4$ is valid. By using this bound, the equality \eqref{eq_phi_equality} and the inequality that
$$|U_{L,l_k}(X_k^{l_k},\Xi_k^{l_k},\Phi_k^{l_k})^{1_{l_k= 1}}|\le \|U_1\|_1^{1_{l_k= 1}},
$$
we can find a non-negative function $g(T,\{j_{p,q},k_{p,q},a_{p,q}\}_{(p,q)\in T}, \{l_k\}_{k=1}^m)$, which is independent of $L$, $h$ and the variables $\{(X^{l_k}_k,\Xi^{l_k}_k,\Phi^{l_k}_k)\}_{k=1}^m$, such that 
\begin{equation}\label{eq_dominant_function}
\begin{split}
&|f'(T,\{j_{p,q},k_{p,q},a_{p,q}\}_{(p,q)\in T},C_h,\{U_{L,l_k}\}_{k=1}^m)|\\
&\qquad \le g(T,\{j_{p,q},k_{p,q},a_{p,q}\}_{(p,q)\in T},\{l_k\}_{k=1}^m).
\end{split}
\end{equation}

By using the periodicity and the translation invariance of  $C_h$ and $U_{L,l}$ and by arguing recursively with respect to $r$ running from $\max(T)$ to $1$, we see that
\begin{equation}\label{eq_recurcive_transformation}
\begin{split}
&\prod_{k=1}^{m}\left(\sum_{X^{l_k}_k\in\G^{l_k}}\right)\\
&\cdot\prod_{(0,q)\in D_1(T)}\left(U_{L,l_q}(X_q^{l_q},\Xi_q^{l_q},\Phi_q^{l_q})^{1_{l_q\ge 2}}\sum_{j_{0,q}=1}^{\hm}\sum_{k_{0,q}=1}^{l_q}\sum_{a_{0,q}\in\{-1,1\}}C_h^{j_{0,q},k_{0,q},a_{0,q}}(\bx_{q,k_{0,q}})\right)\\
&\cdot\prod_{r=2}^{\max(T)}\prod_{(p,q)\in D_r(T)}\Bigg(U_{L,l_q}(X_q^{l_q},\Xi_q^{l_q},\Phi_q^{l_q})^{1_{l_q\ge 2}}\sum_{j_{p,q}=1}^{l_p}\sum_{k_{p,q}=1}^{l_q}\sum_{a_{p,q}\in\{-1,1\}}\\
&\qquad\cdot C_h^{j_{p,q},k_{p,q},a_{p,q}}(\bx_{p,j_{p,q}},\bx_{q,k_{p,q}})\Bigg)f'(T,\{j_{p,q},k_{p,q},a_{p,q}\}_{(p,q)\in T},C_h,\{U_{L,l_k}\}_{k=1}^m)\\
&=\prod_{k=1}^{m}\left(\sum_{X^{l_k}_k\in\G^{l_k}}\right)\prod_{(0,q)\in D_1(T)}\Bigg(\sum_{j_{0,q}=1}^{\hm}\sum_{k_{0,q}=1}^{l_q}\sum_{a_{0,q}\in\{-1,1\}} \tilde{C}_h^{j_{0,q},k_{0,q},a_{0,q}}(\bx_{q,k_{0,q}})\\
&\qquad\cdot U_{L,l_q}((\bx_{q,1},\cdots,\bx_{q,k_{0,q}-1},\bx_{q,k_{0,q}+1},\cdots,\bx_{q,l_q},\b0),\Xi_q^{l_q},\Phi_q^{l_q})^{1_{l_q\ge 2}}\Bigg)\\
&\quad\cdot\prod_{r=2}^{\max(T)}\prod_{(p,q)\in D_r(T)}\Bigg(\sum_{j_{p,q}=1}^{l_p}\sum_{k_{p,q}=1}^{l_q}\sum_{a_{p,q}\in\{-1,1\}}C_h^{j_{p,q},k_{p,q},a_{p,q}}(\b0,\bx_{q,k_{p,q}})\\
&\qquad\cdot U_{L,l_q}((\bx_{q,1},\cdots,\bx_{q,k_{p,q}-1},\bx_{q,k_{p,q}+1},\cdots,\bx_{q,l_q},\b0),\Xi_q^{l_q},\Phi_q^{l_q})^{1_{l_q\ge 2}}\Bigg)\\
&\quad\cdot f'(T,\{j_{p,q},k_{p,q},a_{p,q}\}_{(p,q)\in T},C_h,\{U_{L,l_k}\}_{k=1}^m),
\end{split}
\end{equation}
where
\begin{equation*}
\begin{split}
&\tilde{C}_h^{j_{0,q},k_{0,q},-1}(\bx):=C_h(\b0\hxi_{j_{0,q}}s_0, \bx\phi_{q,k_{0,q}}s_q),\\
&\tilde{C}_h^{j_{0,q},k_{0,q},1}(\bx):=C_h(\bx\xi_{q,k_{0,q}}s_q, \b0\hphi_{j_{0,q}}s_0).
\end{split}
\end{equation*}
Remark that the dependency of $f'(T,\{j_{p,q},k_{p,q},a_{p,q}\}_{(p,q)\in T},C_h,\{U_{L,l_k}\}_{k=1}^m)$ on the variables $(X_1^{l_1},\cdots,X_m^{l_m})$ is changed in every step of the derivation of \eqref{eq_recurcive_transformation}. However we used the same notation for simplicity. 

By substituting \eqref{eq_recurcive_transformation} into \eqref{eq_further_transformation} we obtain
\begin{equation}\label{eq_each_term_decomposition}
\begin{split}
b_{0,m}=&\prod_{k=1}^{m}\left(\sum_{l_k=1}^{\tn}\sum_{(X^{l_k}_k,\Xi^{l_k}_k,\Phi^{l_k}_k)\in(\Z^d)^{l_k}\times\spin^{2l_k}}\right)\\
&\cdot B_{m,\{l_k\}_{k=1}^m}^{L,h}((X_1^{l_1},\Xi_1^{l_1},\Phi_1^{l_1}),\cdots,(X_m^{l_m},\Xi_m^{l_m},\Phi_m^{l_m})),
\end{split}
\end{equation}
where
\begin{equation*}
\begin{split}
&B_{m,\{l_k\}_{k=1}^m}^{L,h}((X_1^{l_1},\Xi_1^{l_1},\Phi_1^{l_1}),\cdots,(X_m^{l_m},\Xi_m^{l_m},\Phi_m^{l_m}))\\
&\quad:=\frac{1}{\beta}\prod_{k=0}^m\left(\frac{1}{h}\sum_{s_k\in[0,\beta)_h}\right)\prod_{k=1}^m1_{X_k^{l_k}\in((\{-\lfloor L/2\rfloor,\cdots,-\lfloor L/2\rfloor+L-1\})^d)^{l_k}}\sum_{T\in\T(\{0,1,\cdots,m\})}\\
&\quad\qquad\cdot \prod_{(0,q)\in D_1(T)}\Bigg(\sum_{j_{0,q}=1}^{\hm}\sum_{k_{0,q}=1}^{l_q}\sum_{a_{0,q}\in\{-1,1\}}\tilde{C}_h^{j_{0,q},k_{0,q},a_{0,q}}(\bx_{q,k_{0,q}})\\
&\qquad\qquad\qquad\cdot U_{l_q}((\bx_{q,1},\cdots,\bx_{q,k_{0,q}-1},\bx_{q,k_{0,q}+1},\cdots,\bx_{q,l_q},\b0),\Xi_q^{l_q},\Phi_q^{l_q})^{1_{l_q}\ge 2}\Bigg)\\
&\qquad\quad\cdot\prod_{(p,q)\in T\backslash D_1(T)}\Bigg(\sum_{j_{p,q}=1}^{l_p}\sum_{k_{p,q}=1}^{l_q}\sum_{a_{p,q}\in\{-1,1\}} C_h^{j_{p,q},k_{p,q},a_{p,q}}(\b0,\bx_{q,k_{p,q}})\\
&\qquad\qquad\qquad\cdot U_{l_q}((\bx_{q,1},\cdots,\bx_{q,k_{p,q}-1},\bx_{q,k_{p,q}+1},\cdots,\bx_{q,l_q},\b0),\Xi_q^{l_q},\Phi_q^{l_q})^{1_{l_q}\ge 2}\Bigg)\\
&\qquad\quad\cdot f'(T,\{j_{p,q},k_{p,q},a_{p,q}\}_{(p,q)\in T},C_h,\{U_{L,l_k}\}_{k=1}^m).
\end{split}
\end{equation*} 
By the properties of $C_h$ and the equality \eqref{eq_limit_equivalence} for $l=1$, we observe that the limits
$$\lim_{h\to +\infty\atop h\in2\N/\beta}B_{m,\{l_k\}_{k=1}^m}^{L,h},\ \lim_{L\to +\infty\atop L\in\N}\lim_{h\to +\infty\atop h\in2\N/\beta}B_{m,\{l_k\}_{k=1}^m}^{L,h}$$
exist. 

By using the inequalities \eqref{eq_covariance_exponential_decay_limit} and \eqref{eq_dominant_function} we have 
\begin{equation}\label{eq_dominant_L1_function}
\begin{split}
&|B_{m,\{l_k\}_{k=1}^m}^{L,h}((X_1^{l_1},\Xi_1^{l_1},\Phi_1^{l_1}),\cdots,(X_m^{l_m},\Xi_m^{l_m},\Phi_m^{l_m}))|\\
&\quad\le B_{m,\{l_k\}_{k=1}^m}((X_1^{l_1},\Xi_1^{l_1},\Phi_1^{l_1}),\cdots,(X_m^{l_m},\Xi_m^{l_m},\Phi_m^{l_m}))
\end{split}
\end{equation}
with
\begin{equation*}
\begin{split}
&B_{m,\{l_k\}_{k=1}^m}((X_1^{l_1},\Xi_1^{l_1},\Phi_1^{l_1}),\cdots,(X_m^{l_m},\Xi_m^{l_m},\Phi_m^{l_m}))\\
&\quad:= \beta^m \sum_{T\in\T(\{0,1,\cdots,m\})}\prod_{(p,q)\in T}\Bigg(\sum_{j_{p,q}=1}^{l_p}\sum_{k_{p,q}=1}^{l_q}\sum_{a_{p,q}\in\{-1,1\}}\\
&\qquad\cdot |U_{l_q}((\bx_{q,1},\cdots,\bx_{q,k_{p,q}-1},\bx_{q,k_{p,q}+1},\cdots,\bx_{q,l_q},\b0),\Xi_q^{l_q},\Phi_q^{l_q})|^{1_{l_q}\ge 2}\\
&\qquad\cdot 2 F_{t,t',d}\left(\frac{\pi}{2\beta}\right)^{-\frac{1}{2e \pi d}\sum_{v=1}^d|\<\bx_{q,k_{p,q}},\be_v\>|}\Bigg)g(T,\{j_{p,q},k_{p,q},a_{p,q}\}_{(p,q)\in T},\{l_k\}_{k=1}^m),
\end{split}
\end{equation*}
where $l_0:=\hm$. 
Note that $B_{m,\{l_k\}_{k=1}^m}$ is independent of $L$ and $h$. Moreover, by calculating in the same way as in the proof of Corollary \ref{cor_covariance_L1_bound} we obtain the bound that for any $m\in\N$ 
\begin{equation}\label{eq_dominant_bound_m}
\begin{split}
&\prod_{k=1}^m\left(\sum_{l_k=1}^{\tn}\sum_{(X^{l_k}_k,\Xi^{l_k}_k,\Phi^{l_k}_k)\in(\Z^d)^{l_k}\times\spin^{2l_k}}\right)B_{m,\{l_k\}_{k=1}^m}\\
&\quad\le \prod_{k=1}^m\left(2\beta\left(\frac{F_{t,t',d}\left(\frac{\pi}{2\beta}\right)^{\frac{1}{2e \pi d}}+1}{F_{t,t',d}\left(\frac{\pi}{2\beta}\right)^{\frac{1}{2e \pi d}}-1}\right)^{d}\sum_{l_k=1}^{\tn}(2\|U_{l_k}\|_{l_k})^{1_{l_k}\ge 2}4^{1_{l_k}=1}\right)\sum_{T\in\T(\{0,1,\cdots,m\})}\\
&\qquad\cdot\prod_{(p,q)\in T}\Bigg(\sum_{j_{p,q}=1}^{l_p}\sum_{k_{p,q}=1}^{l_q}\sum_{a_{p,q}\in\{-1,1\}}\Bigg) g(T,\{j_{p,q},k_{p,q},a_{p,q}\}_{(p,q)\in T},\{l_k\}_{k=1}^m)<+\infty.
\end{split}
\end{equation}

By \eqref{eq_each_term_decomposition}-\eqref{eq_dominant_bound_m} we can apply the dominant convergence theorem to ensure that the limits \eqref{eq_each_term_limits} exist and satisfy
\begin{equation*}
\begin{split}
&\lim_{h\to+\infty\atop h\in2\N/\beta}b_{0,m}=\prod_{k=1}^m\left(\sum_{l_k=1}^{\tn}\sum_{(X^{l_k}_k,\Xi^{l_k}_k,\Phi^{l_k}_k)\in(\Z^d)^{l_k}\times\spin^{2l_k}}\right)\lim_{h\to+\infty\atop h\in2\N/\beta}B_{m,\{l_k\}_{k=1}^m}^{L,h},\\
&\lim_{L\to+\infty\atop L\in\N}\lim_{h\to+\infty\atop h\in2\N/\beta}b_{0,m}=\prod_{k=1}^m\left(\sum_{l_k=1}^{\tn}\sum_{(X^{l_k}_k,\Xi^{l_k}_k,\Phi^{l_k}_k)\in(\Z^d)^{l_k}\times\spin^{2l_k}}\right)\lim_{L\to+\infty\atop L\in\N}\lim_{h\to+\infty\atop h\in2\N/\beta}B_{m,\{l_k\}_{k=1}^m}^{L,h},
\end{split}
\end{equation*}
which completes the proof.

\section*{Acknowledgments}
The author wishes to thank W. Pedra and M. Salmhofer for valuable comments as well as V. Bach for organizing opportunities for communication.

\end{document}